\newcommand{\shortcite}[1]{\cite{#1}}
\newcommand{\noop}[1]{}
\newenvironment{NewContent}{\color{black}}{\ignorespacesafterend}
\newcommand \NewContentInline[1] {{\color{black} #1}}
\newcommand \NiceSubref[1] {\subref{#1}}
\newcommand \etal {et al.\ }
\newcommand \linesegment[2] {\ensuremath{\overline{#1 #2}}}
\newcommand \interior[1] {\textit{Int}(\ensuremath{#1})}
\newcommand \BigO[1] {\ensuremath{\mathcal{O}(#1)}}
\newcommand \BigTheta[1] {\ensuremath{\Theta(#1)}}
\newcommand \Env {\ensuremath{\mathcal{E}}}
\newcommand \Efree {\ensuremath{\Env_{\textit{free}}}}
\newcommand \Eobs {\ensuremath{\Env_{\textit{obs}}}}
\newcommand \EfreeI[1] {\ensuremath{\Env_{\textit{free},#1}}}
\newcommand \MedialAxis[1] {\ensuremath{\textit{MA}(#1)}}
\newcommand \ConnectionSetOld {\ensuremath{\mathcal{C}'}}
\newcommand \ConnectionSetNew {\ensuremath{\mathcal{C}''}}
\newcommand \MedialAxisOld {\MedialAxis{\Env, \ConnectionSetOld}}
\newcommand \MedialAxisNew {\MedialAxis{\Env, \ConnectionSetNew}}
\newcommand \SumOfIterations {\ensuremath{\sum_{i=0}^{k-1} m_i \log m_i}}
\newcommand \Retraction[1] {\ensuremath{\textit{Retr}(#1)}}
\newtheorem{theorem}{Theorem}[section]
\newtheorem{definition}{Definition}
\newtheorem{lemma}{Lemma}[section]
\newtheorem{corollary}{Corollary}
\newtheorem{property}{Property}[section]
\begin{document}

\title{\textbf{\NewContentInline{The Medial Axis of a Multi-Layered Environment and its Application as a Navigation Mesh}} 
\footnote{This research has been supported by the COMMIT project (\url{http://www.commit-nl.nl/}).
Authors' e-mail addresses: \texttt{W.G.vanToll@uu.nl}; \texttt{acook4@hawaii.edu}; \texttt{M.J.vanKreveld@uu.nl}; \texttt{R.J.Geraerts@uu.nl}.}} 

\author{Wouter van Toll \\ Utrecht University
\and Atlas F. Cook IV \\ University of Hawaii at Manoa
\and Marc J. van Kreveld \\ Utrecht University
\and Roland Geraerts \\ Utrecht University}

\maketitle

\begin{abstract} 
Path planning for walking characters in complicated virtual environments is a fundamental task in simulations and games. 
\begin{NewContent}
A \emph{navigation mesh} is a data structure that allows efficient path planning. 
The Explicit Corridor Map (ECM) is a navigation mesh based on the \emph{medial axis}. 
It enables path planning for disk-shaped characters of any radius.

In this paper, we formally extend the medial axis (and therefore the ECM) to 3D environments in which characters are constrained to walkable surfaces. 
Typical examples of such environments are multi-storey buildings, train stations, and sports stadiums. 
We give improved definitions of a \emph{walkable environment} (WE: a description of walkable surfaces in 3D) 
and a \emph{multi-layered environment} (MLE: a subdivision of a WE into connected layers). 
We define the medial axis of such environments based on projected distances on the ground plane. 
For an MLE with $n$ boundary vertices and $k$ connections, we show that the medial axis has size \BigO{n}, 
and we present an improved algorithm that constructs the medial axis in \BigO{n \log n \log k} time. 

The medial axis can be annotated with nearest-obstacle information to obtain the ECM navigation mesh. 
\end{NewContent}
Our implementations show that the ECM can be computed efficiently for large 2D and multi-layered environments, 
and that it can be used to compute paths within milliseconds.
This enables simulations of large virtual crowds of heterogeneous characters in real-time.
\end{abstract}

\begin{figure}
	\centering
	\includegraphics[width=0.7\textwidth]{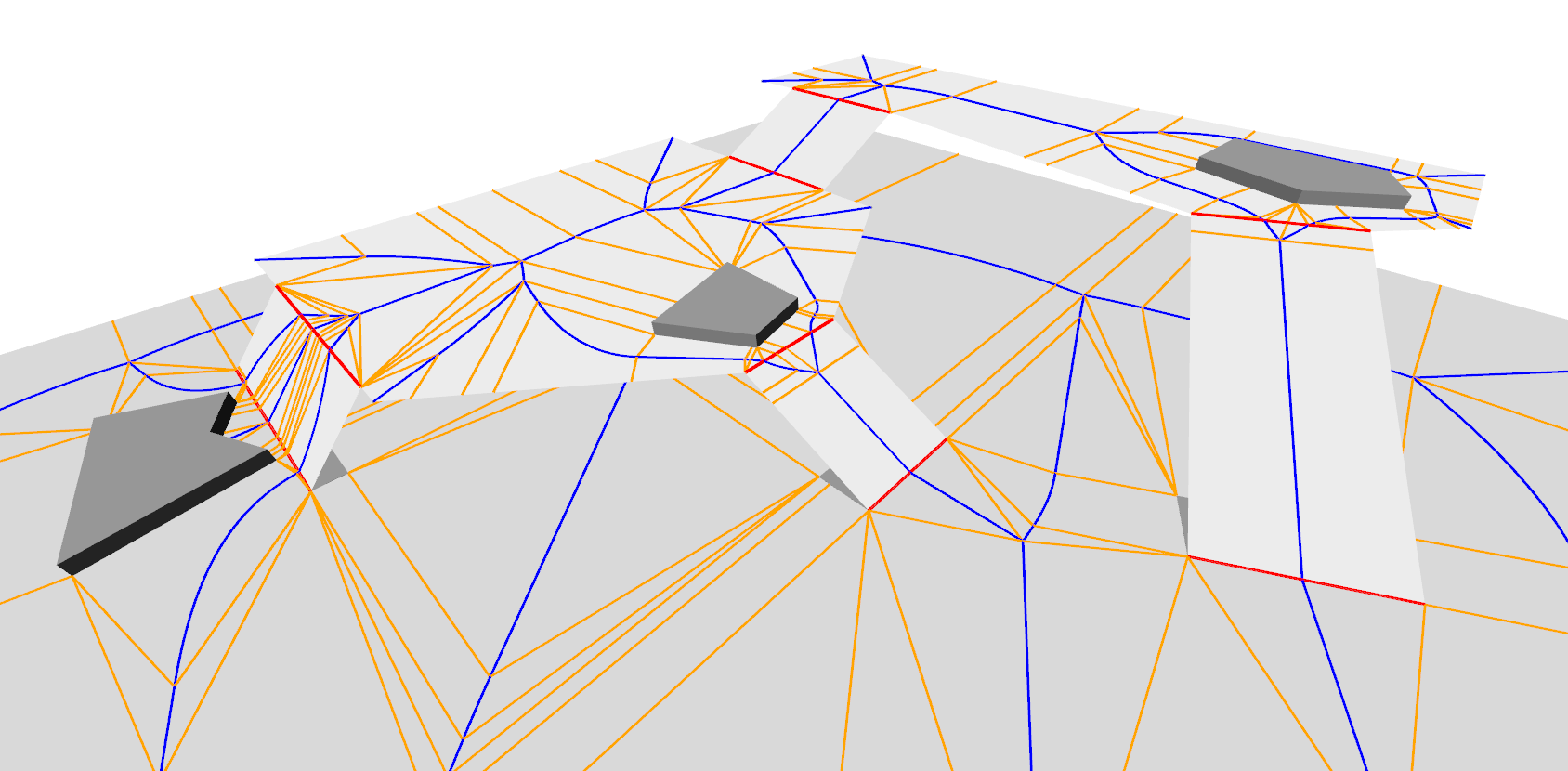}
	\caption{3D view of a multi-layered environment and its ECM navigation mesh. 
	The ECM is a medial axis (the dark graph) annotated with nearest-obstacle information (the light line segments).
	\label{fig:ramps-3d}}
\end{figure}

\section{Introduction}

In many simulations and gaming applications, virtual characters need to plan and traverse visually convincing paths through a complicated environment in real-time.
We focus on entities that move along walkable surfaces; we will refer to these entities as `characters'.
Characters should move smoothly and avoid collisions with obstacles and other characters.
The environment in which they move is typically three-dimensional, but characters are constrained to walkable surfaces. 
These surfaces form the \emph{walkable environment} (WE). 
It is often useful to subdivide the WE into planar layers.
We refer to such a subdivision as a \emph{multi-layered environment} (MLE).

A \emph{navigation mesh} is \NewContentInline{a subdivision of the WE into polygons for the purpose of path planning}. 
The \emph{dual graph} of this mesh has a vertex for each polygon in the mesh and an edge for each pair of adjacent polygons.
Characters can search in this graph to find \emph{global} routes, which they traverse while \emph{locally} avoiding other characters.

\begin{NewContent}
\bigskip

\noindent
In this paper, we work towards a refined definition and construction algorithm of the Explicit Corridor Map (ECM) \cite{Geraerts2010-ECM,vanToll2011-MultiLayered}. 
The ECM is a navigation mesh based on the medial axis. 
The medial axis of an environment is the set of all points in the environment with more than one closest obstacle; 
we will define this more precisely in \cref{s:2dma} (for 2D environments) and \cref{s:mlma:definition} (for walkable environments).

First, we revisit the medial axis in 2D. 
Next, we give improved definitions of walkable environments (WEs) and multi-layered environments (MLEs), 
and we define the medial axis of a WE and MLE based on projected distances onto the ground plane.
For an MLE with $n$ boundary vertices and $k$ connections between layers, 
where each connection is a line segment when projected onto the ground plane, 
we show how to compute the medial axis in \BigO{n \log n \log k} time. 
Our algorithm uses several non-trivial insights into the characteristics of a WE. 

The medial axis of an MLE can easily be annotated to obtain the ECM navigation mesh.
\end{NewContent}
\cref{fig:ramps-3d} shows the ECM for a simple multi-layered environment.
The ECM can be used to plan paths for disk-shaped characters of any radius, which is typically not possible when using an arbitrary subdivision into polygons.
Because the ECM is a sparse graph that uses only \BigO{n} storage, \NewContentInline{it is suitable for efficient path planning.} 
It also supports various operations that are important for crowd simulation, such as finding the nearest static obstacle to a query point.
Furthermore, it can be updated in real-time when obstacles appear or disappear.
Combined with algorithms for path following and collision avoidance, the ECM can be used to simulate large crowds of heterogeneous characters in real-time.

\subsection{Contributions}

This paper formalizes and extends previous conference publications \cite{Geraerts2010-ECM,vanToll2011-MultiLayered} 
by defining and proving the essential characteristics of the medial axis and ECM in multi-layered environments.
Compared to our previous work, the main \emph{contributions} of this paper are the following: 

\begin{NewContent}
	\begin{itemize}
	\item We give improved definitions of walkable environments (WEs), multi-layered environments (MLEs), 
	and the medial axis of a WE or MLE based on projected distances (\cref{s:mle,s:mlma:definition}). 
	\item We solve a problem in our previous construction algorithm for the medial axis of an MLE \cite{vanToll2011-MultiLayered}. 
	We present an improved algorithm, and we prove that this algorithm is correct and that it runs in \BigO{n \log n \log k} time (\cref{s:mlma}). 
	\item We give a refined definition of the ECM navigation mesh (\cref{s:ecm}).
	\item We show via experiments that our implementation can efficiently generate the ECM and compute paths in large multi-layered environments (\cref{s:experiments}).
\end{itemize}
\end{NewContent}

\section{Related Work on Path Planning and Navigation Meshes} \label{s:relatedwork}

This section summarizes related research on path planning, navigation meshes, and crowd simulation.
Several good overview books of this research area exist \cite{Kallmann2016-PathPlanningBook,Thalmann2013-CrowdSimulation}. 
We will focus on the topics that are particularly relevant to \NewContentInline{the ECM navigation mesh. 
Related work on its underlying structure, the medial axis, will be covered in \cref{s:2dma}.}


\subsection{Path Planning} \label{s:relatedwork:pathplanning}

In traditional motion planning, a robot needs to compute a collision-free trajectory from one configuration to another \cite{Latombe1991-MotionPlanning,LaValle2006-PlanningAlgorithms}. 
The number of degrees of freedom for the robot determines the dimensionality of the \emph{configuration space}.
For high-dimensional spaces, exact solutions are typically intractible and \emph{probabilistic} methods are often successful \cite{Kavraki1996-PRM,Kuffner2000-RRT}.
Such methods do not compute the configuration space explicitly: instead, they represent it using a graph of sampled collision-free configurations and motions. 
\NewContentInline{Many techniques for robot motion planning use a sampled version of the Voronoi diagram or medial axis 
\cite{Holleman2000-MedialAxisMP,Lien2003-MedialAxisMP,Garber2004-VoronoiDiagramMP}.}

In crowd simulation, characters are typically modelled as disks that move along walkable surfaces, 
which enables different types of algorithms than in traditional motion planning. 
Path planning for disks can be solved by inflating the obstacles in the environment by the character's radius (i.e.\ by computing Minkowski sums)
and then computing a path for a point character \cite{Latombe1991-MotionPlanning,LaValle2006-PlanningAlgorithms}.
This approach requires a separate inflation process for each distinct radius. 

To plan a shortest path for a point in a two-dimensional \NewContentInline{space}, one can use a shortest-path map \cite{Hershberger1999-ShortestPathMap} or a visibility graph, 
which has \BigO{n^2} edges for an environment with $n$ obstacle vertices \cite{Ghosh2007-Visibility}.
The Visibility-Voronoi Complex \cite{Wein2007-VisibilityVoronoi} is an extension that implicitly encodes the visibility graph for all disk sizes; 
it can generate the visibility graph for a particular radius on the fly.

Path planning for large crowds typically uses a less complex graph (or \emph{roadmap}) that does not always yield the shortest path. 
\NewContentInline{The medial axis is such a roadmap; its application to path planning for disk-shaped characters is referred to as the `retraction method' \cite{ODunlaing1985-Retraction}, 
and we use it as a basis for our ECM navigation mesh.
Roadmaps have been used frequently for crowd simulation in 2D and 3D environments \cite{Hoecker2010-FastestPath,Rodriguez2011-Roadmaps}.}
However, \NewContentInline{purely graph-based techniques} are not ideal for crowd simulation: 
characters would need to follow the edges of the graph exactly (which leads to unnatural motion and collisions between characters),
or they would have to perform expensive geometric tests to check how they can deviate from an edge.

Another popular approach to path planning is to use a \emph{grid} that subdivides the environment into regular cells. 
Grids are easy to implement and well-studied by the path planning community 
(see e.g.\ \cite{Garcia2014-GridPlanning,Koenig2002-DStarLite,Lee2013-GridPlanning,Sturtevant2012-Benchmarks}). 
However, grids have resolution problems: 
a coarse grid (with few cells) does not capture the environment's details, 
whereas a fine grid (with many cells) quickly becomes too costly to store and query.


\subsection{Navigation Meshes in 2D Environments} \label{s:relatedwork:navmeshes-2d}

A \emph{navigation mesh} subdivides the \NewContentInline{configuration space} into polygonal regions \cite{Snook2000-NavMeshes}.
A global path in a navigation mesh can be found by performing A* search \cite{Hart1968-AStar} on the dual graph of the mesh.
The result is a sequence of regions to move through, such that characters can use the available space to locally adjust their movements during the simulation. 
\NewContentInline{This makes navigation meshes more flexible for crowd simulation than standard graphs.
Navigation meshes are also more scalable to large environments than grids \cite{vanToll2016-ComparativeStudy}.}

Many navigation meshes exist for 2D environments; they are typically \emph{exact} subdivisions of the \NewContentInline{configuration space}. 
Examples are the Local Clearance Triangulation (LCT) \cite{Kallmann2014-LCT} and the Explicit Corridor Map (ECM) \cite{Geraerts2010-ECM}.
These navigation meshes have the advantage that they encode \emph{clearance} information. 
As such, they can be used to compute paths for disk-shaped characters with an arbitrary radius without explicitly inflating any obstacles.

\subsection{Navigation Meshes in 3D Environments} \label{s:relatedwork:navmeshes-3d}

Navigation meshes in 3D are usually designed for environments with a consistent direction of gravity.
\NewContentInline{We will use the term \emph{walkable environment} (WE) to denote the surfaces on which characters can walk.}
Many navigation mesh techniques \NewContentInline{automatically convert a 3D environment to a WE} by discretizing the environment into traversable and non-traversable cubes, or \emph{voxels}.
\NewContentInline{The pioneering work by Pettr\'e \etal \shortcite{Pettre2005-NavGraph} 
essentially computes an \emph{approximation} of the multi-layered medial axis that we define in this paper. 
Their navigation mesh} supports arbitrary character sizes but uses overlapping disks that do not completely cover the \NewContentInline{environment}.
The Recast Navigation toolkit \cite{Mononen-Recast} 
is very popular in the computer games industry, e.g.\ it is included in the Unity game engine \shortcite{Unity}. 
Oliva and Pelechano have presented a similar method called NEOGEN \shortcite{Oliva2013-Neogen}, 
and they have investigated path planning for disks in arbitrary navigation meshes \shortcite{Oliva2013-Clearance}.
A disadvantage of voxels is that they do not scale well to large environments because these environments require many voxels to obtain sufficient precision. 
Therefore, exact alternatives to 3D filtering are also being investigated \cite{Polak2016-Filtering}. 

For many applications, it is useful to subdivide the walkable environment into layers such that each layer can be treated as a planar problem space.
We will refer to such a subdivision as a \emph{multi-layered environment} (MLE). 
In this paper, we extend the \NewContentInline{medial axis and the} Explicit Corridor Map to MLEs.
There are several ways to obtain an MLE from a WE.
Deusdado \etal have used rendering techniques assuming certain properties such as axis-alignment \shortcite{Deusdado2008-MultiLayered}, 
and Whiting \etal have shown how to extract layers from a CAD drawing \shortcite{Whiting2007-MultiLayered}.
For an arbitrary WE represented by a triangle mesh, 
Hillebrand \shortcite{Hillebrand2016-MLE} has proven that obtaining an optimal MLE (with a minimum number of connections) is NP-hard in the number of triangles, 
but he has shown that good results can be obtained using heuristics \shortcite{Hillebrand2016-PEEL}.

\begin{NewContent}
\bigskip

\noindent
A consequence of splitting a walkable environment into layers and treating each layer as a 2D space is that \emph{height differences} along the surface are ignored. 
In all navigation meshes based on this principle (including ours), slopes are not considered to affect the length of a path, 
and path lengths are effectively projected onto the ground plane. 
This can be seen as a disadvantage. 

However, finding short paths on terrains with height differences is known to be substantially more difficult \cite{Kaul2013-ShortestPathsOnTerrains}. 
Recently, researchers have suggested to use the 3D environment itself as a navigation data structure \cite{Ricks2014-WholeSurface,Berseth2015-NavMesh3D}, 
which coincidentally also supports environments with arbitrary gravity directions. 
At the time of writing, these types of solutions are less mature; 
for instance, it is unclear how to compute paths with arbitrary clearance, and how to properly extend collision-avoidance algorithms to this domain. 

By contrast, \emph{projected distances} provide a simplification that allow solutions in 2D to be extended to 3D environments with a consistent gravity direction.
In this paper, we will provide this extension for the medial axis and the ECM. 
Extending these concepts further to also encode height differences is a topic for future work.

\subsection{Comparing Navigation Meshes} \label{s:relatedwork:comparativestudy}

We have recently conducted a comparative study of navigation meshes \cite{vanToll2016-ComparativeStudy}, 
using unified definitions, objective quality metrics, and a benchmark suite that runs on a single computer.
This comparison included the multi-layered ECM from this paper, 
several other state-of-the-art navigation meshes \cite{Kallmann2014-LCT,Pettre2005-NavGraph,Oliva2013-Neogen,Mononen-Recast}, 
and a simple grid-based method. 
Our study confirmed that voxel-based extraction of a WE is not very scalable to large environments, which justifies a search for alternatives.

For an analysis of the theoretical and practical (dis)advantages of the ECM compared to other navigation meshes, 
we refer the reader to this comparative study \cite{vanToll2016-ComparativeStudy}. 
We will not repeat the comparison here; instead, we will focus on definitions, algorithms, proofs, and experiments specifically relevant to the medial axis and ECM. 
\end{NewContent}


\subsection{Crowd Simulation} \label{s:relatedwork:crowdsimulation}

Navigation meshes are useful for simulating crowds of virtual characters with individual properties and goals.
Path planning on a navigation mesh leads to an \emph{indicative route} that can be traversed smoothly in real-time \cite{Jaklin2013-MIRAN}.
Each character can locally avoid collisions with other characters 
using forces \cite{Helbing1995-SocialForces,Reynolds1999-Steering} or velocity selection \cite{Karamouzas2010-CollisionAvoidance,vandenBerg2011-ORCA,Moussaid2011-CollisionAvoidance}.
\NewContentInline{Many other algorithms exist that can improve global coordination and local behaviour.}
These components can be mixed arbitrarily in a multi-level crowd simulation framework \cite{vanToll2015-Framework}.

Other crowd simulation algorithms aim to unify the global and local planning levels by defining a \emph{potential field}: 
a grid representation of the environment that stores the optimal walking direction \NewContentInline{(towards a particular goal region)} in each cell. 
These directions are updated in real-time in response to the crowd's movement.
\NewContentInline{While potential fields in general can contain local minima in which characters would get stuck, 
various solutions specific to crowd simulation have alleviated this problem via global optimization \cite{Narain2009-DenseCrowds,Patil2010-NavigationFields,Treuille2006-ContinuumCrowds}.} 
Potential fields can efficiently model dense crowds with many characters sharing the same goals and properties.
However, because each goal region and behavior type requires its own potential field, 
these methods do not allow for large \emph{heterogeneous} crowds in which each character has different properties.
If individuality is required, navigation meshes with local collision avoidance are preferred.

\section{Preliminaries: The Medial Axis in 2D Environments} \label{s:2dma}

In this section, we give a formal definition of a two-dimensional virtual environment and its medial axis. 
\NewContentInline{These concepts are not novel, but they are required for understanding the extension to WEs and MLEs that we will present next.}

\subsection{2D Environment} \label{s:2dma:definition-env}

Let \Env\ be a bounded two-dimensional planar environment with polygonal obstacles.
We define the \emph{obstacle space} \Eobs\ as the union of all obstacles, including the boundary of the environment.
The complement of \Eobs\ is the \emph{free space} \Efree. 
An example of such an environment is shown in \cref{fig:u-obstacles}.
Let $n$ be the number of vertices \NewContentInline{that define the boundary of \Efree}\ using interior-disjoint simple polygons, line segments, and points.
We call $n$ the \emph{complexity} of \Env.

\subsection{Medial Axis} \label{s:2dma:definition-ma}

The medial axis is a variant of the \emph{Voronoi diagram} (VD) \cite{Okabe2000-Voronoi,Aurenhammer2013-Voronoi}.
For a planar set of point sites, the VD is a subdivision of the plane into cells such that all points in a cell have the same nearest site.
The \emph{edges} of the VD are parts of bisectors: line segments or half-lines on which every point is equidistant to two sites. 
These bisectors meet at \emph{vertices} that are equidistant to at least three sites.

The VD can be extended to handle line segments and polygons as sites. 
This version is sometimes called the \emph{generalized Voronoi diagram} or GVD \cite{Lee1981-GVD,Geraerts2010-ECM}.
The edges of a GVD consist of line segments and parabolic arcs, and degree-2 vertices occur \NewContentInline{at the positions where} a bisector changes its shape. 
Several robust implementations of the GVD exist \cite{Hoff1999-GVD,Karavelas2004-GVD,Held2011-Vroni,Boost}.
There are multiple definitions of the GVD, differing mostly in how they handle site vertices shared by multiple sites.
The term `generalized Voronoi diagram' is also used for other generalizations of the VD \cite{Okabe2000-Voronoi}.
\bigskip

\noindent
\NewContentInline{The \emph{medial axis} has been extensively studied in the field of computational geometry, usually for 2D polygons with or without holes 
\cite{Blum1967-MedialAxis,Preparata1977-MedialAxis,Lee1982-MedialAxis,Chin1999-MedialAxis,Wilmarth1999-Retraction}.
As mentioned in \cref{s:relatedwork:pathplanning}, the medial axis has also been used frequently in the context of motion planning.}

Informally, the medial axis of a polygon $P$ can be seen as the GVD of $P$'s boundary segments, restricted to (i.e.\ intersected with) the interior of $P$.
Several definitions of the medial axis exist; they mainly differ in their choice of which edges to prune from the GVD.
We therefore give our own definition, which is comparable to those by Preparata and Lee \cite{Preparata1977-MedialAxis,Lee1982-MedialAxis}, 
\NewContentInline{but applied specifically to a 2D environment and its free space}.

\begin{definition}[Medial axis, 2D] \label{def:2dma}
	For a bounded 2D environment \Env\ with closed polygonal obstacles, let $\textit{ma}(\Env)$ be the set of all points in \Efree\ 
	that have at least two distinct equidistant nearest points \NewContentInline{on the boundary of \Efree}, in terms of 2D Euclidean distance. 
	The medial axis \MedialAxis{\Env}\ is the topological closure of $\textit{ma}(\Env)$.
\end{definition}

\begin{figure*}[t]
	\centering
	\subfigure[2D environment \label{fig:u-obstacles}]{
		\centering
		\includegraphics[height=0.3\textwidth]{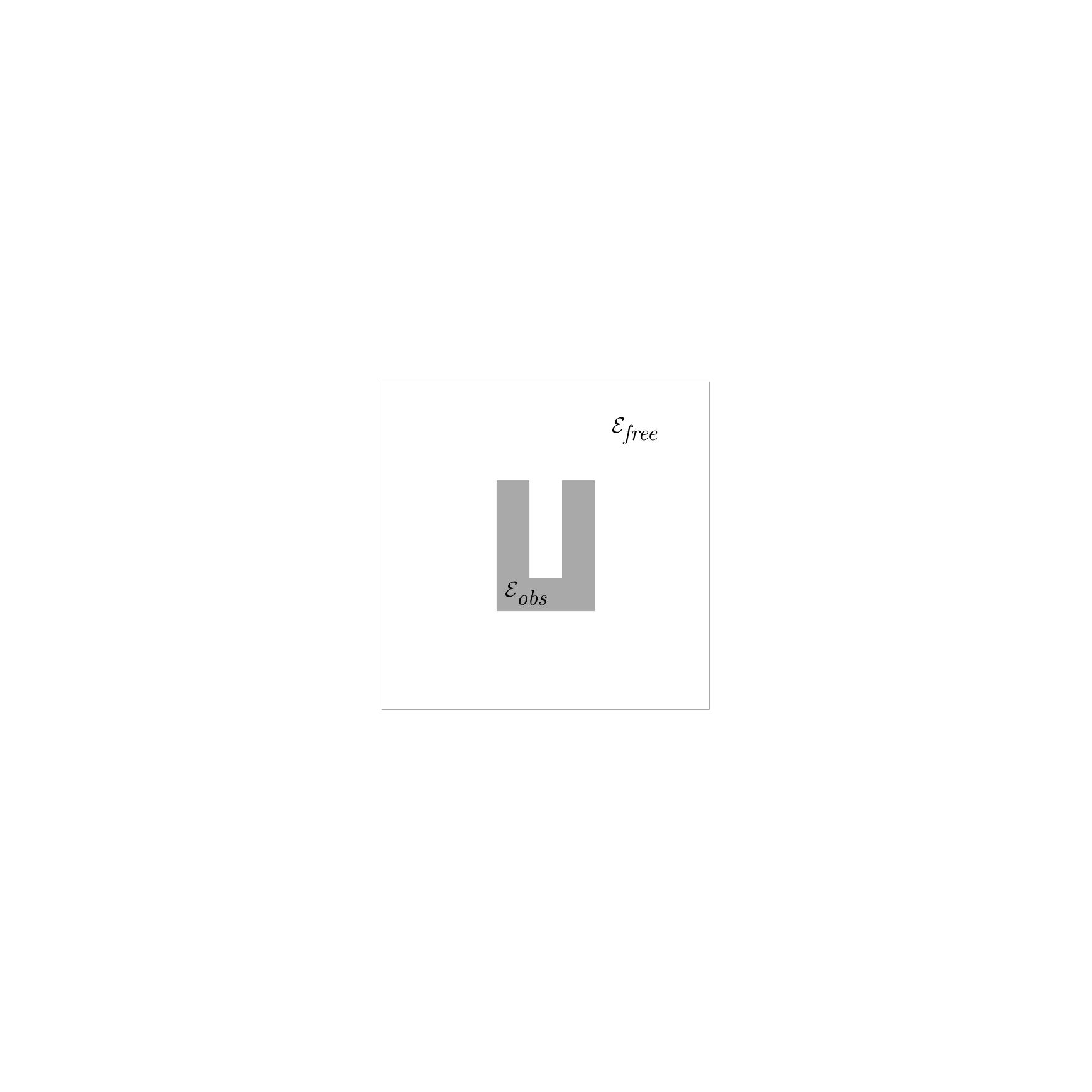}
	}
	\subfigure[Medial axis \label{fig:u-ma}]{
		\includegraphics[height=0.3\textwidth]{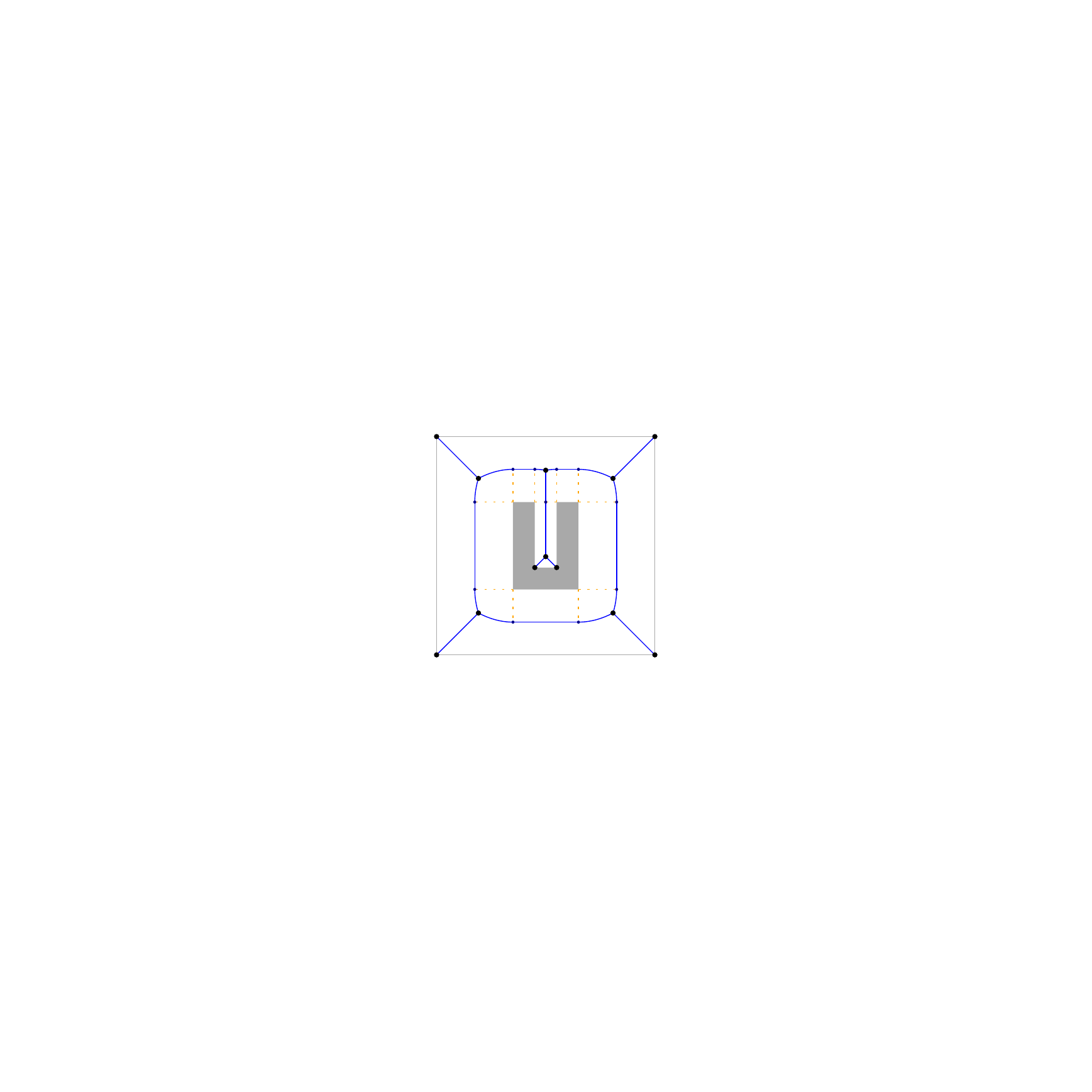}
	}
	\caption{A simple 2D environment and its medial axis. 
	\NiceSubref{fig:u-obstacles} The obstacle space \Eobs\ (shown in gray) consists of line segments and polygons. Its complement is the free space \Efree.
	\NiceSubref{fig:u-ma} The medial axis is a graph through \Efree. 
	True vertices are shown as large dots.
	Semi-vertices (small dots) occur when a bisector's generator changes.
	This is indicated by dashed segments, which are not part of the graph.
	\label{fig:u-obstacles-ma}}
\end{figure*}

\cref{fig:u-ma} shows the medial axis of an example environment. 
Since the medial axis is a pruned Voronoi diagram, it forms a plane graph (a planar graph embedded in 2D). 
The term `closure' ensures that \NewContentInline{degree-1 medial axis vertices (e.g.\ at the corners of the bounding box) are also included}.
Note that the medial axis does not run into \NewContentInline{obstacle corners with an angle of $\leq 180$ degrees (\emph{convex} corners), 
such as most corners of the `U' shape in \cref{fig:u-ma}.
Such edges \emph{do} appear in a GVD of line segment sites because these corners are then shared by multiple sites.}

Each medial axis arc $A$ is the bisector of two \emph{generators}: the endpoints or segments of \Eobs\ that are nearest to $A$.
If one generator is a line segment and the other is a point, then $A$ is a parabolic arc; otherwise, $A$ is a line segment.

In this paper, we refer to all \NewContentInline{medial axis} vertices of degree 1, 3, or higher as \emph{true vertices}.
We refer to the degree-2 vertices as \emph{semi-vertices} because the medial axis only changes its shape at these points.
Observe from \cref{fig:u-ma} that a semi-vertex occurs when the medial axis crosses a normal vector at a convex obstacle corner.
We define an \emph{edge} as a sequence of medial axis arcs between two true vertices.

\subsection{Complexity of the Medial Axis} \label{s:2dma:complexity}

\NewContentInline{It is well-known that the generalized Voronoi diagram (GVD) of non-crossing line segments with $n$ distinct endpoints has \BigO{n} vertices and edges 
and can be computed in \BigO{n \log n} time \cite{Aurenhammer2013-Voronoi,deBerg2008-CompGeom}.
The three most common construction algorithms for the point-site Voronoi diagram 
(plane sweep \cite{Fortune1987-Voronoi}, randomized incremental construction \cite{Green1978-IncrementalVoronoi}, and divide-and-conquer \cite{Shamos1975-Voronoi}) 
can each be extended to support line-segment sites as well \cite{Aurenhammer2013-Voronoi}.}

The medial axis has the same asymptotic size of \BigO{n} because it is a pruned GVD. 
The medial axis can be obtained from the GVD without increasing the overall asymptotic running time \cite{Kirkpatrick1979-MedialAxis,Chin1999-MedialAxis}.

\section{Definitions of Walkable and Multi-Layered Environments} \label{s:mle}

In this section, we define the types of environments \emph{embedded in 3D} for which we want to construct a navigation mesh. 
Our main assumption is that there is a consistent direction of gravity throughout an environment. 
For example, we support multi-storey buildings, but not spherical planets. 
As explained in \cref{s:relatedwork}, this is a common assumption for many navigation meshes, 
and we consider other 3D surfaces to be outside the scope of this paper.

Throughout this paper, a \emph{3D environment} is a collection of polygons in $\mathbb{R}^3$.
These polygons may include floors, ceilings, walls, or any other type of geometry.
\cref{fig:3d-environment-original} shows a simple example of a 3D environment. 

\subsection{Walkable Environment} \label{s:mle:walkable}

\NewContentInline{Informally, a \emph{walkable environment} (WE) can be thought of} as a set of polygonal surfaces in $\mathbb{R}^3$ on which characters can walk.
Characters can move from one polygon onto another if these polygons are connected in 3D.
The \emph{free space} \Efree\ of a WE is simply the entire set of surfaces. 
Unlike in 2D environments, the obstacle space \Eobs\ is not intuitively defined, 
but we will sometimes refer to points on the boundary of \Efree\ as `obstacle points'.
\NewContentInline{\cref{fig:3d-environment-walkable} shows a simple example of a walkable environment.}

A walkable environment may consist of multiple connected components: for example, consider two islands with no bridge connecting them.
In topological terms, each component is an orientable 2-manifold (a \emph{surface}) with a boundary.
This intuitively means that the WE has a `top' and `bottom' side, and any point on the bottom side cannot be reached from a point on the top side without intersecting a boundary.
Geometrically, we are only interested in the top side, i.e.\ the floors and not the ceilings. 
The WE is also what we call \emph{direction-consistent}: slopes are allowed, but there is a single gravity direction for the entire environment. 
This leads to the following formal definition:

\begin{definition}[Walkable environment]
	A walkable environment (WE) is a set of interior-disjoint polygons in $\mathbb{R}^3$.
	Topologically, each connected component of a WE is an orientable 2-manifold with a boundary.
	Geometrically, the WE is direction-consistent: there exists a horizontal ground plane $P$ below the WE 
	such that for any non-boundary point $q$, the infinitesimal neighborhood $\sigma(q)$ of $q$ does not overlap itself when projected vertically down onto $P$.
	\NewContentInline{Semantically, the WE represents all polygons on which characters can walk.}
\end{definition}

\NewContentInline{In theory, it does not matter how a walkable environment is created.
In practice, a WE} can be obtained from a 3D environment by filtering out unwalkable parts, 
e.g.\ surfaces that are too steep and surfaces along which the ceiling is too low for characters to pass under. 
\NewContentInline{(Note that filtering out steep surfaces leads to direction-consistent output.)}
Such a filtering process typically also merges polygons that are nearly adjacent; for example, staircases are converted into ramps. 
As explained in \cref{s:relatedwork:navmeshes-3d}, the details of these filtering techniques are outside the scope of this paper. 
Voxel-based implementations exist \cite{Pettre2005-NavGraph,Oliva2013-Neogen,Mononen-Recast}, and exact alternatives are in development \cite{Polak2016-Filtering}. 

Once more, it is important to note that the entire WE \emph{can be self-overlapping} when projected onto the ground plane $P$.
This is the main difference to 2D environments, \NewContentInline{and it is the main reason that we introduce \emph{multi-layered environments} next.}

\begin{figure*}[t]
	\centering
	\subfigure[3D environment \label{fig:3d-environment-original}]{
		\includegraphics[width=0.3\textwidth]{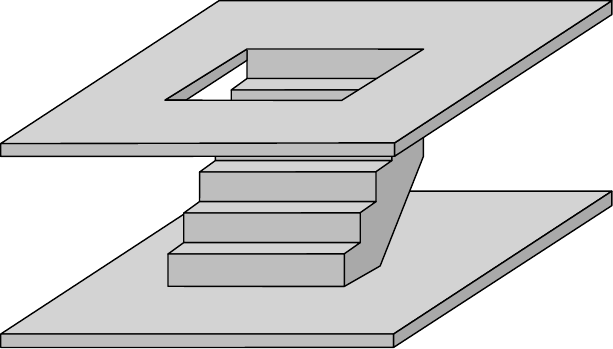}
	}
	\subfigure[Walkable environment \label{fig:3d-environment-walkable}]{
		\includegraphics[width=0.3\textwidth]{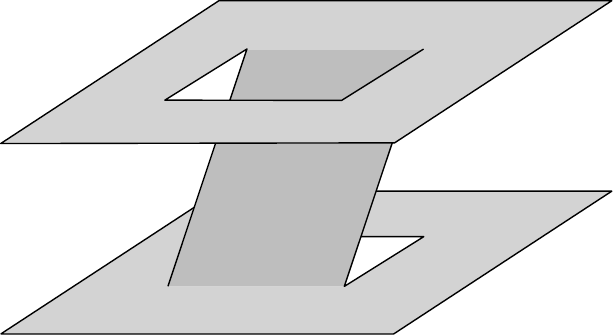}
	}
	\subfigure[Multi-layered environment \label{fig:3d-environment-multilayered}]{
		\includegraphics[width=0.3\textwidth]{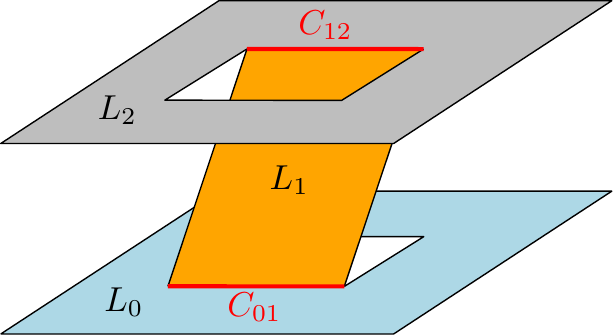}
	}
	\caption{A simple 3D environment for which we want to compute a navigation mesh.
	\NiceSubref{fig:3d-environment-original} The original environment is a collection of polygons in 3D.
	\NiceSubref{fig:3d-environment-walkable} A walkable environment (WE) is a set of polygons along which characters can walk.
	\NiceSubref{fig:3d-environment-multilayered} A multi-layered environment (MLE) is a subdivision of the WE into 2D layers. 
	Connections between layers are shown as bold line segments in this example.
	\label{fig:3d-environment}}
\end{figure*}

\subsection{Multi-Layered Environment} \label{s:mle:multilayered}

A \emph{multi-layered environment} (MLE) is a subdivision of a WE into \emph{layers} 
\NewContentInline{such that each individual layer can be projected onto the ground plane $P$ without overlap}.
Although a single layer does not need to have a particular meaning, a typical example of a layer is one floor of a building.

A subdivision into layers is useful for many purposes, including visualization (each layer can be drawn in 2D), 
identification (all surface points can be uniquely specified using a 2D position and a layer ID), 
and \NewContentInline{the construction of geometric data structures (existing 2D construction algorithms can be applied to each layer). 
In \cref{s:mlma}, we will use the concept of layers to compute the medial axis of a WE.}

The layers of an MLE are connected by \NewContentInline{curves} which we call \emph{connections}.
Intuitively, they are the `cuts' that were introduced during the subdivision into layers, 
and they are the edges along which the layers can be `glued together' to obtain the original WE. 
\NewContentInline{To facilitate the algorithm of \cref{s:mlma}, we require that connections have particular geometric properties.} 
Formally, we define an MLE as follows:

\begin{definition}[Multi-layered environment] \label{def:mle}
	A multi-layered environment (MLE) is a walkable environment (WE) that has been subdivided into $l$ planar layers, $\mathcal{L}=\{L_0,\ldots,L_{l-1}\}$, 
	using a set $\mathcal{C}=\{C_0,\ldots,C_{k-1}\}$ of $k$ connections. 
	
	Each layer $L_i \in \mathcal{L}$ is a set of walkable surfaces that are non-overlapping when projected onto the ground plane $P$.
	The free space \EfreeI{i} of $L_i$ is the union of all polygons in $L_i$. 
	Combining the free space of all layers yields the free space \Efree\ of the original WE.
	
	\NewContentInline{Each connection $C_q \in \mathcal{C}$ is a curve with the following properties:}
	\begin{itemize}
		\item It lies on the shared boundary of two layers $L_i$ and $L_j$ ($i \neq j$), thus connecting the walkable polygons of these layers. 
		\item Its endpoints lie on existing boundary vertices of \Efree, so its endpoints are impassable obstacles.
		\begin{NewContent}
		\item Its interior lies entirely inside the interior of \Efree, so it follows the surface of the WE without intersecting the boundary of \Efree. 
		\item Its interior does not intersect any other connections.
		\item Its projection onto the ground plane $P$ is a straight line segment. 
		\end{NewContent}
	\end{itemize}
\end{definition}

\cref{fig:3d-environment-multilayered} shows an example of a multi-layered environment. 
Note that the MLE is still \emph{embedded in 3D}, but each individual layer \emph{can} be projected onto $P$ without self-overlap, if desired.
Therefore, the projection of a layer $L_i$ onto $P$ is essentially a 2D environment with obstacles as described in \cref{s:2dma:definition-env}. 
The boundary vertices of these obstacles are also boundary vertices of \Efree. 
\begin{NewContent}
(Conversely, if we embed a 2D environment in $\mathbb{R}^3$, we obtain a special case of a WE, which is also an MLE with one layer and no connections.)

The connections in the example of \cref{fig:3d-environment-multilayered} are straight line segments in $\mathbb{R}^3$, 
but this is not necessary in general: a connection can be any curve that satisfies the constraints given in Theorem \ref{def:mle}.
These constraints are important for our construction algorithm in \cref{s:mlma}. 
\end{NewContent}

Two layers $L_i$ and $L_j$ may be connected through multiple connections at different positions; for example, imagine a bridge that connects to the same layer at both ends.
Also, a subdivision into layers is usually not unique: any subdivision that meets the requirements described above is acceptable.
\NewContentInline{As described in \cref{s:relatedwork:navmeshes-3d}, 
obtaining an MLE with a minimum number of connections is NP-hard \cite{Hillebrand2016-MLE}, 
but there are several heuristic approaches to obtaining a valid MLE.}

\subsection{Complexity of a Multi-Layered Environment} \label{s:mle:complexity}

The complexity of an MLE is given by the number of connections $k$ and the number of obstacle vertices $n$ in all layers combined.
Let $n_i$ be the number of obstacle vertices in a layer $L_i$.
We define $n$ as $\sum_{i=0}^{l-1} n_i$.
Note that a vertex occurs in multiple layers if it is an endpoint of a connection.
The following lemma bounds the number of connections.

\begin{lemma} \label{lem:mle:NumberOfConnections}
	For any multi-layered environment with $l$ layers and $n$ obstacle vertices, the number of connections $k$ is \BigO{n}.
\end{lemma}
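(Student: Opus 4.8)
The plan is to bound the number of connections by relating each connection to a distinct pair of boundary vertices (its endpoints) and then arguing that the total over all layers is linear. First I would recall from Definition \ref{def:mle} that every connection $C_q$ is a curve whose two endpoints lie on existing boundary vertices of \Efree, and whose interior lies strictly inside the interior of \Efree\ and does not intersect any other connection. The key structural observation is that the layers, glued along their connections, reconstitute the original WE, which is (on each connected component) an orientable $2$-manifold with boundary. So the natural approach is a planar-graph / Euler-formula argument applied either to a single layer at a time or to the whole MLE viewed through its projection onto $P$.

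The main step I would carry out is the following. Fix a layer $L_i$; its projection onto $P$ is a planar environment with $n_i$ boundary vertices, and the connections incident to $L_i$ project to interior-disjoint straight line segments (by the last bullet of Definition \ref{def:mle}) whose endpoints are among those $n_i$ boundary vertices. Thus the connections of $L_i$, together with the boundary of \EfreeI{i}, form a planar straight-line graph on \BigO{n_i} vertices; since such a graph has \BigO{n_i} edges (as its faces are bounded and it is planar), the number of connections incident to $L_i$ is \BigO{n_i}. Summing over all layers, each connection is counted exactly twice (once for each of the two layers it joins, which are distinct by definition), so $k \le \frac{1}{2}\sum_{i=0}^{l-1} \BigO{n_i} = \BigO{\sum_i n_i} = \BigO{n}$, using the definition $n = \sum_{i=0}^{l-1} n_i$ from \cref{s:mle:complexity}.

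The subtle point — and the step I expect to be the main obstacle — is justifying that the connections incident to a single layer really do behave like the edges of a \emph{planar} graph, so that their count is linear in the number of vertices rather than quadratic. The constraints in Definition \ref{def:mle} are exactly what make this work: connection interiors lie in the interior of \Efree\ (so they do not cross the boundary), connection interiors are pairwise disjoint (so they do not cross each other), and each projects to a straight segment in $P$. Since a layer projects without self-overlap, non-crossing curves in the layer project to non-crossing curves in $P$, and straightness then lets us invoke the standard bound that a planar graph on $N$ vertices has \BigO{N} edges. A small amount of care is needed because a layer need not be simply connected and the boundary itself contributes edges and possibly isolated vertices; but adding the boundary to the graph only increases the vertex and edge counts by \BigO{n_i}, which does not affect the asymptotics. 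One should also note that multiple connections may share an endpoint vertex and two layers may be joined by several connections, but none of this breaks planarity of the per-layer picture, so the linear bound — and hence $k = \BigO{n}$ — goes through.
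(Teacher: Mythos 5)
Your proposal is correct and follows essentially the same route as the paper: bound the connections incident to each layer by the number of non-crossing segments on its $n_i$ vertices via Euler's formula, then sum over layers to get \BigO{n}. The extra care you take in justifying planarity (non-crossing interiors, projection without overlap) and in noting that each connection is counted twice is a harmless refinement of the same argument.
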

\begin{proof}
	Let $n_i$ be the number of obstacle vertices in a layer $L_i$.
	By definition, $n = \sum_{i=0}^{l-1} n_i$.
	In each layer $L_i$, the number of connections is bounded by the maximum number of non-intersecting line segments that can be drawn between its $n_i$ vertices.
	Euler's formula for planar graphs implies that this is \BigO{n_i}.
	Therefore, the total number of connections is \BigO{\sum_{i=0}^{l-1} n_i} = \BigO{n}.
\end{proof}

\section{The Medial Axis in Multi-Layered Environments} \label{s:mlma}

In this section, we first define \NewContentInline{the medial axis} for walkable and multi-layered environments. 
Because our definitions do not require a particular subdivision into layers, they apply to both WEs and MLEs.
Next, we show how to compute \NewContentInline{the medial axis} in \BigO{n \log n \log k} time for an MLE with $n$ boundary vertices and $k$ connections. 

\subsection{Definition and Properties} \label{s:mlma:definition}

To define the medial axis for walkable and multi-layered environments, we need a notion of distance and path length. 
We will use the direction-consistency of the WE to define \emph{projected distances} in which height differences are ignored. 
\NewContentInline{Again, we acknowledge that this is not the same as the 3D distance on a surface. 
In \cref{s:relatedwork:navmeshes-3d}, we have argued why this simplification is useful.}

For two points $s$ and $g$ in a WE or MLE, let $\pi(s,g)$ be a path from $s$ to $g$ through \Efree\ along the walkable surfaces.
We define the \emph{projected length} of $\pi(s,g)$ as the curve length of $\pi(s,g)$ when projected vertically onto the ground plane $P$. 
This projected path can intersect itself: for instance, consider a path along a spiral staircase.

Let $\pi^*(s,g)$ be a path from $s$ to $g$ with \NewContentInline{the smallest projected length (among all possible paths from $s$ to $g$).}
We define the \emph{projected distance} $d_P(s,g)$ between $s$ and $g$ as the projected length of $\pi^*(s,g)$.
That is, $d_P(s,g)$ ignores any height differences along paths from $s$ to $g$.
\cref{fig:mle-distance} shows an example of projected distances.

A shortest path $\pi^*(s,g)$ is \emph{unobstructed} if it does not have any bending points around obstacles. 
The projection of an unobstructed path onto $P$ is a single line segment, so its projected length is simply the 2D Euclidean distance between $s$ and $g$ (when projected onto $P$).
The following properties hold:

\begin{property}[Straight-line property] \label{prop:mlma:StraightLine}
	The shortest path $\pi^*(q,n_q)$ from any point $q \in \Efree$ to any of its nearest boundary points $n_q$ is unobstructed.
\end{property}
\begin{property}[Empty-circle property] \label{prop:mlma:EmptyCircle}
	Let $q$ be a point in \Efree\ and let $n_q$ be a nearest boundary point to $q$, at projected distance $d = d_P(q,n_q)$. 
	For all points $q' \in \Efree$ for which $d_P(q,q') \leq d$, the shortest path $\pi^*(q,q')$ is unobstructed. 
	When projected onto $P$, these points form a disk with radius $d$.
\end{property}

\NewContentInline{If a WE has been converted to an MLE (i.e.\ if it has been subdivided into layers), 
then the nearest boundary point $n_q$ to a point $q \in \Efree$\ may lie on a different layer than $q$ itself. 
Consequently, the empty disk around $q$ may span multiple layers.
This does not matter for our definitions.}

We now define the medial axis based on the function $d_P$:

\begin{definition}[Medial axis, multi-layered] \label{def:mlma}
	For a walkable or multi-layered environment \Env\ with free space \Efree, let $\textit{ma}(\Env)$ be the set of all points in \Efree\ 
	that have at least two equidistant nearest points on the boundary of \Efree\ with respect to the projected distance function $d_P$.
	The medial axis \MedialAxis{\Env}\ is the topological closure of $\textit{ma}(\Env)$.
\end{definition}

\cref{fig:mle-ma} shows the medial axis of an example walkable environment.
Because the remainder of this paper is based on the projected distance function, we will often omit the term `projected' when discussing distances and path lengths.

If an environment \Env\ consists of a single layer $L_i$, then $\MedialAxis{\Env} = \MedialAxis{L_i}$. 
\NewContentInline{Likewise, if we treat a 2D environment as an MLE with a single layer, 
then Definition \ref{def:mlma} is actually a generalization of Definition \ref{def:2dma}, 
and we have obtained a single definition for the medial axis of 2D environments, MLEs, and WEs.

The medial axis becomes more interesting if \Env\ is a WE that overlaps itself when projected onto $P$, 
or (equivalently) if \Env\ is an MLE that contains overlapping layers.
In these cases, $\MedialAxis{\Env}$ is typically not planar,} 
but intuitively, it is \emph{locally} similar to a 2D medial axis everywhere due to the straight-line and empty-circle properties. 
We will use these properties to prove that our construction algorithm for the multi-layered medial axis is correct.

\begin{figure*}[t]
	\centering
	\subfigure[Distances \label{fig:mle-distance}]{
		\includegraphics[width=0.45\textwidth]{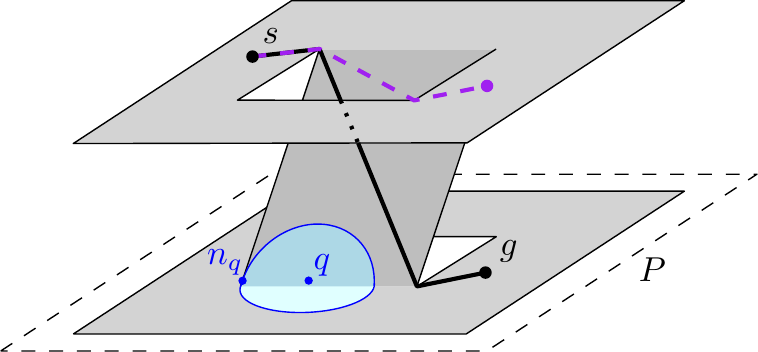}
	}
	\subfigure[Medial axis \label{fig:mle-ma}]{
		\includegraphics[width=0.45\textwidth]{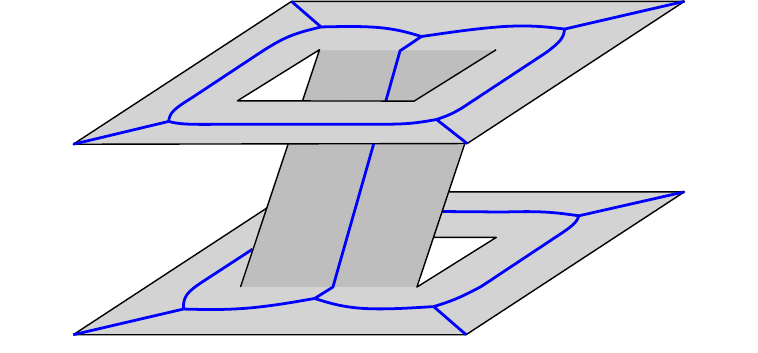}
	}
	\caption{The medial axis of a walkable environment \Env\ is based on path lengths projected onto the ground plane $P$. 
	\NiceSubref{fig:mle-distance} The shortest path between two points $s$ and $g$ is shown as a bold curve.
	Its projected length is indicated by the dashed curve.
	For a non-boundary point $q \in \Efree$\ with a nearest obstacle point $n_q$, 
	the set of points in \Efree\ within a distance of $d_P(q,n_q)$ from $q$ is a disk when projected onto $P$.
	\NiceSubref{fig:mle-ma} The medial axis \MedialAxis{\Env}\ is drawn on the surfaces of \Env.
	\label{fig:mle-distance-ma}}
\end{figure*}


\subsection{Construction Algorithm Outline} \label{s:mlma:algorithm-outline}

We now give an outline of our algorithm that computes the medial axis of a multi-layered environment \Env.
The result is also the medial axis of the corresponding \emph{walkable} environment.
However, our algorithm makes use of the fact that the \emph{two-dimensional} medial axis is easy to compute.
For this reason, we assume that the environment has been partitioned into layers. 
We acknowledge that this is a necessary pre-processing step that can be solved using other algorithms \cite{Hillebrand2016-PEEL}.
Our construction algorithm consists of the following steps:

\begin{enumerate}
	\item For each individual layer $L_i$, project $L_i$ onto $P$ and compute its 2D medial axis, while treating all of its connections as \emph{closed} impassable obstacles. 
	This yields exactly the medial axis \MedialAxis{L_i} according to the projected distance function $d_P$, 
	but under the assumption that each connection \NewContentInline{in $\mathcal{C}$} is an obstacle. 
	The result for all layers combined is the medial axis of \Env\ \NewContentInline{under the same assumption.} 
	We denote this result by \MedialAxis{\Env, \mathcal{C}}.
	\item Given \MedialAxisOld\ with $\ConnectionSetOld \subseteq \mathcal{C}$, choose a closed connection $C_q \in \ConnectionSetOld$. 
	\NewContentInline{By definition, the obstacle associated with $C_q$ is a line segment when projected onto $P$.}
	\emph{Open} the connection $C_q$ by removing its \emph{interior} as an obstacle and repairing the medial axis in its neighborhood. 
	(The endpoints of the connection will remain obstacles because they are on the boundary of \Efree.)
	\NewContentInline{The result is the medial axis of \Env\ in which $C_q$ is no longer treated as an impassable obstacle, 
	i.e.\ it is \MedialAxisNew\ with $\ConnectionSetNew = \ConnectionSetOld \setminus \{C_q\}$.
	In \MedialAxisNew, there are new edges of the medial axis that pass through $C_q$, and the neighborhood of $C_q$ is now connected.}
	In \cref{s:mlma:opening}, we will describe our algorithm for opening a connection.
	\item Repeat step 2 until all connections are open. The result is $\MedialAxis{\Env, \emptyset} = \MedialAxis{\Env}$. 
\end{enumerate}

In short, we initially treat all connections as closed and then iteratively remove them as obstacles.
\NewContentInline{Because connections project to line segments,} opening a connection is essentially the deletion of a line segment Voronoi site \cite{Aurenhammer2013-Voronoi} 
but with the extra difficulty that the neighborhood of the deleted site may span multiple layers. 
We will explain this further in \cref{s:mlma:opening}. 
For now, it is sufficient to know that existing deletion algorithms for Voronoi sites in 2D 
\cite{Aggarwal1989-LinearVoronoi,Devillers1999-VoronoiDeletion,Khramtcova2015-LinearDeletion} cannot immediately be applied. 
\cref{s:mlma:opening} will present an alternative algorithm.

\subsection{Properties of a Closed Connection} \label{s:mlma:connectionproperties}

To develop an algorithm for opening a closed connection, we must first study the properties of such a connection. 
Consider a closed connection between two layers $L_i$ and $L_j$, as in \cref{fig:mlma-merge-1}. 
We will now refer to this connection as $C_{ij}$ to emphasize to which layers it is associated.
This notation is not unique because $L_i$ and $L_j$ may be connected via other connections as well. 
However, in our discussion of opening a single connection chosen by the main algorithm, it should be clear to which instance we are referring.

\subsubsection{Sides} 
The connection is currently treated as an impassable obstacle between $L_i$ and $L_j$.
Thus, it occurs as an obstacle for the medial axis on two `sides'. 
We define the \emph{side} $S_i$ as the set of all walkable surfaces and boundary points that are currently reachable from $C_{ij}$ by starting in $L_i$.
Likewise, the side $S_j$ consists of all surfaces and obstacle points that can be reached from $C_{ij}$ by starting in $L_j$.
These sides are also annotated in \cref{fig:mlma-merge-1}.

A side $S_i$ at this point in our algorithm is not necessarily the same as a layer $L_i$ in the environment. 
The side $S_i$ includes at least the part of $L_i$ that has $C_{ij}$ on its boundary. 
If other connections are already open, then $S_i$ may contain other layers as well.
If sufficiently many connections have been opened such that $L_i$ and $L_j$ are already connected via another route, then $S_i$ and $S_j$ are even equal.
However, for our algorithm, it does not matter which layers are already included in $S_i$ or $S_j$, 
and it is useful to speak of two different sides of the connection.

\begin{figure*}[t]
	\centering
	\subfigure[3D view \label{fig:mlma-merge-1}]{
		\includegraphics[height=31mm]{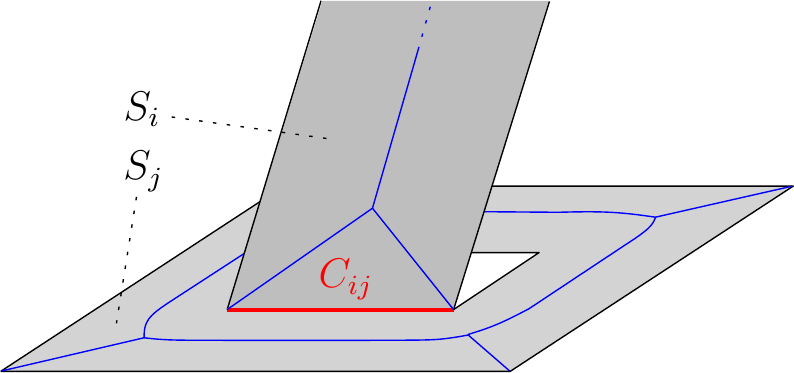}
	}
	\subfigure[2D view \label{fig:mlma-merge-2}]{
		\includegraphics[height=31mm]{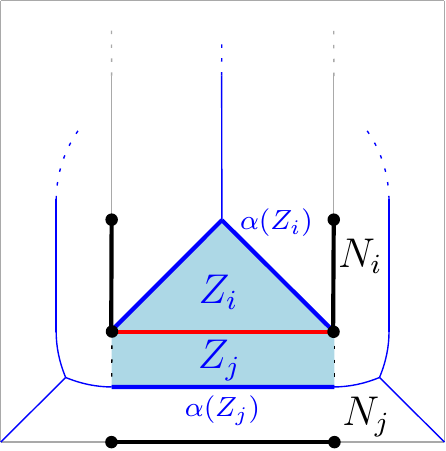}
	}
	\subfigure[Medial axis of $N_{ij}$ \label{fig:mlma-merge-3}]{
		\includegraphics[height=31mm]{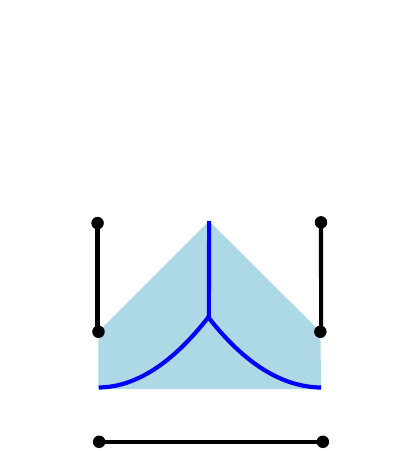}
	}
	\caption{Opening a connection $C_{ij}$ in an MLE. 
	\NiceSubref{fig:mlma-merge-1} Initially, $C_{ij}$ is an obstacle on both sides, $S_i$ and $S_j$.
	\NiceSubref{fig:mlma-merge-2} 2D top view of the area around $C_{ij}$.
	The \emph{influence zone} $Z_{ij} = Z_i \cup Z_j$ is shaded.
	The obstacle points $N_{ij} = N_i \cup N_j$ that are nearest to $Z_{ij}$ are shown in bold black.
	\NiceSubref{fig:mlma-merge-3} When opening $C_{ij}$, the medial axis changes only inside $Z_{ij}$.
	This medial axis $M_Z$ is defined by $N_{ij}$.
	\label{fig:mlma-merge}}
\end{figure*}

\subsubsection{Influence Zone} \label{s:mlma:connectionproperties:influencezone}

When we open $C_{ij}$, we effectively remove the \emph{interior} of $C_{ij}$, denoted by \interior{C_{ij}}, as an obstacle from the environment. 
We do not remove the endpoints because they will remain obstacles in the WE.

Therefore, we need to determine a new nearest obstacle for all points in the WE that were previously nearest to \interior{C_{ij}}.
Let the \emph{influence zone} $Z_{ij}$ be the closure of the set of all points in \Env\ that currently have \interior{C_{ij}} as a nearest obstacle.
Observe from \cref{fig:mlma-merge-2} that $Z_{ij}$ consists of two parts: one on side $S_i$ and the other on side $S_j$. 
(Conceptually, it does not matter if $S_i$ and $S_j$ are already equal.)
For convenience, we will refer to these parts as $Z_i$ and $Z_j$, respectively. 

\begin{lemma} \label{lem:mlma:InfluenceZoneOnly}
	If the interior of a connection $C_{ij}$ is removed as an obstacle, the medial axis changes only inside the influence zone $Z_{ij}$.
\end{lemma}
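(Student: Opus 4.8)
The plan is to show that the medial axis is determined locally by the nearest-obstacle structure, so that a change of obstacles confined to $\interior{C_{ij}}$ can only affect the medial axis where $\interior{C_{ij}}$ was relevant — namely inside $Z_{ij}$. Concretely, let $\mathcal{O}$ denote the obstacle set before opening (with $C_{ij}$ closed) and $\mathcal{O}' = \mathcal{O} \setminus \interior{C_{ij}}$ the obstacle set afterward. For any point $q \in \Efree$, write $d(q)$ (resp. $d'(q)$) for the projected distance to the nearest point of $\mathcal{O}$ (resp. $\mathcal{O}'$), using the function $d_P$ of \cref{def:mlma}. The key observation is that $d$ and $d'$ agree on the complement of $Z_{ij}$: removing obstacle points can only increase distances, so $d'(q) \geq d(q)$ always; and if $q \notin Z_{ij}$, then by definition of the influence zone no point of $\interior{C_{ij}}$ is a nearest obstacle to $q$, so some point of $\mathcal{O} \setminus \interior{C_{ij}} = \mathcal{O}'$ realizes $d(q)$, giving $d'(q) \leq d(q)$ and hence $d'(q) = d(q)$.

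First I would make the preceding paragraph precise, paying attention to the closure in the definition of $Z_{ij}$: for $q$ strictly outside the closed set $Z_{ij}$, the nearest-obstacle point is bounded away from $\interior{C_{ij}}$, so $d'(q) = d(q)$ on an open neighborhood of $q$, and moreover the \emph{set} of nearest obstacle points is unchanged near $q$ (the empty-circle property, \cref{prop:mlma:EmptyCircle}, guarantees the relevant disk is well-behaved and lies in $\Efree$, so "nearest in $d_P$" behaves like nearest in the 2D sense locally). Next I would argue that membership in the (open) medial axis $\textit{ma}(\Env)$ at a point $q$ — having at least two equidistant nearest boundary points — depends only on this local nearest-obstacle set. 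Since that set coincides for $\mathcal{O}$ and $\mathcal{O}'$ at every $q$ outside $Z_{ij}$, we get $\textit{ma}(\Env,\mathcal{O}) \setminus Z_{ij} = \textit{ma}(\Env,\mathcal{O}') \setminus Z_{ij}$. Taking closures, and noting that a point outside the closed set $Z_{ij}$ has a neighborhood disjoint from $Z_{ij}$ on which the two open medial axes agree, yields $\MedialAxis{\Env, \ConnectionSetOld} \setminus Z_{ij} = \MedialAxis{\Env, \ConnectionSetNew} \setminus Z_{ij}$, which is exactly the claim.

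I expect the main obstacle to be the careful handling of the boundary of $Z_{ij}$ together with the topological subtlety that the medial axis here need not be planar and that $d_P$ is a shortest-path (geodesic, projected) distance rather than straight-line Euclidean distance. One must justify that "nearest boundary point" and the associated shortest path $\pi^*$ are governed purely locally: the straight-line property (\cref{prop:mlma:StraightLine}) says such a path is unobstructed, and the empty-circle property says all of $\Efree$ within the nearest-obstacle distance projects to an honest disk — together these let me reduce the analysis at any $q \notin Z_{ij}$ to the familiar 2D Voronoi/medial-axis picture on a small patch, where the claim is standard. A secondary point worth a sentence is that opening $C_{ij}$ does not change $\Efree$ itself (the walkable surfaces are the same; only which points count as \emph{boundary} obstacles changes), so $\textit{ma}$ is being compared over the same ambient space. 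I would also remark that the lemma does \emph{not} assert the medial axis is unchanged on all of $Z_{ij}^{\mathsf c}$ in the trivial sense — it is a genuine locality statement whose content is precisely the influence-zone definition doing its job.
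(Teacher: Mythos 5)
Your argument is correct and follows essentially the same route as the paper's (much terser) proof: points outside $Z_{ij}$ already had a nearest obstacle other than $\interior{C_{ij}}$, so their nearest-obstacle sets — and hence their membership in the medial axis — are unaffected by the removal. The paper states this in three sentences; your additional care with the closure of $Z_{ij}$ and the local reduction to the 2D picture via the straight-line and empty-circle properties is a legitimate fleshing-out of the same idea, not a different proof.
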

\begin{proof}
	By the definition of $Z_{ij}$, removing \interior{C_{ij}} causes the nearest obstacle points to change only inside (and on the boundary of) $Z_{ij}$.
	After all, the other points in \Efree\ were already closer to other obstacles.
	A consequence is that opening $C_{ij}$ causes the \emph{medial axis} to change only in $Z_{ij}$.
\end{proof}

Lemma \ref{lem:mlma:InfluenceZoneOnly} implies that \MedialAxisOld\ (the current medial axis with $C_{ij}$ as an obstacle) 
and \MedialAxisNew\ (the medial axis without $C_{ij}$ as an obstacle, which we want to compute) are equal except in $Z_{ij}$.
It is therefore useful to analyze the shape of $Z_{ij}$.

\begin{lemma} \label{lem:mlma:InfluenceZoneBoundary1}
	The influence zone $Z_{ij}$ is bounded by the two lines perpendicular to $C_{ij}$ through $C_{ij}$'s endpoints.
\end{lemma}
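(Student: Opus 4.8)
The plan is to show both containments implicitly: every point whose nearest obstacle is the interior of $C_{ij}$ lies strictly between the two perpendicular lines, and hence the closure $Z_{ij}$ is contained in the closed slab they bound. The key geometric fact I would exploit is \cref{prop:mlma:StraightLine}: for any point $q$ in the influence zone, the shortest path from $q$ to its nearest obstacle point $n_q \in \interior{C_{ij}}$ is unobstructed, so when projected onto the ground plane $P$ it is a straight line segment from (the projection of) $q$ to (the projection of) $n_q$, and its length equals the 2D Euclidean distance on $P$. So on $P$, the question reduces to a purely planar one: $C_{ij}$ projects to a straight segment $\bar{C}$ (by \cref{def:mle}), and I must argue that any planar point whose nearest point on the \emph{segment} $\bar{C}$ is an \emph{interior} point of $\bar{C}$ lies in the open slab bounded by the two lines through $\bar{C}$'s endpoints perpendicular to $\bar{C}$.

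First I would set up the reduction carefully. Let $a,b$ be the endpoints of $C_{ij}$, with projections $a',b'$ onto $P$; the projection of $C_{ij}$ is the segment $\linesegment{a'}{b'}$. Take any $q \in Z_{ij}$ that actually has a point of $\interior{C_{ij}}$ as a nearest obstacle (the boundary points of $Z_{ij}$ are then handled by taking closures at the end). By \cref{prop:mlma:StraightLine}, the projected distance $d_P(q,n_q)$ equals $|q' n_q'|$ where $q', n_q'$ are the ground-plane projections and $n_q' \in \linesegment{a'}{b'} \setminus \{a',b'\}$. I would then note that for \emph{any} other point $x$ on $C_{ij}$ (in particular the endpoints), $d_P(q,x) \le |q' x'|$ — because the straight projected segment from $q'$ to $x'$ can be lifted to \emph{some} path on the surface from $q$ to $x$ (this is where the empty-circle-style local planarity and the direction-consistency are used: a short enough straight projected step corresponds to an actual surface path; more carefully, I would argue the lift exists by walking along the surface near the unobstructed path to $n_q$ and then along $C_{ij}$). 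Since $n_q$ is a nearest obstacle, $|q' n_q'| = d_P(q,n_q) \le d_P(q,a) \le |q' a'|$ and similarly $\le |q' b'|$. Thus in the plane $P$, the point $q'$ is at least as close to the segment endpoint-free interior point $n_q'$ as it is to either endpoint.

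Next comes the elementary planar argument, which I would state but not belabor: if $n_q'$ is the closest point of $\linesegment{a'}{b'}$ to $q'$ and $n_q'$ is not an endpoint, then the segment $\linesegment{q'}{n_q'}$ is perpendicular to $\linesegment{a'}{b'}$ (standard: the nearest-point projection onto a segment is either an endpoint or the foot of the perpendicular). The foot of the perpendicular from $q'$ lands strictly between $a'$ and $b'$, which is exactly the statement that $q'$ lies strictly between the line through $a'$ perpendicular to $\linesegment{a'}{b'}$ and the line through $b'$ perpendicular to $\linesegment{a'}{b'}$. Lifting these two lines back up through the surface (they are the vertical preimages, i.e.\ the perpendiculars to $C_{ij}$ through its endpoints in the sense meant by the lemma statement — perpendicular when projected onto $P$), we conclude $q$ lies in the open region bounded by them. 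Taking topological closure, $Z_{ij}$ is contained in the closed slab, i.e.\ it is bounded by those two perpendicular lines.

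The step I expect to be the main obstacle is the lifting argument: justifying that $d_P(q,x) \le |q'x'|$ for $x$ on $C_{ij}$, i.e.\ that the straight projected segment really is realized (or over-estimated) by an actual surface path. On a general self-overlapping WE one cannot lift an arbitrary planar segment; the argument must stay local, using that $q$ is within projected distance $d_P(q,n_q)$ of $\interior{C_{ij}}$ and invoking \cref{prop:mlma:EmptyCircle} (the empty projected disk around $q$ is genuinely disk-like on the surface) together with the fact that $C_{ij}$ lies on the shared boundary of two layers, so its whole projection $\linesegment{a'}{b'}$ is within that disk's radius and the disk witnesses unobstructed straight-line access from $q$ to every point of $C_{ij}$. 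I would make this precise by first showing $d_P(q,x) \le d_P(q,n_q) + (\text{length of } C_{ij}\text{ from } n_q\text{ to }x) $ is not tight enough, and instead directly apply the empty-circle property to the point $x' \in \linesegment{a'}{b'}$: since $|q'x'| \ge |q'n_q'| = d_P(q,n_q)$ would break the argument, I actually need the reverse — so the cleaner route is to observe that the empty projected disk of radius $d_P(q,n_q)$ around $q$ already forces $|q'a'|, |q'b'| \ge d_P(q,n_q)$ by \emph{definition} of "nearest," without any lifting at all: $a,b$ are obstacle points (endpoints of a connection lie on the boundary of \Efree), hence $d_P(q,a) \ge d_P(q,n_q)$ because $n_q$ is a nearest boundary point, and then \cref{prop:mlma:StraightLine} applied to the pair $(q, a)$ gives $d_P(q,a) = |q'a'|$ only if that path is unobstructed, which again follows from the empty-circle property since $a$ lies within projected distance... — this circularity is precisely the delicate point, and resolving it cleanly (probably by using \cref{prop:mlma:EmptyCircle} to get \emph{unobstructed} straight access from $q$ to \emph{all} of $C_{ij}$ including its endpoints, so all these projected distances are honest Euclidean distances) is the crux of the proof.
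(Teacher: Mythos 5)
Your proposal is correct and follows essentially the same route as the paper: reduce, via the straight-line property and the fact that a connection projects to a line segment, to the planar observation that any point whose nearest point on a segment is an interior point must lie between the two endpoint perpendiculars. The paper's own proof is a two-line version of this ("such a point must be closer to an endpoint of $C_{ij}$ or to another obstacle") and simply does not address the lifting subtlety you identify as the crux; your resolution via \cref{prop:mlma:EmptyCircle} --- the empty projected disk of radius $d_P(q,n_q)$ around $q$ gives unobstructed, honestly-Euclidean access to all of $C_{ij}$ including its endpoints, so an endpoint would be a strictly nearer obstacle --- is the right way to fill that gap.
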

\begin{proof}
	(We will refer to these two lines as the \emph{endpoint normals} of $C_{ij}$.)
	Consider any point $p \in \Efree$\ that is \emph{not} between or on the endpoint normals of $C_{ij}$.
	Such a point cannot be closest to \interior{C_{ij}} because it must be closer to an endpoint of $C_{ij}$ or to another obstacle in the environment.
	Therefore, $p$ cannot be in $Z_{ij}$.
\end{proof}

\begin{lemma} \label{lem:mlma:InfluenceZoneBoundary2}
	$Z_i$ and $Z_j$ are both bounded by a sequence of medial axis arcs.
	Both sequences, denoted by $\alpha(Z_i)$ and $\alpha(Z_j)$, are uninterrupted and monotone with respect to the line supporting $C_{ij}$.
\end{lemma}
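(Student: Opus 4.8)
The plan is to characterize $Z_i$ (and symmetrically $Z_j$) as the part of side $S_i$ lying between the two endpoint normals of $C_{ij}$ whose nearest obstacle among \emph{all} obstacles on side $S_i$ is \interior{C_{ij}}. By \cref{lem:mlma:InfluenceZoneBoundary1} we already know $Z_i$ lives in the slab between the endpoint normals; what remains is the portion of its boundary that is \emph{interior} to that slab. This interior boundary consists precisely of points that are equidistant (in projected distance $d_P$) from \interior{C_{ij}} and from some other obstacle on side $S_i$ --- in other words, it is a union of bisector pieces between $C_{ij}$ and the other generators, which is by definition a sequence of medial axis arcs of \MedialAxisOld. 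So the first step is to argue that $\partial Z_i$, minus the two normal segments, lies on $\MedialAxisOld$; this follows directly from the definition of the influence zone together with \cref{def:mlma}, since a point on this part of $\partial Z_i$ has \interior{C_{ij}} as one nearest obstacle and the obstacle ``on the far side'' as another.

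Next I would establish that this arc sequence $\alpha(Z_i)$ is \emph{uninterrupted}, i.e.\ a single connected chain rather than several disjoint pieces. The key tool here is the empty-circle property (\cref{prop:mlma:EmptyCircle}): for each point $q \in Z_i$, the projected empty disk of radius $d_P(q,\interior{C_{ij}})$ is unobstructed, so locally the geometry behaves exactly like a 2D line-segment Voronoi cell. The influence zone $Z_i$ is exactly the Voronoi cell of the site \interior{C_{ij}} restricted to side $S_i$, and such a cell is connected (star-shaped with respect to the segment, in fact) --- this is the standard structure of a line-segment Voronoi region. Its boundary therefore decomposes into: the segment $C_{ij}$ itself, the two endpoint normals, and one contiguous chain of bisector arcs facing $C_{ij}$. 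That contiguous chain is $\alpha(Z_i)$, proving it is uninterrupted.

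For \emph{monotonicity} with respect to the line $\ell$ supporting $C_{ij}$, I would use the straight-line property (\cref{prop:mlma:StraightLine}) and an elementary argument about distance to a segment. Parametrize $\ell$ by arclength; for a point $q$ in the slab, its nearest point on $\overline{C_{ij}}$ is the foot of the perpendicular from $q$ to $\ell$, so $d_P(q, \interior{C_{ij}})$ equals the perpendicular distance from $q$ to $\ell$, which is a monotone (indeed affine) function as $q$ moves perpendicularly away from $\ell$. A point $q$ lies in $Z_i$ iff this perpendicular distance is at most the distance to every other obstacle on $S_i$; since increasing the perpendicular distance only makes it harder to stay in $Z_i$ while the positions of the competing obstacles do not depend on that perpendicular coordinate, the set of admissible perpendicular distances over each fixed perpendicular line is an interval ending at $\alpha(Z_i)$. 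Hence $\alpha(Z_i)$ meets each line perpendicular to $\ell$ (within the slab) in exactly one point, which is the definition of monotonicity with respect to $\ell$.

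The main obstacle I anticipate is making the ``behaves like 2D'' reduction fully rigorous when $Z_i$ genuinely spans several layers: I must verify that the empty-circle property still guarantees connectedness and star-shapedness of the restricted Voronoi cell in that case, since a naive planar argument assumes a single flat chart. The fix is to note that every point of $Z_i$ is joined to \interior{C_{ij}} by an unobstructed shortest path (straight-line property), these paths all project to segments inside the empty disks, and their union projects to the convex planar region ``slab $\cap$ (disk side)'', so the projection of $Z_i$ is itself a connected planar set bounded by a monotone arc --- and the projection is a homeomorphism on each side by direction-consistency of the WE. One also has to handle the degenerate case where $\alpha(Z_i)$ touches an endpoint normal (the chain may have zero length if \interior{C_{ij}} is dominated immediately), which is why the statement is phrased with the arcs being ``uninterrupted and monotone'' rather than strictly increasing in length.
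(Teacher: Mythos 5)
Your proposal is correct and takes essentially the same approach as the paper: both arguments rest on the observation that every point of $Z_i$ is joined to its nearest point on $C_{ij}$ by a perpendicular segment lying entirely inside $Z_i$ (which gives monotonicity), together with the MLE requirement that \interior{C_{ij}} lies in the interior of \Efree\ (which gives a single, uninterrupted chain rather than several pieces). The only wording to tighten is ``the positions of the competing obstacles do not depend on that perpendicular coordinate'': the correct justification is the triangle inequality --- moving a point a distance $\delta$ toward $C_{ij}$ decreases its distance to $C_{ij}$ by exactly $\delta$ but decreases its distance to any other obstacle by at most $\delta$, so the nearest obstacle cannot change along the way.
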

\begin{proof}
	We prove the lemma for $Z_i$; the proof for $Z_j$ is analogous.
	$Z_i$ is bounded by a set of medial axis arcs $\alpha(Z_i)$ that have a nearest obstacle point on $C_{ij}$.
	For every point $z$ on $\alpha(Z_i)$, the nearest point $c$ on $C_{ij}$ can be reached via a line segment \linesegment{z}{c}\ that is perpendicular to $C_{ij}$.
	If this were not true, then another obstacle would be in the way and $c$ would not be a nearest obstacle point.	
	Furthermore, $c$ is a nearest obstacle point for all points on \linesegment{z}{c} because this nearest obstacle cannot change when moving from $z$ to $c$. 
	Thus, \linesegment{z}{c}\ lies entirely inside $Z_i$.
	Because $Z_i$ consists of infinitely many line segments that all have an endpoint on $C_{ij}$ and that are all perpendicular to $C_{ij}$, 
	the boundary $\alpha(Z_i)$ is monotone with respect to $C_{ij}$.
	
	Finally, the definition of an MLE enforces that \interior{C_{ij}} does not intersect any obstacles. 
	Because of this, every point on \interior{C_{ij}} has some free space in its neighborhood: 
	for each $c \in \interior{C_{ij}}$, the line segment \linesegment{z}{c}\ exists and has non-zero length.
	The endpoints of $C_{ij}$ are the only points where $z$ and $c$ can be equal.
	This proves that $\alpha(Z_i)$ is a single sequence of arcs.
\end{proof}

These lemmas have the following consequences:
\begin{corollary} \label{cor:mlma:InfluenceZoneBoundary}
	The boundary of the influence zone $Z_{ij}$ is a single closed loop consisting of $\alpha(Z_i)$, $\alpha(Z_j)$, and the endpoint normals of $C_{ij}$.
\end{corollary}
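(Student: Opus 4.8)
The plan is to derive \cref{cor:mlma:InfluenceZoneBoundary} from \cref{lem:mlma:InfluenceZoneBoundary1} and \cref{lem:mlma:InfluenceZoneBoundary2} by a topological gluing argument; the picture to keep in mind is the one in \cref{fig:mlma-merge}. Since $Z_{ij} = Z_i \cup Z_j$ with $Z_i$ and $Z_j$ meeting exactly along the obstacle $C_{ij}$, and since opening $C_{ij}$ turns \interior{C_{ij}} into free space while the endpoints of $C_{ij}$ stay as obstacles, $Z_{ij}$ is, topologically, the region obtained by gluing $Z_i$ and $Z_j$ along the arc $C_{ij}$. The goal is then to show that the parts of $\partial Z_i$ and $\partial Z_j$ that are not glued away are precisely $\alpha(Z_i)$, $\alpha(Z_j)$, and segments on the endpoint normals.

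First I would pin down $\partial Z_i$. As established in the proof of \cref{lem:mlma:InfluenceZoneBoundary2}, every point $c$ of \interior{C_{ij}} is joined to a point $z$ of $\alpha(Z_i)$ by a perpendicular segment \linesegment{z}{c} of positive length that lies inside $Z_i$, along which the nearest obstacle stays equal to $c$; hence $Z_i$ is the union of these parallel fibres, it is connected, and it is a (topological) disk with $C_{ij}$ as one boundary arc. The complementary boundary arc is $\alpha(Z_i)$ together with at most two straight pieces on the endpoint normals: by \cref{lem:mlma:InfluenceZoneBoundary1}, all of $Z_i$ lies in the closed slab bounded by the two endpoint normals, and a point just outside that slab cannot have \interior{C_{ij}} as a nearest obstacle (as argued in the proof of \cref{lem:mlma:InfluenceZoneBoundary1}), so on each endpoint normal the boundary of $Z_i$ is a (possibly degenerate) segment from that endpoint of $C_{ij}$ to the corresponding end of $\alpha(Z_i)$. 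Second, I would repeat this analysis verbatim for $Z_j$.

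Third, I would glue. Identifying the two disks $Z_i$ and $Z_j$ along the single boundary arc $C_{ij}$ yields a disk, so $\partial Z_{ij}$ is one closed loop, and it is exactly the concatenation of the two complementary boundary arcs found above: in cyclic order, $\alpha(Z_i)$, the endpoint-normal piece at one endpoint of $C_{ij}$ (the $Z_i$-side segment, through that endpoint, then the $Z_j$-side segment), $\alpha(Z_j)$, and the endpoint-normal piece at the other endpoint. The monotonicity of $\alpha(Z_i)$ and $\alpha(Z_j)$ with respect to the line supporting $C_{ij}$, from \cref{lem:mlma:InfluenceZoneBoundary2}, ensures that each arc sequence reaches both endpoint normals and that consecutive pieces of the loop meet only at their shared endpoints, so there is exactly one loop and it has the claimed form.

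I expect the real work to be in the first step: checking that $Z_i$ really is the union of the perpendicular fibres (so that it is a disk with $C_{ij}$ as a boundary arc) and that nothing other than $\alpha(Z_i)$ and the two endpoint-normal segments can appear on $\partial Z_i$ --- in particular that $Z_i$ does not pinch or split into several components. These facts follow from the ``nearest obstacle constant along \linesegment{z}{c}'' and ``only the endpoints of $C_{ij}$ allow $z=c$'' observations already extracted in the proof of \cref{lem:mlma:InfluenceZoneBoundary2}, together with the slab constraint of \cref{lem:mlma:InfluenceZoneBoundary1}. Once those are in place, the gluing step and the single-loop conclusion are routine.
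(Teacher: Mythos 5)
Your proposal is correct and matches the paper's intent: the paper states this corollary without proof as an immediate consequence of Lemmas \ref{lem:mlma:InfluenceZoneBoundary1} and \ref{lem:mlma:InfluenceZoneBoundary2}, and your fibre-decomposition-plus-gluing argument is precisely the reasoning those lemmas are meant to support. The only detail worth being explicit about is the one you already flag, namely that the endpoint-normal pieces may be degenerate when $\alpha(Z_i)$ or $\alpha(Z_j)$ terminates at an endpoint of $C_{ij}$.
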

\begin{corollary} \label{cor:mlma:InfluenceZonePlanar}
	The influence zone $Z_{ij}$ can be projected onto the ground plane $P$ without overlap. 
	This projection is a single shape without holes (because it has only one boundary).
\end{corollary}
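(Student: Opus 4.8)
The plan is to reduce the statement to two separate claims — that the vertical projection onto $P$ is \emph{injective} on $Z_{ij}$ (``without overlap''), and that its image is a planar region bounded by a single simple closed curve (``one shape without holes'') — and to obtain each from the structural results already established together with the direction-consistency of the WE. First I would fix coordinates on $P$: since \cref{def:mle} guarantees that the projection $\widehat{C}_{ij}$ of $C_{ij}$ is a straight line segment, I parametrise its supporting line by $t \in [0,\ell]$ over that segment and let $s$ be signed perpendicular distance to the line, with $s \ge 0$ on the side of $S_i$ and $s \le 0$ on the side of $S_j$; these are genuinely opposite sides of the line near $\widehat{C}_{ij}$ because each layer projects onto $P$ without overlap, so $\widehat{L_i}$ and $\widehat{L_j}$ lie locally on opposite sides of it. By \cref{lem:mlma:InfluenceZoneBoundary1} the projection of $Z_{ij}$ already lies in the slab $0 \le t \le \ell$.

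For the image, I would use the foliation established inside the proof of \cref{lem:mlma:InfluenceZoneBoundary2}: $Z_i$ is the union of the perpendicular segments $\linesegment{z}{c}$, one for each $c \in C_{ij}$, and each projects onto a segment perpendicular to $\widehat{C}_{ij}$ emanating from $\widehat{c}$. Because $\alpha(Z_i)$ is a single, uninterrupted sequence of arcs monotone with respect to the supporting line of $C_{ij}$, it is the graph of a continuous function $\rho_i : [0,\ell] \to \mathbb{R}_{\ge 0}$, namely $\rho_i(t) = d_P(z(c),c)$ where $z(c)$ is the unique far endpoint of the perpendicular through $c$. Hence $\widehat{Z_i} = \{(t,s) : 0 \le t \le \ell,\ 0 \le s \le \rho_i(t)\}$, and symmetrically $\widehat{Z_j} = \{(t,s) : 0 \le t \le \ell,\ -\rho_j(t) \le s \le 0\}$, so $\widehat{Z_{ij}} = \{(t,s) : 0 \le t \le \ell,\ -\rho_j(t) \le s \le \rho_i(t)\}$ — a region with a connected (interval) fibre over every $t \in [0,\ell]$, hence simply connected and in particular without holes. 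Alternatively, ``no holes'' is immediate from \cref{cor:mlma:InfluenceZoneBoundary}, since a bounded planar region whose boundary is a single simple closed curve contains no holes.

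The substantive step is injectivity of the projection on $Z_{ij}$, and this is where direction-consistency is essential — the ambient side $S_i$ may itself span several layers and need not project injectively, so injectivity on $Z_{ij}$ is not automatic. Suppose $z_1, z_2 \in Z_{ij}$ project to the same point $\widehat{z}$. Each $z_k$ reaches a nearest point $c_k \in C_{ij}$ along $\linesegment{z_k}{c_k}$, whose projection is perpendicular to $\widehat{C}_{ij}$ and passes through $\widehat{z}$; hence $\widehat{c_1}$ and $\widehat{c_2}$ are both the foot of the perpendicular from $\widehat{z}$ to the supporting line of $\widehat{C}_{ij}$, so $\widehat{c_1} = \widehat{c_2}$. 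Since $C_{ij}$ lies on the boundary of a layer that projects onto $P$ without overlap (\cref{def:mle}), $C_{ij}$ itself projects injectively, whence $c_1 = c_2 =: c$. Thus $z_1$ and $z_2$ are endpoints of two unobstructed shortest paths (\cref{prop:mlma:StraightLine}) from the \emph{same} point $c$ that project onto the \emph{same} segment $\linesegment{\widehat{z}}{\widehat{c}}$ of $P$; by direction-consistency the WE is, near each of its non-boundary points, a single sheet over $P$, so that segment has a unique lift to a curve on the WE starting at $c$, and therefore $z_1 = z_2$. I expect the careful handling of this last lifting argument — in particular verifying that the lift never touches the boundary of \Efree\ along the segment, so that it cannot switch sheets and the local uniqueness propagates to a global one — to be the only real obstacle; everything else is bookkeeping with the coordinates and the lemmas already proven.
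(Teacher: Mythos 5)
Your proposal is correct, and its skeleton matches the paper's intent: the corollary is stated there without any explicit proof, as an immediate consequence of \cref{lem:mlma:InfluenceZoneBoundary1,lem:mlma:InfluenceZoneBoundary2} and \cref{cor:mlma:InfluenceZoneBoundary} (one closed boundary loop, hence no holes; monotone boundary arcs over the projected segment, hence no overlap). What you add beyond the paper is the part that actually deserves an argument: the injectivity of the vertical projection on $Z_{ij}$. The paper implicitly treats ``bounded by a monotone arc sequence'' as already implying ``projects without overlap,'' but since the side $S_i$ may span several layers and need not itself project injectively, one really does need your chain of reductions --- same projected point implies same perpendicular foot $\widehat{c}$, implies same $c$ by injectivity of the projection on the layer containing $C_{ij}$, implies equality of the two lifts by the local single-sheet property from direction-consistency plus the fact that the segments \linesegment{z_k}{c} are unobstructed and so avoid the boundary of \Efree\ in their interiors. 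Your separate ``connected fibre'' description of the image, which gives simple connectivity without relying on injectivity, is also a nice touch: it makes the ``no holes'' claim independent of the harder half of the statement. The only loose ends are the measure-zero boundary cases (points of $Z_{ij}$ on the endpoint normals, where $c$ is an endpoint of $C_{ij}$ and hence a boundary vertex rather than an interior point), which you can dispatch by taking closures/limits; you correctly identified the lifting step as the only place where real care is needed.
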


\subsubsection{Neighbor Set} \label{s:mlma:connectionproperties:neighborset}

Next, we determine which obstacles are required to update the medial axis inside $Z_{ij}$.
On one side $S_i$ of the connection, let $N_i$ be the set of all obstacle points that are nearest to at least one point on $\alpha(Z_i)$, 
excluding \interior{C_{ij}} itself.
(On the other side $S_j$, let $N_j$ be defined analogously.)
These are the obstacle points that (together with $C_{ij}$) generate the arcs in $\alpha(Z_i)$.
We exclude \interior{C_{ij}} from $N_i$ because we will be removing this interior as an obstacle.
We do explicitly include the \emph{endpoints} of $C_{ij}$ in $N_i$ because these will remain obstacles.

We define the \emph{neighbor set} $N_{ij}$ as the union of $N_i$ and $N_j$.
Informally, this set contains the `Voronoi neighbors' of the connection.
$N_{ij}$ consists of line segments and points on the boundary of \Efree.
These are not necessarily the complete original boundary segments, but only the parts that are actually relevant for $Z_{ij}$.
Note that the neighbors can originate from many different layers and that they can even be (parts of) other connections that are still closed.
A neighbor set is illustrated in \cref{fig:mlma-merge-2}.

We will now prove that $N_{ij}$ contains the obstacle points that define the medial axis in $Z_{ij}$ when \interior{C_{ij}} is removed.
Lemma \ref{lem:mlma:ConnectionSceneNecessity} proves that \emph{all} points of $N_{ij}$ are needed; 
\cref{lem:mlma:ConnectionSceneSufficiency} proves that \emph{no other} points are needed.

\begin{lemma} 
	When $C_{ij}$ is opened, every obstacle point in $N_{ij}$ is a nearest obstacle for at least one point in $Z_{ij}$.
	\label{lem:mlma:ConnectionSceneNecessity}
\end{lemma}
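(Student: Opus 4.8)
The plan is to argue by contradiction: suppose some obstacle point $p \in N_{ij}$ fails to be a nearest obstacle for any point in $Z_{ij}$ after $C_{ij}$ is opened, and derive a contradiction with the definition of $N_{ij}$. By construction, $p$ lies in (say) $N_i$, which means $p$ is a nearest obstacle point for some point $z$ on the boundary arc $\alpha(Z_i)$ \emph{before} opening, i.e.\ in \MedialAxisOld. So in the "closed" picture, $z$ has two equidistant nearest obstacles: $p$ and some point $c \in C_{ij}$ (this is precisely what it means for $z$ to lie on $\alpha(Z_i)$, the part of the medial axis generated by $C_{ij}$ on side $S_i$). The key observation is that $z \in Z_{ij}$ as well, since $Z_{ij}$ is the \emph{closure} of the set of points having \interior{C_{ij}} as a nearest obstacle, and $\alpha(Z_i)$ is exactly the relevant portion of that boundary.

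Next I would track what happens to $z$ when we open $C_{ij}$. By \cref{lem:mlma:InfluenceZoneOnly}, the medial axis changes only inside $Z_{ij}$, and by the definition of $Z_{ij}$, the nearest obstacle of $z$ can only change \emph{to} something that was previously farther. Before opening, the distance from $z$ to $p$ equals the distance from $z$ to $c \in \interior{C_{ij}}$ (using the projected distance $d_P$, and invoking \cref{prop:mlma:StraightLine} so that this distance is realized by the perpendicular segment \linesegment{z}{c} that lies inside $Z_i$ — this is the content of \cref{lem:mlma:InfluenceZoneBoundary2}). After opening, the point $c$ is no longer an obstacle, so $z$'s nearest-obstacle distance can only increase or stay the same, and in fact $p$ is still present at the same distance. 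Hence after opening, $p$ is a nearest obstacle for $z$ — unless some \emph{other} obstacle point $p'$ is strictly closer to $z$ than $p$ is. But such a $p'$ would already have been strictly closer than $c$, contradicting that $c$ was a nearest obstacle of $z$ in the closed picture. Therefore $p$ is a nearest obstacle for $z$ after opening, and $z \in Z_{ij}$, which is the desired conclusion.

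The main obstacle I anticipate is the boundary/degenerate case: a point $z \in \alpha(Z_i)$ that lies on an \emph{endpoint normal} of $C_{ij}$, where the "nearest point on $C_{ij}$" degenerates to an endpoint of $C_{ij}$, which is \emph{not} removed. Here one must be careful that $p \in N_i$ is genuinely witnessed by an interior point $c \in \interior{C_{ij}}$ and not only by an endpoint; the cleanest fix is to use \cref{lem:mlma:InfluenceZoneBoundary2}, which guarantees that $\alpha(Z_i)$ is an \emph{uninterrupted} monotone sequence of arcs, so that the subset of $\alpha(Z_i)$ realized against \interior{C_{ij}} is dense in $\alpha(Z_i)$; by continuity of the distance functions, any $p$ arising as a nearest obstacle on $\alpha(Z_i)$ is a nearest obstacle on a whole subarc, which contains points strictly between the endpoint normals. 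For those interior points the argument above goes through verbatim. A second, more mechanical subtlety is that all distances are \emph{projected} distances $d_P$, so "nearest obstacle" and "equidistant" must be read in that metric; but \cref{prop:mlma:StraightLine} and \cref{prop:mlma:EmptyCircle} tell us that locally (within the empty disk around $z$) the projected distance behaves like ordinary 2D Euclidean distance on the projection, and \cref{cor:mlma:InfluenceZonePlanar} says $Z_{ij}$ itself projects without overlap, so the entire argument can in fact be carried out in the planar projection of $Z_{ij}$, where it reduces to a standard fact about line-segment Voronoi diagrams.
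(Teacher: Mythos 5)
Your overall strategy is the same as the paper's: take a witness point $z$ on the boundary arc $\alpha(Z_i)$ for which $p$ is already a nearest obstacle while $C_{ij}$ is closed, note that $z \in Z_{ij}$ because $Z_{ij}$ is a closure, and argue that opening the connection cannot make any other obstacle strictly closer to $z$. Your extra care about endpoint-normal degeneracies and about working in projected distances is fine (if more than the paper bothers with).

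There is, however, a flaw in the one step that carries the whole lemma. You dismiss a competing obstacle $p'$ by saying it ``would already have been strictly closer than $c$'' in the closed picture. That reasoning only applies to obstacles whose distance to $z$ is \emph{unchanged} by opening $C_{ij}$. The whole point of opening a connection is that obstacles on the other side $S_j$ (previously reachable from $z$ only by detouring around $C_{ij}$, or not at all) can become dramatically closer to $z$ once paths may pass through \interior{C_{ij}}; for such a $p'$, ``strictly closer after opening'' does not imply ``strictly closer before opening,'' so there is no contradiction with $c$ having been nearest. The correct argument — and the one the paper compresses into ``opening the connection only exposes $z$ to obstacles that are farther away'' — is metric rather than monotonicity-of-the-obstacle-set: any shortest path from $z$ to a newly exposed $p'$ must pass through some point $c''$ of $C_{ij}$, hence has projected length at least $d_P(z,c'') \geq d_P(z,c) = d_P(z,p)$, so $p'$ can at best tie with $p$. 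With that substitution your proof goes through; without it, the key claim is unjustified precisely in the case the lemma is about.
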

\begin{proof}
	When the connection is still closed, every point $p \in N_{ij}$ is a nearest obstacle for at least one point $z$ on the \emph{boundary} of $Z_{ij}$, by definition.
	When $C_{ij}$ is opened, $p$ will still be nearest to $z$ because opening the connection only exposes $z$ to obstacles that are farther away.
	Hence, there remains at least one point in $Z_{ij}$ (namely $z$) for which $p$ is a nearest obstacle.
	This means that all points of $N_{ij}$ are required. 
\end{proof}

\begin{lemma} 
	When $C_{ij}$ is opened, $N_{ij}$ contains all possible nearest obstacle points for any point in $Z_{ij}$.
	\label{lem:mlma:ConnectionSceneSufficiency}
\end{lemma}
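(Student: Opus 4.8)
The plan is to argue by contradiction: suppose some point $z \in Z_{ij}$ has a nearest obstacle point $p$ after $C_{ij}$ is opened, where $p \notin N_{ij}$. I want to derive a contradiction with the definition of $N_{ij}$ as the set of obstacle points nearest to the boundary arcs $\alpha(Z_i) \cup \alpha(Z_j)$. The key geometric facts I would invoke are Corollary~\ref{cor:mlma:InfluenceZonePlanar} (so $Z_{ij}$ projects to a simply-connected planar region), Lemma~\ref{lem:mlma:InfluenceZoneBoundary1} and Lemma~\ref{lem:mlma:InfluenceZoneBoundary2} (so the boundary of $Z_{ij}$ consists of the two endpoint normals of $C_{ij}$ plus the two monotone arc sequences $\alpha(Z_i), \alpha(Z_j)$), and the straight-line/empty-circle properties (Properties~\ref{prop:mlma:StraightLine} and~\ref{prop:mlma:EmptyCircle}), which guarantee that the shortest path from $z$ to its nearest obstacle point is unobstructed and behaves locally like a 2D straight segment even across layers.

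First I would fix $z \in Z_{ij}$ and let $p$ be any nearest obstacle point to $z$ after opening, so $d_P(z,p) = d_P(z, \partial\Efree)$. By Property~\ref{prop:mlma:StraightLine} the shortest path $\pi^*(z,p)$ is unobstructed, hence projects to a line segment, and by Property~\ref{prop:mlma:EmptyCircle} the projected open disk of radius $d_P(z,p)$ around $z$ contains no obstacle points and meets $\Efree$ in a genuine disk. Now consider the segment $\sigma$ from $z$ toward $p$: since $z$ lies inside the (simply-connected, planar-when-projected) influence zone $Z_{ij}$ and $p$ lies on $\partial\Efree$, and since no obstacle point lies strictly inside the empty disk, the segment $\sigma$ must cross the boundary $\partial Z_{ij}$ before (or exactly when) it reaches $p$. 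I would then do a case analysis on which part of $\partial Z_{ij}$ is crossed.

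The case analysis has two cases. If $\sigma$ crosses one of the endpoint normals of $C_{ij}$, then beyond that normal the nearest obstacle is an endpoint $e$ of $C_{ij}$ (this is exactly the argument used in the proof of Lemma~\ref{lem:mlma:InfluenceZoneBoundary1}); but the endpoints of $C_{ij}$ are explicitly included in $N_i \subseteq N_{ij}$, and a standard empty-disk/continuity argument shows $e$ is then at least as close to $z$ as $p$ is, forcing $p \in N_{ij}$ after all — contradiction (or else $p = e \in N_{ij}$ directly). If instead $\sigma$ first crosses one of the arc sequences, say $\alpha(Z_i)$, at a point $z'$, then $z'$ lies on the medial axis bounding $Z_i$, so by definition $z'$ has a nearest obstacle point $p' \in N_i \subseteq N_{ij}$ with $d_P(z', p') = d_P(z', \interior{C_{ij}})$. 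Because $z'$ lies on the segment from $z$ to $p$, the empty-disk property at $z$ gives $d_P(z,p') \le d_P(z,z') + d_P(z',p') \le d_P(z,z') + d_P(z',p) \le d_P(z,p)$, where the middle inequality uses that at $z'$, being on the boundary of the zone, the distance to $\interior{C_{ij}}$ equals the distance to the rest of the boundary; so $p'$ is also a nearest obstacle point to $z$, and $p' \in N_{ij}$, which is what we wanted (showing some element of $N_{ij}$ realizes the minimum). Tightening this into "every nearest obstacle point of every $z \in Z_{ij}$ lies in $N_{ij}$" requires one more step: showing that the farther-away obstacle $p \notin N_{ij}$ cannot strictly beat $p'$, which follows from the empty-disk property applied at $z'$ (no obstacle strictly inside the disk of radius $d_P(z',\interior{C_{ij}})$ around $z'$, hence $p$ is not strictly closer than $p'$ to $z'$, hence not to $z$).

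The main obstacle I anticipate is the multi-layer bookkeeping: the segment $\sigma$ from $z$ to $p$, the arc $\alpha(Z_i)$, and the point $p$ may all live on different layers, so "the segment crosses the boundary" and "triangle inequality on $d_P$" must be justified via the empty-circle property rather than naive planar geometry. Concretely, I would need to argue that within the projected empty disk around $z$ the WE looks exactly like a flat 2D neighborhood (this is the content of Property~\ref{prop:mlma:EmptyCircle}), so that the segment $\sigma$ and the crossing point $z'$ are well-defined, and that $d_P$ restricted to this disk is just Euclidean distance in the projection — at which point the contradiction is the ordinary planar argument. Handling the boundary-of-zone equidistance relation ($d_P(z',\interior{C_{ij}}) = d_P(z',\text{rest of boundary})$ on $\alpha(Z_i)$) also needs care, but it is immediate from the definition of $\alpha(Z_i)$ as the set of medial-axis points that border $Z_i$ and have a nearest obstacle on $C_{ij}$.
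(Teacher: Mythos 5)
Your proposal is correct in substance but takes a genuinely different route from the paper. The paper's proof fixes a point in $Z_i$ and splits on which \emph{side} its new nearest obstacle lies: for side $S_i$ it asserts in one sentence that deleting \interior{C_{ij}} cannot promote a same-side obstacle outside $N_i$ to nearest status, and for side $S_j$ it argues that any obstacle of $S_j$ not in $N_j$ was already not closest to any point of $C_{ij}$, hence cannot be closest to a point \emph{beyond} $C_{ij}$. You instead trace the projected witness segment from $z$ to its alleged nearest obstacle $p$, locate where it first exits the boundary of $Z_{ij}$ (an endpoint normal, $\alpha(Z_i)$, or $\alpha(Z_j)$), and combine collinearity with the equidistance holding on that boundary component to produce $p' \in N_{ij}$ with $d_P(z,p') \le d_P(z,p)$. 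This is a uniform, metric version of the argument that actually supplies what the paper leaves implicit: the same-side case is no longer an assertion, and the paper's ``beyond $C_{ij}$'' claim is exactly your crossing-point argument with the crossing taken on $C_{ij}$ instead of on $\alpha(Z_j)$. The cost is the extra bookkeeping you already identify (the segment genuinely crosses $\partial Z_{ij}$, and $d_P(z,p) = d_P(z,z') + d_P(z',p)$ survives the multi-layered setting), both of which follow from Properties~\ref{prop:mlma:StraightLine} and~\ref{prop:mlma:EmptyCircle}. One step should be made explicit at the end: when your chain of inequalities collapses to equalities, $d_P(z',p) = d_P(z',p')$ makes $p$ a nearest obstacle point of the crossing point $z' \in \alpha(Z_i)$, and then $p \in N_i$ follows \emph{by the definition} of $N_i$ as the set of obstacle points nearest to some point of $\alpha(Z_i)$; that is precisely what upgrades ``some nearest obstacle of $z$ lies in $N_{ij}$'' to the lemma's claim that all of them do.
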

\begin{proof}
	We prove the lemma for $Z_i$; the proof for $Z_j$ is analogous.
	For any point $p \in Z_i$, the nearest obstacle points currently lie in $N_i$ or on $C_{ij}$, by definition.
	Removing the interior of $C_{ij}$ cannot cause other obstacle points on the same side $S_i$ to suddenly become nearest to $p$.
	The only remaining option is that an obstacle on the \emph{other} side $S_j$ becomes nearest to $p$.
	Such an obstacle must definitely lie in $N_j$: by definition, all other obstacle points of $S_j$ were already not closest to $C_{ij}$ itself, 
	so they cannot be closest to a point \emph{beyond} $C_{ij}$. 
	Therefore, all possible nearest obstacle points for $Z_i$ are included in $N_i$ and $N_j$. 
\end{proof}

\subsection{Opening a Connection} \label{s:mlma:opening}

To open a closed connection $C_{ij}$, we now know that we only need to update the medial axis inside the influence zone $Z_{ij}$.
\NewContentInline{Thus, to convert the current medial axis \MedialAxisOld\ into \MedialAxisNew\ with $C'' = C' \setminus \{C_{ij}\}$}, 
it is sufficient to \emph{only} compute $\MedialAxisNew \cap Z_{ij}$ and then replace $\MedialAxisOld \cap Z_{ij}$ by it.
We will refer to the new medial axis part, $\MedialAxisNew \cap Z_{ij}$, as $M_Z$ for convenience. 

Thus, our goal is to compute $M_Z$.
Lemmas \ref{lem:mlma:ConnectionSceneNecessity} and \ref{lem:mlma:ConnectionSceneSufficiency} guarantee that $M_Z$ is defined by the obstacles in the neighbor set $N_{ij}$.
An example of $M_Z$ is shown in \cref{fig:mlma-merge-3}.

\begin{lemma} \label{lem:mlma:OpeningTree}
	The medial axis $M_Z$ is a tree that can be projected onto $P$ without overlap.
\end{lemma}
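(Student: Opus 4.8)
The plan is to establish two things about $M_Z = \MedialAxisNew \cap Z_{ij}$: that it is connected and acyclic (hence a tree), and that it projects injectively onto $P$. The projection claim is the easier half, so I would dispatch it first. By \cref{cor:mlma:InfluenceZonePlanar}, the influence zone $Z_{ij}$ itself projects onto $P$ without overlap, as a single simply-connected region $\widehat{Z}$. Since $M_Z \subseteq Z_{ij}$, its projection lies inside $\widehat{Z}$ and inherits injectivity. Moreover, by Lemmas \ref{lem:mlma:ConnectionSceneNecessity} and \ref{lem:mlma:ConnectionSceneSufficiency}, the only obstacles relevant inside $Z_{ij}$ are those in the neighbor set $N_{ij}$, and \interior{C_{ij}} is gone; all of $N_{ij}$ also projects without overlap into a neighborhood of $\widehat{Z}$. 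So computing $M_Z$ is equivalent to computing (the restriction to $\widehat{Z}$ of) an ordinary 2D medial axis of the planar scene $\{N_{ij}\}$, and from here I can argue purely in the plane.

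For the tree property, I would use the standard fact that the medial axis of a simply-connected planar region is a tree (no cycles, since a cycle would enclose a bounded component of the complement, i.e.\ an obstacle, contradicting simple-connectedness), together with a connectivity argument tailored to the fact that we are intersecting with $\widehat{Z}$ rather than taking the whole medial axis. Concretely: \emph{acyclicity.} A cycle in $M_Z$ would bound a region inside $\widehat{Z}$; every point of that region has a nearest obstacle, and by the straight-line property (\cref{prop:mlma:StraightLine}) the segment to that nearest point stays in \Efree, so the nearest obstacle point lies outside the enclosed region — but then the enclosed region contains no obstacle and no boundary arc, which is impossible for a bounded face of a medial-axis-type graph. \emph{Connectivity.} Here I would use \cref{cor:mlma:InfluenceZoneBoundary}: the boundary of $Z_{ij}$ is one closed loop made of $\alpha(Z_i)$, $\alpha(Z_j)$, and the two endpoint normals, where $\alpha(Z_i),\alpha(Z_j)$ are themselves medial-axis arcs (of the old diagram, defined by $C_{ij}$ together with $N_{ij}$). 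When $C_{ij}$ opens, the points on $\alpha(Z_i)$ that were equidistant to $C_{ij}$ and some $p\in N_i$ become equidistant to $p$ and whatever obstalce across the connection is now nearest — i.e.\ $M_Z$ attaches to the boundary $\partial Z_{ij}$ exactly along $\alpha(Z_i)\cup\alpha(Z_j)$, on both sides, and a disconnected piece would force a second boundary loop, contradicting the corollary. I would phrase this as: $M_Z$ separates $\widehat{Z}$ into faces each bordered partly by $\alpha(Z_i)$ and partly by $\alpha(Z_j)$ (the two "sides"), and since there is a single outer loop and no inner loops, the dual adjacency forces $M_Z$ to be connected.

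The main obstacle is the connectivity argument, not the acyclicity one. Acyclicity is essentially the textbook simply-connected-region fact, transported via \cref{cor:mlma:InfluenceZonePlanar}. Connectivity is subtler because $M_Z$ is a \emph{restriction} of the new medial axis to $Z_{ij}$, not the medial axis of a standalone region, so I must rule out $M_Z$ splitting into several components each hugging a different stretch of $\alpha(Z_i)$ with no path between them through the interior. The cleanest way I see is: (i) show every point of $M_Z$ lies on a path within $M_Z$ reaching $\alpha(Z_i)\cup\alpha(Z_j)$ (trace the medial axis toward decreasing clearance or toward the connection), and (ii) show the attachment set of $M_Z$ to $\partial Z_{ij}$ is a connected portion of the single loop — this is where \cref{lem:mlma:InfluenceZoneBoundary2}'s monotonicity of $\alpha(Z_i)$ and $\alpha(Z_j)$ with respect to the line supporting $C_{ij}$ does the work, since it forces the "gluing" of the two sides to happen along a single monotone interface rather than in scattered patches. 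Combining (i) and (ii) with acyclicity yields that $M_Z$ is a tree.
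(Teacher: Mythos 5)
Your proposal is correct and follows essentially the same route as the paper: both halves rest on \cref{cor:mlma:InfluenceZonePlanar} ($Z_{ij}$ is a single hole-free shape that projects injectively onto $P$), with acyclicity from the absence of holes, connectivity from the single boundary loop, and the projection claim from $M_Z \subseteq Z_{ij}$. The paper's own proof states the connectivity step in one sentence; your elaboration via the attachment of $M_Z$ to $\alpha(Z_i)\cup\alpha(Z_j)$ is a more detailed version of the same idea, not a different argument.
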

\begin{proof}
	$M_Z$ can only contain cycles if $Z_{ij}$ contains holes; otherwise, there are no obstacles to circumnavigate.
	Furthermore, $M_Z$ can only consist of multiple connected components if $Z_{ij}$ is consists of multiple disconnected shapes.
	Corollary \ref{cor:mlma:InfluenceZonePlanar} states that $Z_{ij}$ is a single shape without holes. 
	Therefore, $M_Z$ is a single tree.
	
	Corollary \ref{cor:mlma:InfluenceZonePlanar} also states that $Z_{ij}$ is non-overlapping when projected onto $P$. 
	Because $M_Z$ lies entirely inside $Z_{ij}$, it can be projected onto $P$ without overlap as well.
\end{proof}

In previous work \cite{vanToll2011-MultiLayered}, we computed $M_Z$ by projecting all obstacles of $N_{ij}$ onto the ground plane $P$ 
and computing the 2D medial axis of this projection.
This is equivalent to a deletion of a site from a 2D Voronoi diagram \cite{Devillers1999-VoronoiDeletion}; 
the algorithm takes \BigO{m \log m} time where $m$ is the complexity of $N_{ij}$. 
Also, the algorithm is easy to implement by using any available library for Voronoi diagrams in 2D.  
We therefore still use it in our current implementation (\cref{s:implementation}).

However, due to the multi-layered structure of \Env, this algorithm does not work in all environments. 
A single common projection onto $P$ may cause an obstacle of $N_{ij}$ to influence parts of $Z_{ij}$ to which it is actually not closest.
\cref{fig:mlma-overlap} shows an example in which the obstacles $N_i$ on side $S_i$ cannot be treated as a planar set. 

\begin{figure*}[htb]
	\centering
	\subfigure[3D view \label{fig:mlma-overlap-3d}]{
		\includegraphics[width=0.315\textwidth]{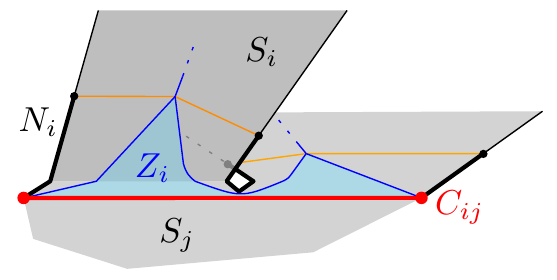}
	}%
	\subfigure[Projection onto $P$ \label{fig:mlma-overlap-2d}]{
		\includegraphics[width=0.315\textwidth]{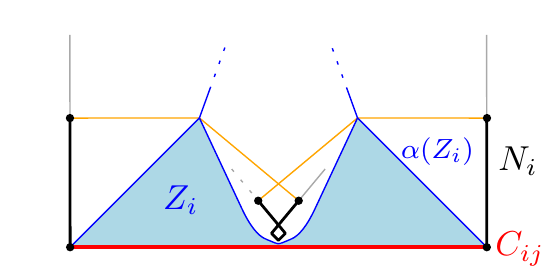}
	}%
	\subfigure[Projection of $N_i$ lifted to 3D \label{fig:mlma-overlap-3d-obstacles}]{
		\includegraphics[width=0.315\textwidth]{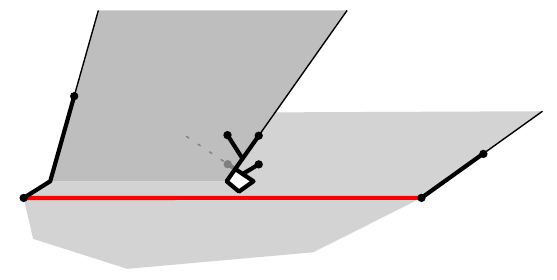}
	}
	\caption{Example in which projecting the entire neighbor set onto the ground plane $P$ leads to problems.
	\NiceSubref{fig:mlma-overlap-3d} One side $S_i$ of a connection $C_{ij}$ contains a ramp and a flat surface. 
	(In this view, the flat surface is partly occluded by the ramp. The occluded boundary part is shown in dotted gray.)
	$N_i$ (shown in bold) contains obstacle points from both parts.	
	\NiceSubref{fig:mlma-overlap-2d} A projection of the same situation onto $P$. 
	\NiceSubref{fig:mlma-overlap-3d-obstacles} If we project all of $N_i$ onto $P$ at the same time, 
	we effectively treat the points of $N_i$ as obstacles in all surfaces. 
	This will yield an incorrect medial axis for $Z_i$.
	\label{fig:mlma-overlap}}
\end{figure*}

\noindent
We now propose an improved algorithm that uses projected distances \emph{without} explicitly projecting all of $N_i$ (or $N_j$) onto $P$ at the same time. 
Our new approach starts at the boundary of $Z_{ij}$ and traces the medial axis from there, based on the obstacles that are locally nearest.
This avoids the problem of \cref{fig:mlma-overlap}.
The new approach is outlined in \cref{fig:mlma-opening}: \cref{fig:mlma-opening-before} shows the current situation with $C_{ij}$ closed, 
and the other subfigures represent the algorithm for opening $C_{ij}$.

Let $M_{Z,i}$ be $M_Z$ under the assumption that there are no obstacles in $N_j$ and that $Z_j$ extends to infinity.
Thus, $M_{Z,i}$ is defined solely by the obstacles of $N_i$.
By the same arguments as before, 
the version of $Z_{ij}$ in which $Z_j$ extends to infinity is a simple shape when projected onto $P$ (Corollary \ref{cor:mlma:InfluenceZonePlanar}), 
and $M_{Z,i}$ is a tree that does not overlap in $P$ either (Lemma \ref{lem:mlma:OpeningTree}).
An example is shown in \cref{fig:mlma-opening-i}.

Let $M_{Z,j}$ be defined analogously (\cref{fig:mlma-opening-j}).
We compute $M_{Z,i}$ and $M_{Z,j}$ separately and then merge them to obtain $M_Z$ (\cref{fig:mlma-opening-after}).

\begin{figure*}[htb]
	\centering
	\subfigure[Before opening \label{fig:mlma-opening-before}]{
		\includegraphics[width=0.235\textwidth]{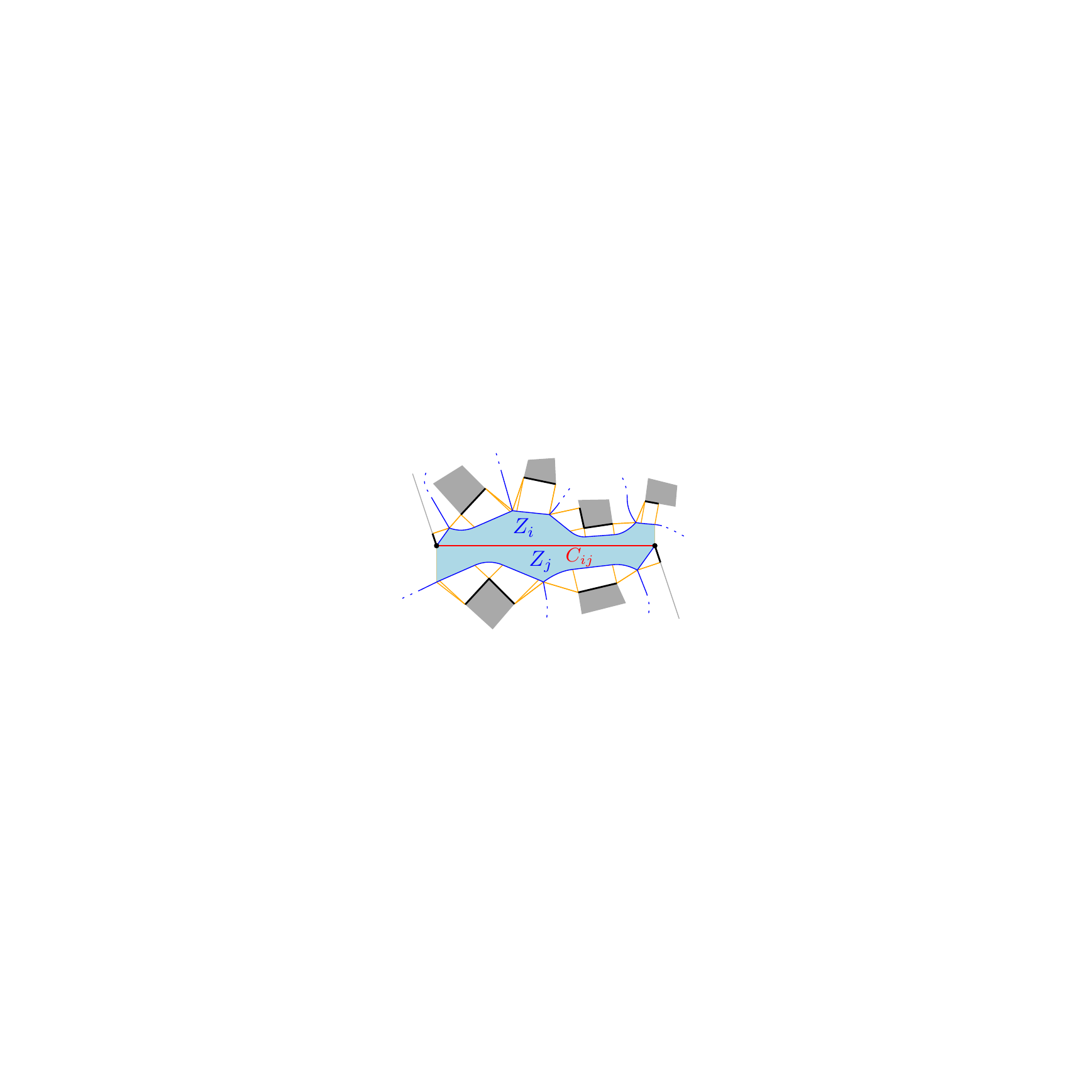}
	}%
	\subfigure[$M_{Z,i}$ \label{fig:mlma-opening-i}]{
		\includegraphics[width=0.235\textwidth]{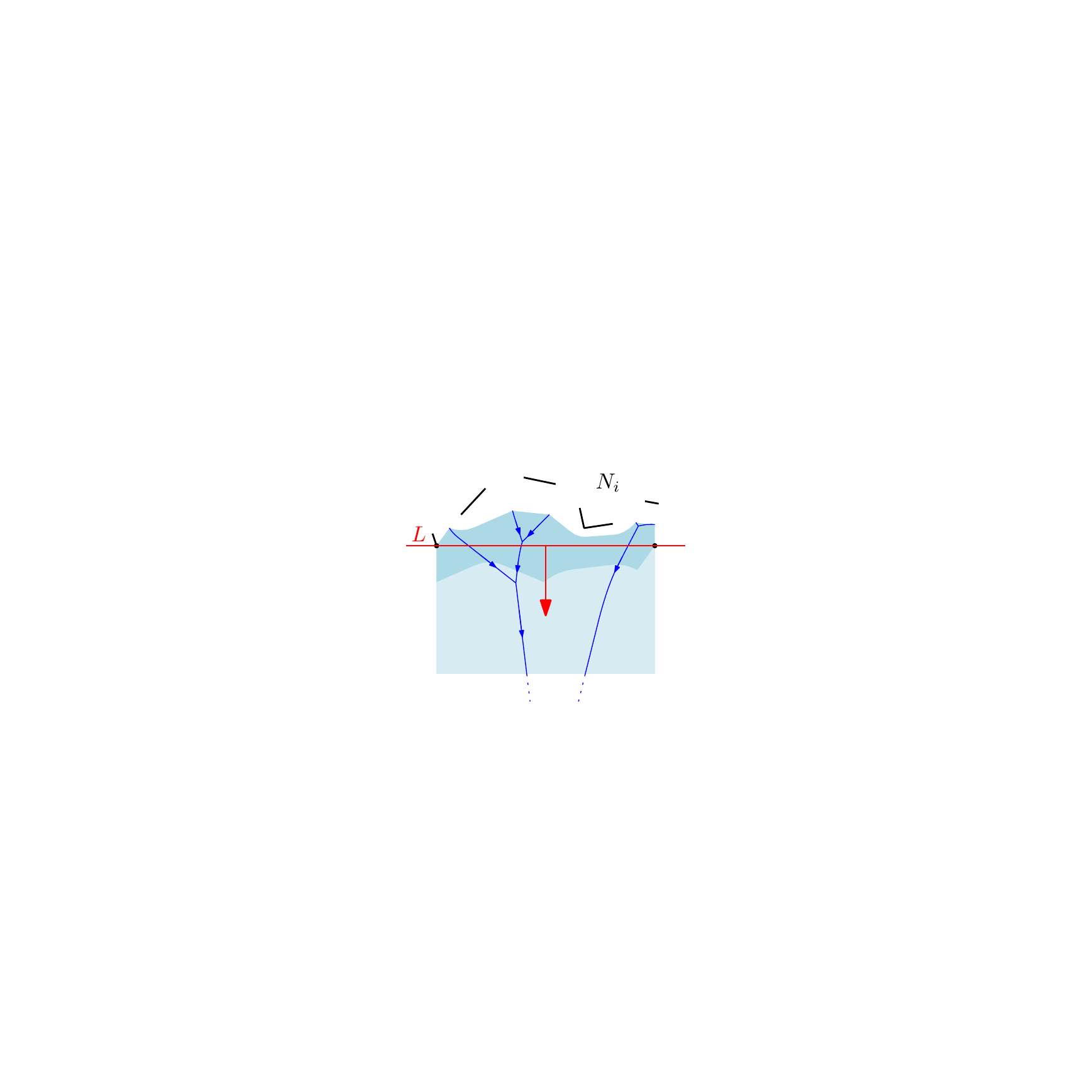}
	}%
	\subfigure[$M_{Z,j}$ \label{fig:mlma-opening-j}]{
		\includegraphics[width=0.235\textwidth]{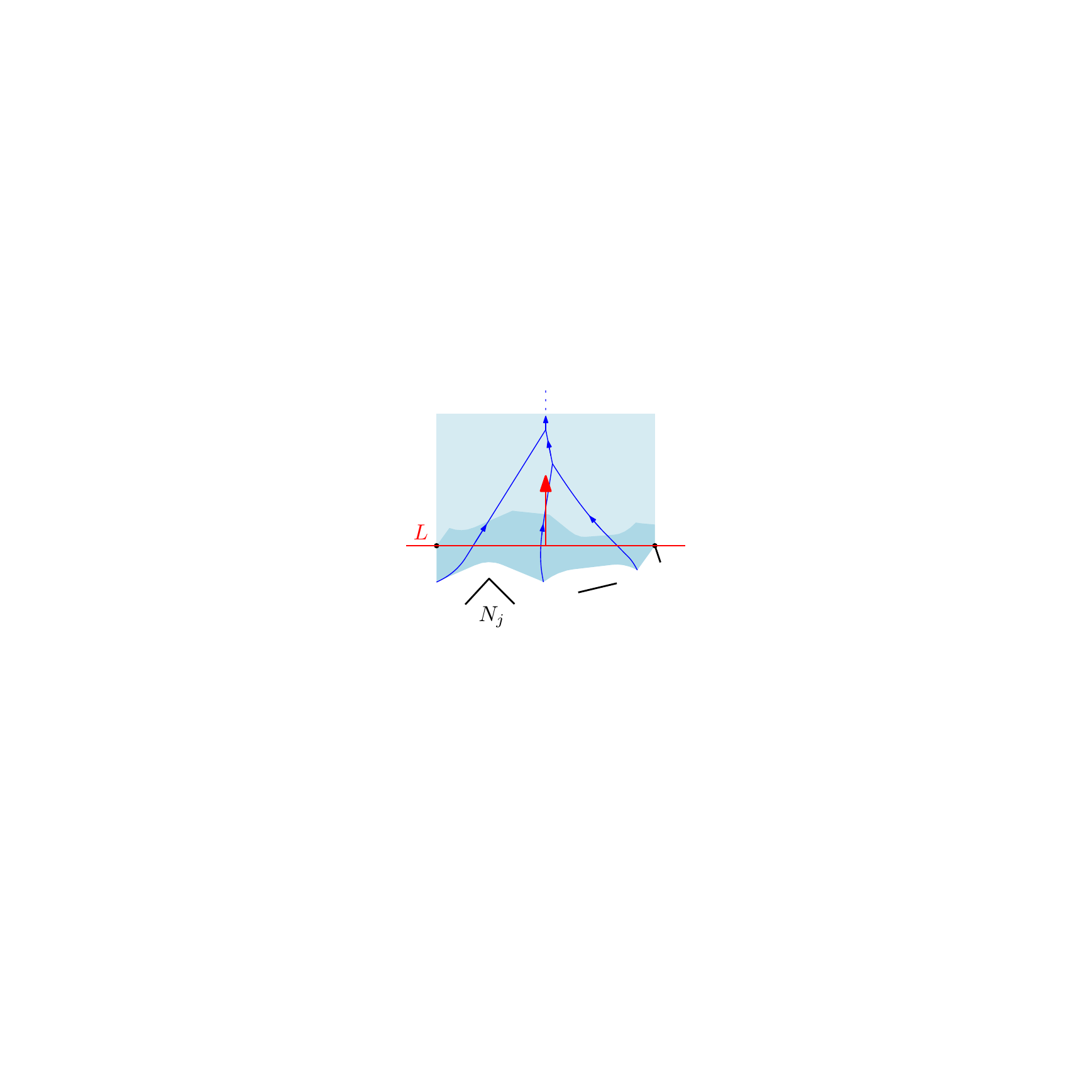}
	}%
	\subfigure[Merging into $M_Z$ \label{fig:mlma-opening-after}]{
		\includegraphics[width=0.235\textwidth]{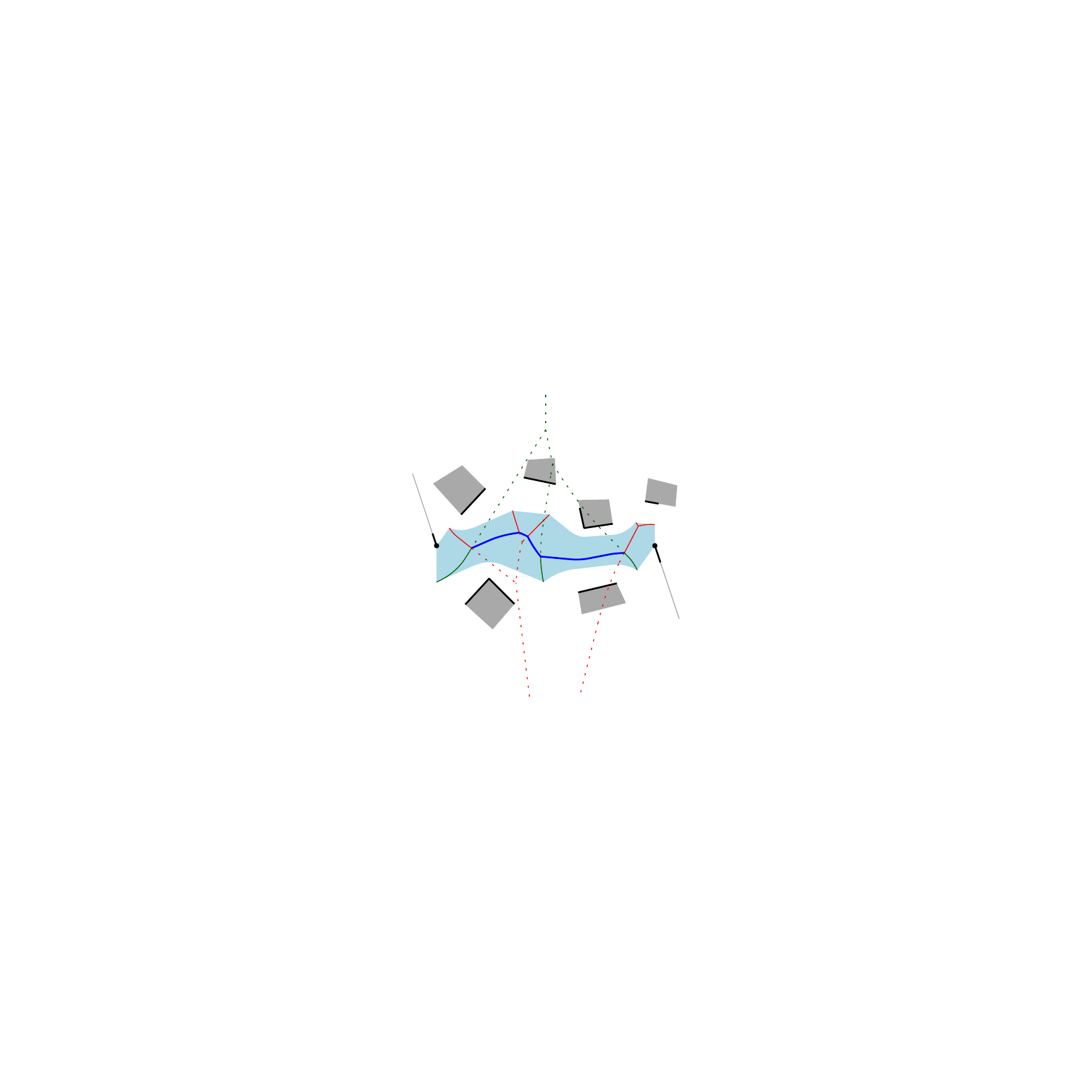}
	}
	\caption{We compute the medial axis $M_Z$ in three steps.
	\NiceSubref{fig:mlma-opening-before} The medial axis when $C_{ij}$ is still closed. 
	\NiceSubref{fig:mlma-opening-i} $M_{Z,i}$ uses only the obstacles of $N_i$ and assumes that $Z_j$ extends to infinity. 
	We compute it using a plane sweep on the ground plane $P$, starting with the sweep line $L$ at $C_{ij}$,
	\emph{without} explicitly projecting all of $N_i$ onto $P$ at the same time.
	\NiceSubref{fig:mlma-opening-j} Analogously, $M_{Z,j}$ uses only $N_j$.
	\NiceSubref{fig:mlma-opening-after} We merge the two parts to obtain $M_Z$.
	\label{fig:mlma-opening}}
\end{figure*}

\subsubsection{Computing a Single Part} \label{s:mlma:opening:parts}

To compute $M_{Z,i}$, we use the \emph{plane sweep algorithm} by Fortune \shortcite{Fortune1987-Voronoi}, 
which traces a Voronoi diagram (VD) by moving a horizontal sweep line $L$ downwards. 
This algorithm is defined for sites in 2D, but we will show how to apply it to our multi-layered problem.

A thorough analysis of Fortune's algorithm has been given by de Berg \etal \shortcite{deBerg2008-CompGeom}.
We will repeat the most important features.
The algorithm maintains an $x$-monotone `beach line' consisting of bisector arcs; each arc is defined by the sweep line $L$ and an input site above $L$.
The endpoints of these beach line arcs (which are referred to as `break points') are the centers of the largest empty disks in the environment that are tangent to $L$.
The VD below the beach line is yet to be determined.
As the sweep line moves downwards, the beach line changes, and its break points trace the edges of the VD.
There are two types of events: \emph{site} events when $L$ reaches a new site, 
and \emph{circle} events when $L$ reaches the lowest point of a circle through three sites defining adjacent arcs on the beach line.
Each event indicates that a site starts or stops generating a particular arc on the beach line.
\bigskip

\noindent
We apply Fortune's algorithm to our multi-layered problem by initializing the algorithm 
in such a way that all \emph{site} events are already handled and all \emph{circle} events can be processed just as in 2D. 
Assume without loss of generality that $C_{ij}$ is horizontal and that $Z_i$ lies above it. 
We start with a sweep line at the height of $C_{ij}$ and initialize the beach line as the sequence of arcs $\alpha(Z_i)$.
Lemma \ref{lem:mlma:InfluenceZoneBoundary2} states that this beach line is $x$-monotone, given that $C_{ij}$ is horizontal.
By the same argument, it will remain $x$-monotone during the sweep. 
After initialization, we move the sweep line downwards, and the algorithm proceeds exactly as if we were working in 2D.

The essential difference from a 2D problem is that the beach line now represents empty disks in the MLE and not on a single plane.
To explain this further, \cref{fig:mlma-sweep-overlap} shows the initial sweep line situation for the self-overlapping environment of \cref{fig:mlma-overlap}. 
An empty disk on the beach line is highlighted in blue.
If we would project all of $N_i$ onto $P$ at the same time (as in our old algorithm \cite{vanToll2011-MultiLayered}), 
then this disk would suddenly contain obstacles from other layers. 
However, these obstacles are \emph{not nearest obstacles} according to our distance function $d_P$. 
This is why the old algorithm failed: it did not respect the empty-circle property of an MLE. 
The new algorithm succeeds because it \emph{does} respect this property.

\begin{figure}[htb]
	\centering
	\includegraphics[width=0.4\textwidth]{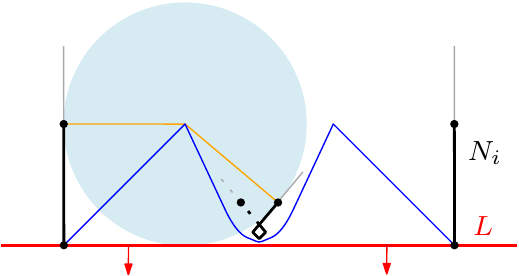}
	\caption{The environment from \cref{fig:mlma-overlap-2d} when the sweep algorithm begins. 
	A disk on the beach line may contain obstacles from other layers when projected onto $P$, but these are not \emph{nearest} obstacles. 
	\label{fig:mlma-sweep-overlap}}
\end{figure}

Each individual event in the sweep algorithm relies only on a point on the beach line and its nearest obstacles.
Due to the straight-line and empty-circle properties given in \cref{s:mlma:definition}, 
an event can be projected onto $P$, and its geometric computations will then work exactly as in 2D: 
disks are still disks, and paths to nearest obstacles are still straight line segments.
The overall algorithm is a combinatorial sequence of events, so it does not require a projection of all events onto $P$ at the same time.
Therefore, the algorithm is not affected by the multi-layered structure of $N_i$.

We will now explain further which events occur and how they can be processed.
\bigskip

\noindent \textit{Site events:} When the sweep begins, the endpoints of $C_{ij}$ lie exactly on the sweep line. 
Both endpoints induce a site event that needs to be processed immediately. 
The following lemma implies that all other sites lie above $C_{ij}$, so there are no other site events.

\begin{lemma} \label{lem:mlma:HalfPlane}
	Each point of $N_i$ either lies above $C_{ij}$ or is an endpoint of $C_{ij}$.
\end{lemma}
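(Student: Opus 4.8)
The plan is to argue by contradiction using the geometric characterisation of the influence zone and the neighbour set established in \cref{s:mlma:connectionproperties}. Suppose some point $p \in N_i$ lies strictly below the line supporting $C_{ij}$ (recall we assume $C_{ij}$ horizontal with $Z_i$ above it) and is not an endpoint of $C_{ij}$. By the definition of $N_i$, the point $p$ is a nearest obstacle point for some point $z$ on $\alpha(Z_i)$, the boundary arc sequence of $Z_i$. First I would recall from \cref{lem:mlma:InfluenceZoneBoundary2} that $z$ is connected to its nearest point $c$ on $C_{ij}$ by a segment $\linesegment{z}{c}$ perpendicular to $C_{ij}$, so $c$ lies at the same projected distance from $z$ as $p$ does, i.e.\ $d_P(z,p) = d_P(z,c)$. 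The key geometric observation is then that $p$, lying below the supporting line of $C_{ij}$ while $z$ lies on or above it (since $Z_i$ is on the $S_i$ side, above $C_{ij}$), forces the shortest path $\pi^*(z,p)$ to cross $C_{ij}$ --- but $C_{ij}$ is currently a closed, impassable obstacle, which is a contradiction.

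Making that crossing argument rigorous is what I expect to be the main obstacle, because the environment is multi-layered: "below the supporting line of $C_{ij}$" is a statement about the \emph{projection} onto $P$, and the path $\pi^*(z,p)$ lives on the walkable surfaces, not in the plane. The clean way to handle this is to use \cref{prop:mlma:StraightLine}: the shortest path from $z$ to its nearest obstacle point $p$ is unobstructed, so its projection onto $P$ is a single straight segment from the projection of $z$ to the projection of $p$. Since the projection of $z$ is on the $Z_i$ side of the endpoint normals (between them, by \cref{lem:mlma:InfluenceZoneBoundary1}) and on or above the supporting line of $C_{ij}$, while the projection of $p$ is strictly below that line and still between the endpoint normals (otherwise $p$ could not be a nearest obstacle to a point of $Z_i$, again by \cref{lem:mlma:InfluenceZoneBoundary1}), the projected segment $\overline{z\,p}$ must cross the projected segment of $C_{ij}$ at an interior point. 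Lifting back to the surface, the unobstructed path $\pi^*(z,p)$ therefore passes through a point of $\interior{C_{ij}}$. But $C_{ij}$ is treated as an impassable obstacle at this stage of the algorithm, so no path through $\Efree$ --- in particular no unobstructed one --- may contain an interior point of $C_{ij}$; contradiction.

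The only remaining case is that the projection of $p$ lies exactly \emph{on} the supporting line of $C_{ij}$ but outside the closed segment $C_{ij}$ itself (so $p$ is "level with" the connection but off to one side). I would rule this out the same way: such a $p$ lies outside or on the endpoint normals on the far side, so by \cref{lem:mlma:InfluenceZoneBoundary1} it cannot be a nearest obstacle to any point of the influence zone, hence not to $z \in \alpha(Z_i)$; alternatively, the projected segment $\overline{z\,p}$ still meets $\interior{C_{ij}}$ unless $p$ coincides with an endpoint of $C_{ij}$, which is the excluded case. Collecting the cases, every point of $N_i$ either lies (in projection, and hence in the WE) above $C_{ij}$ or is one of its two endpoints, which is the claim. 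I would note in passing that the symmetric statement for $N_j$ holds by the identical argument with the roles of the two sides exchanged, which is what the sweep algorithm of \cref{s:mlma:opening:parts} needs: all sites of the half-plane problem for $M_{Z,i}$ lie on one side of the initial sweep line $L$, so only the two endpoint site events must be handled at initialisation.
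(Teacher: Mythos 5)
Your argument has a genuine gap in the step that makes the crossing argument work. You claim that the projection of $p$ must lie between the endpoint normals of $C_{ij}$, ``otherwise $p$ could not be a nearest obstacle to a point of $Z_i$, again by \cref{lem:mlma:InfluenceZoneBoundary1}.'' That lemma constrains where the \emph{influence zone} $Z_{ij}$ lies, not where the nearest obstacles of its points lie; the neighbor set $N_{ij}$ routinely extends beyond the endpoint normals (this is visible in \cref{fig:mlma-merge-2}, where the bold obstacle chain continues past the normals). Once $p$ is allowed to sit below the supporting line $C$ but outside the strip between the normals, the projected segment $\overline{z\,p}$ crosses $C$ outside the segment $C_{ij}$, so it never meets $\interior{C_{ij}}$ and no contradiction with impassability arises. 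Your edge-case paragraph repeats the same misapplication, so this configuration is simply not covered.

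The irony is that your own setup already contains a one-line correct finish that makes the crossing argument unnecessary. You observe that $d_P(z,p) = d_P(z,c)$ where $c$ is the foot of the perpendicular from $z \in \alpha(Z_i)$ onto $C_{ij}$; that common value is the perpendicular distance $y_z$ of $z$ to the supporting line. But any point strictly below that line is at projected distance strictly greater than $y_z$ from $z$ (its projection differs from $z$'s by more than $y_z$ in the direction normal to $C_{ij}$ alone), so it cannot realize the nearest-obstacle distance of $z$; and the only point on the line at distance exactly $y_z$ is $c$ itself, which lies on $C_{ij}$ and is therefore either excluded from $N_i$ (interior) or an endpoint. This is also essentially how the paper avoids your difficulty: rather than arguing that the path must pierce $C_{ij}$, it compares the length of $\pi^*(z,q)$ against the unobstructed path to the nearest \emph{endpoint} of $C_{ij}$, splitting into the case where the projected segment misses $C_{ij}$ (pure length comparison, which handles exactly the off-to-the-side obstacles your argument drops) and the case where the path must bend around an endpoint. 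Either repair is fine; as written, your proof does not go through.
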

\begin{proof}
	We will prove that any obstacle point on side $S_i$ that does \emph{not} lie above $C_{ij}$ cannot be in $N_i$. 
	By definition, the interior of $C_{ij}$ is \emph{excluded} from $N_i$, and the endpoints of $C_{ij}$ are \emph{included}. 
	Thus, we only need to consider the \emph{other} obstacle points of $S_i$.
	
	Recall that $C_{ij}$ is still a closed obstacle when the set $N_i$ is determined. 
	This means that paths cannot yet go through $C_{ij}$.
	Let $q$ be any obstacle point on side $S_i$ that lies below or on the horizontal line $C$ through $C_{ij}$.
	Let $z$ be an arbitrary point in $Z_i$, and let $c$ be the endpoint of $C_{ij}$ that is nearest to $z$. 
	Note that the shortest path $\pi^*(z,c)$ is unobstructed.
	We can prove that $\pi^*(z,q)$ is always longer:
	\begin{itemize}
		\item If the line segment \linesegment{z}{q}\ does not intersect $C_{ij}$ when projected onto $P$, then the shortest path $\pi^*(z,q)$ is at best unobstructed, 
		so it is at least as long as \linesegment{z}{q}. 
		See \cref{fig:multilayered-proof-halfplane-a}.
		\item Otherwise, $\pi^*(z,q)$ must navigate around $C_{ij}$, so it is at best a sequence of two line segments that bends around $c$ (or the other endpoint of $C_{ij}$).
		See \cref{fig:multilayered-proof-halfplane-b}. 
	\end{itemize}
	In both cases, $\pi^*(z,q)$ is clearly longer than $\pi^*(z,c)$.
	Therefore, $q$ cannot be nearest to any point in $Z_i$, and $q$ cannot occur in $N_i$.
\end{proof}

\begin{figure}[htb]
	\centering
	\subfigure[Straight-line path \label{fig:multilayered-proof-halfplane-a}]{
		\includegraphics[width=0.4\textwidth]{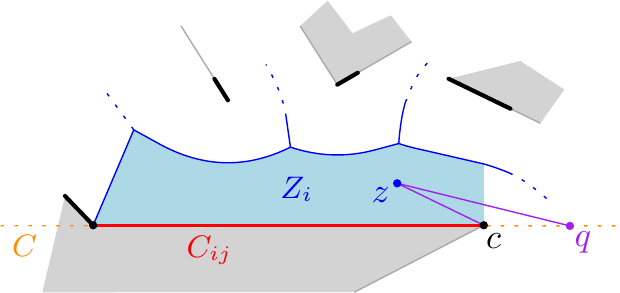}
	}
	\hspace{5mm}
	\subfigure[Path that bends around $C_{ij}$ \label{fig:multilayered-proof-halfplane-b}]{
		\includegraphics[width=0.4\textwidth]{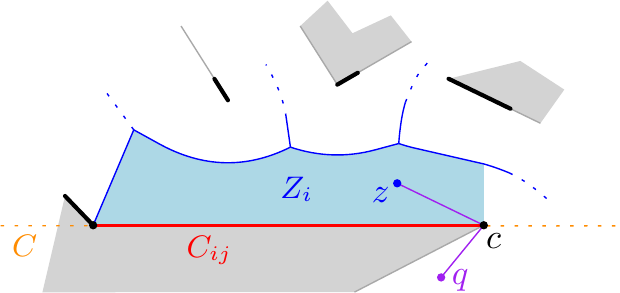}
	}
	\caption{One side $S_i$ of a horizontal connection $C_{ij}$. 
	An arbitrary point in $z \in Z_i$ is highlighted.
	Any obstacle point $q$ below or on the supporting line $C$ of $C_{ij}$ cannot belong to the neighboring obstacles $N_i$. 
	\NiceSubref{fig:multilayered-proof-halfplane-a} The case in which $\pi^*(z,q)$ does not navigate around $C_{ij}$.
	\NiceSubref{fig:multilayered-proof-halfplane-b} The case in which $\pi^*(z,q)$ navigates around $C_{ij}$.
	In both cases, an endpoint $c$ of $C_{ij}$ will be closer to $z$ than $q$ is.
	Therefore, $q$ cannot be in $N_i$.
	\label{fig:multilayered-proof-halfplane}}
\end{figure}

\noindent \textit{Circle events:} Initially, the potential circle events can be obtained by inspecting all 3-tuples of adjacent arcs in $\alpha(Z_i)$, just as in 2D.
Because we look at adjacent arcs only, we will only trace Voronoi edges of obstacles that are actually Voronoi neighbors in the MLE. 
Since $M_{Z,i}$ will be a tree and $Z_{ij}$ is planar when projected onto $P$, only adjacent Voronoi edges traced on the beach line can meet in a Voronoi vertex.
Therefore, all circle events are discovered.

The effect of a circle event is the same as in 2D.
Two of the empty disks on the beach line merge into one, a site locally disappears as a nearest site, and an arc disappears from the beach line. 
In terms of the VD, two Voronoi edges merge into a Voronoi vertex, and a new Voronoi edge starts being traced.
The disappearance of an arc from the beach line induces two new 3-tuples of adjacent arcs. 
The circles tangent to the corresponding 3-tuples of sites may induce new circle events below the sweep line.
These new events will be added to the event queue (sorted by $y$ coordinate).
\bigskip

\noindent In summary, we generate the initial events using $\alpha(Z_i)$ as the beach line. 
After that, each individual event (i.e.\ each change of a nearest site) can be processed exactly as in 2D. 
Therefore, all events are recognized, and the algorithm correctly computes $M_{Z,i}$.

As shown in \cref{fig:mlma-opening-i}, we end up tracing a tree of medial axis arcs, starting at the leaves and moving towards the root as we sweep downwards.
By Lemma \ref{lem:mlma:HalfPlane}, the endpoints of $C_{ij}$ are the lowest sites, 
so the final event occurs when the endpoints of $C_{ij}$ become the only remaining nearest obstacles to the sweep line.
These endpoints generate an infinite final edge that is perpendicular to $C_{ij}$.

\subsubsection{Merging the Two Parts} \label{s:mlma:opening:merging}

We compute $M_{Z,j}$ similarly to $M_{Z,i}$, but by starting with $\alpha(Z_j)$ as the beach line and moving the sweep line upwards instead of downwards. 
Next, we merge $M_{Z,i}$ and $M_{Z,j}$ to obtain $M_{Z}$. 
We do this by using the merge procedure for Voronoi diagrams from Shamos and Hoey \shortcite{Shamos1975-Voronoi}. 
The merge procedure traverses the Voronoi cells of $M_{Z,i}$ and $M_{Z,j}$ simultaneously and builds a new monotone sequence of Voronoi edges between them. 
Afterwards, it removes the parts of $M_{Z,i}$ and $M_{Z,j}$ that are no longer needed. 
\begin{NewContent}
\cref{fig:mlma-opening-merging} shows an adapted version of \cref{fig:mlma-opening-after} in which the result is easier to see.

\begin{figure}[htb]
	\centering
	\includegraphics[width=0.4\textwidth]{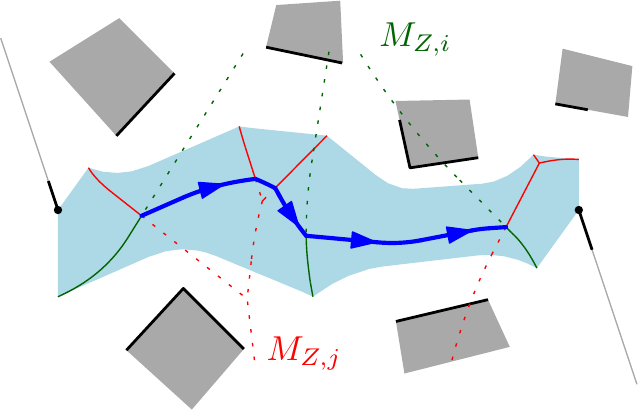}
	\caption{Merging the medial axes $M_{Z,i}$ (green) and $M_{Z,j}$ (red) to obtain $M_Z$. 
	The result is a combination of green edges, red edges, and a monotone sequence of newly generated edges (shown in bold blue) that can be computed from left to right.
	\label{fig:mlma-opening-merging}}
\end{figure}

The merge procedure is originally defined in 2D, but it relies on two main requirements that are also met in our multi-layered version of the problem.
The first requirement is that $M_{Z,i}$ and $M_{Z,j}$ must be planar; \end{NewContent} 
we have shown in \cref{s:mlma:opening:parts} that this requirement is fulfilled.
The second requirement is that $N_i$ and $N_j$ must lie in separate half-planes.
Lemma \ref{lem:mlma:HalfPlane} implies that this holds: $N_i$ and $N_j$ are separated by $C$.
The only exceptions are the endpoints of $C_{ij}$ at which the merge starts and ends.

In each step of the merge procedure, there is one nearest site (a point or a line segment) $n_i \in N_i$ and one nearest site $n_j \in N_j$, 
and the new Voronoi edge is the bisector of $n_i$ and $n_j$.
This bisector arc ends when either of the nearest sites changes.
Because the medial axes $M_{Z,i}$ and $M_{Z,j}$ are correct, they represent for all points in $Z_{ij}$ the nearest sites from $N_i$ and $N_j$, respectively. 
They therefore store all the information required to detect the changes in nearest sites.
Furthermore, the computations in each step work exactly as in 2D due to the straight-line and empty-circle properties.
Thus, the merge procedure \cite{Shamos1975-Voronoi} is not affected by the potential multi-layered nature of $N_{ij}$, and it correctly computes $M_Z$.

\subsubsection{Summary}

We now summarize our algorithm for opening a connection.
Let \Env\ be a multi-layered environment, and let \MedialAxisOld\ be the medial axis computed so far, 
in which a non-empty set of connections $\ConnectionSetOld \subseteq \mathcal{C}$ is closed.
We open a connection $C_{ij} \in \ConnectionSetOld$ to obtain \MedialAxisNew, where $\ConnectionSetNew = C' \setminus \{C_{ij}\}$.
Opening $C_{ij}$ works as follows:
\begin{enumerate}
	\item Remove the arcs from \MedialAxisOld\ that are nearest to the interior of $C_{ij}$. 
	These are the arcs of \MedialAxisOld\ that bound the influence zone $Z_{ij}$ described in \cref{s:mlma:connectionproperties}.
	\item Compute $M_Z = \MedialAxisNew \cap Z_{ij}$ as described in \cref{s:mlma:opening:parts,s:mlma:opening:merging}. 
	\item Insert $M_Z$ into \MedialAxisOld\ to obtain \MedialAxisNew.
\end{enumerate}

\subsection{Algorithm Correctness}

The overall construction algorithm outlined in \cref{s:mlma:algorithm-outline} first computes the medial axis with all connections as closed obstacles. 
It then iteratively opens a closed connection, using the algorithm from \cref{s:mlma:opening}, until all connections are open.
The following theorem states that this algorithm correctly computes the multi-layered medial axis:

\begin{theorem}
	Let \Env\ be an MLE.
	Computing \MedialAxis{\Env,\mathcal{C}}\ and then iteratively opening each connection in $\mathcal{C}$ as described in \cref{s:mlma:opening} 
	yields the medial axis \MedialAxis{\Env}.
\end{theorem}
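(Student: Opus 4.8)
The plan is to prove correctness by establishing an invariant maintained through the iterative process, combined with the per-step correctness already developed in \cref{s:mlma:opening}. Concretely, I would define the invariant: after processing a set of connections so that exactly those in $\ConnectionSetOld \subseteq \mathcal{C}$ remain closed, the structure computed by the algorithm equals $\MedialAxisOld$, the true medial axis of \Env\ under the assumption that every connection in $\ConnectionSetOld$ is treated as a closed impassable obstacle (with respect to the projected distance $d_P$). The theorem then follows by taking $\ConnectionSetOld = \emptyset$ at termination, since $\MedialAxis{\Env,\emptyset} = \MedialAxis{\Env}$ by \cref{def:mlma} (when no connection is an obstacle, $d_P$ is exactly the projected distance along the full WE).

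First I would establish the base case: Step 1 of the outline computes, for each layer $L_i$ projected onto $P$, its 2D medial axis treating all connections as closed obstacles. Here I would invoke \cref{def:2dma} together with \cref{def:mlma} applied to a single layer — since a layer projects to $P$ without overlap (\cref{def:mle}), the projected distance $d_P$ restricted to that layer coincides with 2D Euclidean distance, so the 2D construction is exactly $\MedialAxis{L_i, \mathcal{C}}$. Because with all connections closed the layers share no free-space passage, the disjoint union over all $i$ is exactly $\MedialAxis{\Env, \mathcal{C}}$. This gives the invariant for $\ConnectionSetOld = \mathcal{C}$.

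Next I would handle the inductive step, which is where the real work lies. Assume the invariant holds for $\ConnectionSetOld$ and the algorithm opens $C_{ij} \in \ConnectionSetOld$ to produce a candidate structure, with target $\ConnectionSetNew = \ConnectionSetOld \setminus \{C_{ij}\}$. I would argue in two parts. (a) \emph{Only $Z_{ij}$ changes}: by \cref{lem:mlma:InfluenceZoneOnly}, $\MedialAxisOld$ and $\MedialAxisNew$ agree outside $Z_{ij}$, so it suffices that the algorithm correctly replaces $\MedialAxisOld \cap Z_{ij}$ by $\MedialAxisNew \cap Z_{ij} = M_Z$ and leaves everything else untouched — which is exactly what the three-step opening procedure does. (b) \emph{$M_Z$ is computed correctly}: by \cref{lem:mlma:ConnectionSceneNecessity} and \cref{lem:mlma:ConnectionSceneSufficiency}, $M_Z$ is determined precisely by the neighbor set $N_{ij} = N_i \cup N_j$; by \cref{lem:mlma:HalfPlane} the sites $N_i$ lie (weakly) above $C$ and $N_j$ below, so the sweep from $\alpha(Z_i)$ downward correctly computes $M_{Z,i}$ (all site events are the two endpoints of $C_{ij}$ handled at initialization; all circle events are discovered because $M_{Z,i}$ is a tree over a planar-projecting $Z_{ij}$, per \cref{lem:mlma:OpeningTree} and \cref{cor:mlma:InfluenceZonePlanar}; each event is a purely local, projectable computation by \cref{prop:mlma:StraightLine} and \cref{prop:mlma:EmptyCircle}), symmetrically for $M_{Z,j}$, and the Shamos--Hoey merge correctly combines them since its two hypotheses (planarity of each part, separation of $N_i$ from $N_j$ by a line) are met. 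Hence the candidate structure equals $\MedialAxisNew$, closing the induction.

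The main obstacle I anticipate is making the inductive step airtight about a subtlety that the preceding lemmas only treat locally: when a connection is opened, the \emph{sides} $S_i$ and $S_j$ may already coincide (if $L_i$ and $L_j$ were connected via previously-opened connections), and one must confirm that the arguments in \cref{lem:mlma:InfluenceZoneBoundary2,lem:mlma:HalfPlane,lem:mlma:ConnectionSceneSufficiency} still go through in that case — the text asserts this (``conceptually it does not matter''), but the clean way to discharge it in the proof is to observe that all those lemmas reason purely about $d_P$-distances and the local geometry near $C_{ij}$, never about global layer identity, so they are unaffected. A secondary subtlety is order-independence: the invariant is stated in terms of the \emph{set} $\ConnectionSetOld$ of still-closed connections, not the order of openings, so I would note that the induction is valid for any opening order, matching Step 3 of the outline.
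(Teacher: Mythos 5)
Your proposal is correct and follows essentially the same route as the paper: induction over the iterations, with the base case given by the per-layer 2D construction and the inductive step resting on the correctness of the connection-opening procedure (Lemmas \ref{lem:mlma:InfluenceZoneOnly}, \ref{lem:mlma:ConnectionSceneNecessity}, \ref{lem:mlma:ConnectionSceneSufficiency}, and \ref{lem:mlma:HalfPlane}). The paper's own proof is just a terser statement of this same induction, so your more explicit version --- including the observation that the invariant depends only on the \emph{set} of still-closed connections, making the order irrelevant --- is a faithful elaboration rather than a different argument.
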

\begin{proof}
	Each iteration of this algorithm starts with a correct medial axis \MedialAxisOld\ and computes a correct medial axis \MedialAxisNew\ 
	in which one more connection has been removed as an obstacle. 
	By induction over the number of iterations, the final result is the correct medial axis \MedialAxis{\Env} in which all connections are traversable. 
\end{proof}

The connections can be opened in any order without affecting the correctness of the algorithm. 
However, we will see in the next section that opening the connections in a particular order can affect the \emph{running time} of the algorithm.


\subsection{Algorithm Complexity} \label{s:mlma:complexity}

In this section, we analyze the asymptotic running time of our construction algorithm.
The first step of the algorithm computes the medial axis of all layers with all connections closed.
This can be achieved using a single 2D algorithm because the medial axis consists of separate 2D components that do not yet influence each other.
Lemma \ref{lem:mle:NumberOfConnections} has shown that the number of connections $k$ is linear in the number of obstacle points $n$ in the MLE.
Thus, the presence of $k$ connections does not affect the asymptotic complexity, and we essentially compute a 2D medial axis of an input with complexity \BigO{n}. 
\cref{s:2dma:complexity} has shown that this can be performed in \BigO{n \log n} time. 

\subsubsection{Running Time to Open One Connection} \label{s:mlma:complexity:opening}

\begin{lemma} \label{lem:mlma:OpeningTime}
	A single connection $C_{ij}$ can be opened in \BigO{m \log m} time, where $m$ is the complexity of the neighbor set $N_{ij}$.
\end{lemma}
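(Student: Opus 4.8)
The plan is to bound the cost of the three sub-steps of opening a connection: (1) identifying and removing the arcs of $\MedialAxisOld$ that bound the influence zone $Z_{ij}$, (2) computing the two partial medial axes $M_{Z,i}$ and $M_{Z,j}$ via the plane sweep, and (3) merging them into $M_Z$ and splicing it back into $\MedialAxisOld$. The unifying claim is that each of these sub-steps touches only $\BigO{m}$ combinatorial objects and that the only super-linear factor comes from the priority queue of the plane sweep, giving $\BigO{m \log m}$ overall.

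First I would handle sub-step (1). By Lemma \ref{lem:mlma:InfluenceZoneBoundary2} and Corollary \ref{cor:mlma:InfluenceZoneBoundary}, the boundary of $Z_{ij}$ consists of the two monotone arc-sequences $\alpha(Z_i)$, $\alpha(Z_j)$ and the two endpoint normals; the arcs in these sequences are exactly the arcs of $\MedialAxisOld$ having $\interior{C_{ij}}$ as a generator. Since $N_{ij}$ has complexity $m$, and each arc on $\alpha(Z_i) \cup \alpha(Z_j)$ is a bisector of $C_{ij}$ with some element of $N_{ij}$, there are $\BigO{m}$ such arcs. If the medial-axis representation stores, for each arc, its pair of generators (as it must, since the ECM annotates arcs with nearest-obstacle information), these arcs can be collected by walking along $\alpha(Z_i)$ and $\alpha(Z_j)$ in $\BigO{m}$ time. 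Removing them and recording the $\BigO{m}$ ``stubs'' of $\MedialAxisOld$ that cross the boundary of $Z_{ij}$ (which will later be reconnected to $M_Z$) is also $\BigO{m}$.

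Next, sub-step (2): computing $M_{Z,i}$. This is Fortune's plane sweep initialized with the beach line $\alpha(Z_i)$, which has $\BigO{m}$ arcs. All site events are pre-handled (Lemma \ref{lem:mlma:HalfPlane} guarantees the only sites below or on $C_{ij}$ are its two endpoints), so the queue is seeded with the $\BigO{m}$ circle events arising from consecutive triples of $\alpha(Z_i)$. Because $M_{Z,i}$ is a tree (Lemma \ref{lem:mlma:OpeningTree}), it has $\BigO{m}$ vertices and edges, hence $\BigO{m}$ circle events are ever processed, each inserting $\BigO{1}$ new events; the queue never exceeds $\BigO{m}$ entries and each operation costs $\BigO{\log m}$, for $\BigO{m \log m}$ total. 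The earlier discussion (\cref{s:mlma:opening:parts}) establishes that each event's geometric predicates reduce to the 2D case after projection onto $P$, so their cost is $\BigO{1}$. The symmetric computation of $M_{Z,j}$ is the same. Sub-step (3), the Shamos--Hoey merge, traverses the $\BigO{m}$ cells of $M_{Z,i}$ and $M_{Z,j}$ and builds a monotone chain of $\BigO{m}$ new bisector arcs in $\BigO{m}$ time (\cref{s:mlma:opening:merging}), after which reattaching the $\BigO{m}$ recorded stubs and discarding the pruned parts is linear. Summing the three sub-steps yields $\BigO{m \log m}$.

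The main obstacle I anticipate is not the asymptotics of any single sub-step but justifying that sub-step (1) — locating the arcs of $\MedialAxisOld$ bounding $Z_{ij}$ — genuinely costs only $\BigO{m}$ rather than something depending on the global size of $\MedialAxisOld$. This requires that the data structure give direct access to the arcs incident to the (closed) connection $C_{ij}$, i.e.\ that one can start at an arc generated by $\interior{C_{ij}}$ and walk along $\alpha(Z_i)$ and $\alpha(Z_j)$ locally. Assuming the medial axis is stored as a doubly-connected-edge-list-like structure in which each arc knows its generators and its neighbors (as is standard and as the ECM requires for its nearest-obstacle annotation), this walk is local and costs $\BigO{m}$; I would make this assumption explicit. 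The remaining subtlety — that the projected plane sweep and merge produce a combinatorially valid 2D structure despite $N_{ij}$ spanning several layers — has already been dispatched by the straight-line and empty-circle properties in \cref{s:mlma:opening}, so I would simply invoke those results rather than re-argue them.
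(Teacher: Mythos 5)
Your proof is correct and follows the same decomposition as the paper's: two instances of Fortune's sweep at \BigO{m \log m} each, plus one \BigO{m} Shamos--Hoey merge. The paper's own proof is terser and does not explicitly account for locating and removing the \BigO{m} boundary arcs of $Z_{ij}$, so your added care there (and the explicit data-structure assumption that makes it local) only strengthens the argument.
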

\begin{proof}
	Our algorithm for opening a connection starts with two instances of a sweep line algorithm \cite{Fortune1987-Voronoi}. 
	Both instances take \BigO{m \log m} time because they involve \BigO{m} circle events that need to be maintained in sorted order.
	Next, we perform one \BigO{m}-time merge step of a divide-and-conquer algorithm \cite{Shamos1975-Voronoi}.
	Therefore, the total running time is \BigO{m \log m}.
\end{proof}

In many practical scenarios, the connections and obstacles are spread throughout the environment, 
and $m$ will be constant in most iterations of the algorithm.
However, if many obstacles are close to the connection, $m$ can be $\BigTheta{n}$.

\subsubsection{Total Running Time} \label{s:mlma:complexity:total}

Based on Lemma \ref{lem:mlma:OpeningTime}, iteratively opening \emph{all} connections takes \BigO{\SumOfIterations} time in total, 
where $m_i$ is the neighbor set complexity of the connection that is opened in iteration $i$.
Note that the neighbor sets of closed connections can change during the algorithm because obstacles may turn out to influence other connections when a nearby connection is opened.
In other words, a neighbor set's complexity depends on what lies beyond the nearby connections that are already open.
This suggests that the total construction time depends on the \emph{order} in which the connections are opened. 

In many environments, each obstacle will only affect the algorithm in a constant number of iterations regardless of this order, 
which means that the total construction time will remain \BigO{n \log n}.
However, there are environments in which opening the connections in an `unlucky' order leads to a worse construction time. 
The following lemma analyzes this:

\begin{lemma} \label{lem:mlma:WorstCaseTime}
	There exists an environment with $n$ obstacle vertices and $k$ connections such that opening the connections in an inefficient order 
	gives \BigTheta{k} neighbor sets of complexity \BigTheta{n}.
\end{lemma}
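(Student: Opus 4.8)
The plan is to construct an explicit family of multi-layered environments, parameterized by $k$, that forces $\BigTheta{k}$ connections to each have a neighbor set of complexity $\BigTheta{n}$ when opened in a bad order. The key idea is to make a ``stack'' of overlapping layers glued together along a chain of connections, where opening one connection exposes many obstacles that then become relevant to the next connection.

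Concretely, I would build the environment as follows. Take a long thin layer $L_0$ on the ground plane containing a cluster of $\BigTheta{n}$ obstacle vertices near its right end (for instance, a row of many tiny obstacles, or a single obstacle with many vertices). Then stack a sequence of $\BigTheta{k}$ narrow layers $L_1, L_2, \ldots$ above it, each overlapping the previous one when projected onto $P$, and connect $L_i$ to $L_{i+1}$ by a single connection $C_i$ (a short segment when projected onto $P$). Arrange the geometry so that all connections $C_1, \ldots, C_k$ are collinear in the projection and each lies ``between'' the obstacle cluster and the rest of the stack. The layers are essentially empty except for the shared obstacle cluster inherited from $L_0$ sitting just past all the connections.

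Now I would open the connections in the inefficient order $C_1$, then $C_2$, and so on, \emph{away from} the cluster. When $C_i$ is opened, side $S_i$ (the side facing the cluster) has already been merged across $C_1, \ldots, C_{i-1}$, so it contains a clear line of sight to the entire obstacle cluster of $\BigTheta{n}$ vertices; consequently the influence zone $Z_i$ is bounded by $\BigTheta{n}$ medial-axis arcs and the neighbor set $N_i$ has complexity $\BigTheta{n}$. Since this happens for every one of the $\BigTheta{k}$ connections, we get $\BigTheta{k}$ neighbor sets of size $\BigTheta{n}$, as claimed. I would also note that $n$ and $k$ can be chosen independently (subject to $k = \BigO{n}$ from Lemma~\ref{lem:mle:NumberOfConnections}), by padding $L_0$ with extra obstacle vertices, so the family genuinely realizes the stated parameters.

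The main obstacle will be the geometric bookkeeping: verifying that the stacked layers can actually be realized as a valid MLE (direction-consistent, each layer non-overlapping in projection, connections satisfying all the constraints of Definition~\ref{def:mle}), and checking that the endpoint normals of each $C_i$ really do enclose an influence zone that ``sees'' the full cluster rather than being blocked by the boundaries of the intermediate layers. I would handle this by making the narrow layers extremely thin and the connections wide enough (in projection) that the cluster falls strictly between the endpoint normals of every $C_i$, so Lemma~\ref{lem:mlma:InfluenceZoneBoundary1} and Lemma~\ref{lem:mlma:ConnectionSceneNecessity} immediately give the $\BigTheta{n}$ lower bound on $|N_i|$. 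A clean supporting figure showing the projection onto $P$ — the collinear connections, the thin overlapping layers, and the obstacle cluster beyond them — would carry most of the argument, with the prose just confirming that each opening step in the bad order does indeed have to reckon with the whole cluster.
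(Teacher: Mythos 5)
Your construction is essentially the paper's own: the paper uses a ramp subdivided into a chain of $k$ small, closely spaced layers with a row of $\BigTheta{n}$ obstacles on the ground floor, and opens the connections starting from the one nearest the obstacles so that the whole obstacle row propagates into every subsequent neighbor set — exactly your ``open away from the cluster'' order. The argument and the geometric caveats you flag match the paper's proof, so the proposal is correct and takes the same approach.
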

\begin{proof}
	\cref{fig:mlma-worst-case} shows an example in which a ramp has been subdivided into a chain of small layers.
	All connections are close together, and the bottom connection has a row of \BigTheta{n} neighboring obstacles on the ground plane.
	If the connections are opened from the bottom to the top, the first neighbor set has complexity \BigTheta{n}.
	When the first connection is open, the same obstacles have become neighbors of the second connection, so the second neighbor set will also have complexity \BigTheta{n}.
	By repeating this argument, we see that this holds for each of the $k$ iterations. 
\end{proof}

\begin{figure}[htb]
	\centering
	\includegraphics[width=0.45\textwidth]{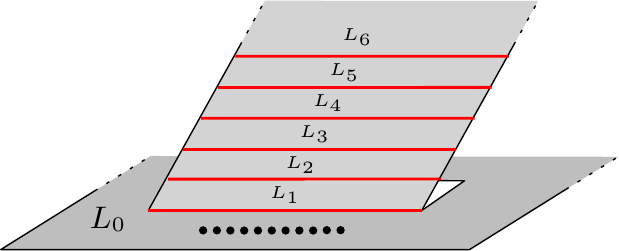}
	\caption{A worst-case example for our incremental algorithm. 
	A ramp has been subdivided into a sequence of small layers. The ground floor contains a row of \BigTheta{n} obstacles.
	If we open the connections from bottom to top, each connection will have \BigTheta{n} obstacles in its neighbor set.
	\label{fig:mlma-worst-case}}
\end{figure}

When using the \BigO{m \log m}-time algorithm from \cref{s:mlma:opening} in each iteration, 
the total running time for this unfortunate order becomes \BigO{kn \log n}. 
In previous work \cite{vanToll2011-MultiLayered}, we reported this as the worst-case running time of our algorithm.

However, the following lemmas show that we can \emph{always} construct the medial axis in \BigO{n \log n \log k} time by choosing an `easy' connection in each iteration.
We begin by showing that the medial axis has linear size throughout the entire algorithm.

\begin{lemma} \label{lem:mlma:LinearSize}
	For any MLE, the medial axis has size \BigO{n} in each iteration of our algorithm.
\end{lemma}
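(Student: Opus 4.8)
The plan is to reduce the statement, for \emph{every} iteration, to a single fact: the fully‑opened medial axis of an arbitrary MLE has size \BigO{n}, where $n$ counts its obstacle vertices. To justify this reduction, note that at the iteration whose closed connections are exactly $\mathcal{C}' \subseteq \mathcal{C}$, the current medial axis $\MedialAxis{\Env,\mathcal{C}'}$ is precisely the medial axis $\MedialAxis{\Env''}$ of the MLE $\Env''$ obtained from \Env\ by \emph{cutting} along the connections in $\mathcal{C}'$ (so those curves join the boundary of \Efree) while keeping $\mathcal{C}\setminus\mathcal{C}'$ as the connections of $\Env''$: treating a connection as an impassable wall in \Env\ is the same as having it on the boundary of \Env'', and treating it as traversable is the same as having it open in \Env''. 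By \cref{lem:mle:NumberOfConnections}, $|\mathcal{C}'|\le k = \BigO{n}$, and each cut connection projects to a segment whose endpoints are already obstacle vertices, so $\Env''$ still has \BigO{n} obstacle vertices and \BigO{n} obstacle \emph{features} (boundary segments and reflex corners). Hence it suffices to bound the medial axis of an MLE by its number of obstacle features.

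For that bound I would view the medial axis $M$ as the $1$-skeleton of the subdivision of the walkable surface \Efree\ into generalized Voronoi regions, one region $\mathrm{Reg}(g)$ per generator $g$. There are \BigO{n} generators, hence \BigO{n} regions. The crucial structural step is that each region is a topological disk; this is where self-overlap of \Efree\ in the projection onto $P$ must be handled, and the \emph{straight-line} and \emph{empty-circle} properties (\cref{prop:mlma:StraightLine,prop:mlma:EmptyCircle}) are exactly the tools needed. Every interior point $p$ of $\mathrm{Reg}(g)$ has a unique nearest point on $g$, reached by an unobstructed path that projects to a line segment, and by the empty-circle property that entire path lies in $\mathrm{Reg}(g)$; sliding $p$ toward $g$ along these paths yields a deformation retraction of $\mathrm{int}\,\mathrm{Reg}(g)$ onto $g$, so $\mathrm{Reg}(g)$ is contractible and therefore a disk — in particular it cannot wrap around a handle of \Efree.

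Next I would control the topology of \Efree\ itself. Cutting \Efree\ along all $k$ connections gives the $l$ planar layers $L_0,\dots,L_{l-1}$; conversely, regluing the layers along the $k$ connection arcs changes the Euler characteristic by $-1$ per arc, so $\chi(\Efree) = \sum_{i=0}^{l-1}\chi(L_i) - k$. Each layer is planar with \BigO{n_i} holes, so $\chi(L_i)\ge 1 - \BigO{n_i}$, and since $k=\BigO{n}$ and $l=\BigO{n}$ (each layer has at least three obstacle vertices), this gives $|\chi(\Efree)| = \BigO{n}$. Because $M$ subdivides \Efree\ into disk faces, Euler's formula applies: $V - E + F = \chi(\Efree)$ with $F=\BigO{n}$. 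Degree-$1$ medial-axis vertices occur only at reflex obstacle corners and semi-vertices only at convex obstacle corners, so there are \BigO{n} of each, while every remaining vertex has degree at least $3$; plugging $2E=\sum_v \deg(v)$ into Euler's relation then forces $V=\BigO{n}$ and hence $E=\BigO{n}$, which bounds the size of $M$ in every iteration.

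The step I expect to be the main obstacle is the disk property of the Voronoi regions: one must genuinely rule out a region of non-trivial topology on a surface that overlaps itself in projection, and this uses that shortest projected paths to a nearest obstacle inside a region are unique and unobstructed, i.e.\ exactly the empty-circle property; without it, Euler's formula would no longer yield a linear bound. A secondary technical point is to make the identity $\chi(\Efree)=\sum_i \chi(L_i)-k$ precise (and to check $l,k=\BigO{n}$ via \cref{lem:mle:NumberOfConnections}), and to confirm that the closure of each Voronoi region is a well-behaved polygonal disk so that the Euler relation for CW-complexes on \Efree\ is legitimately applicable.
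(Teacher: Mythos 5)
Your proof is correct in substance, but it takes a genuinely different---and considerably heavier---route than the paper. The paper's own proof is two sentences: the initial structure $\MedialAxis{\Env,\mathcal{C}}$ is a disjoint union of 2D medial axes of \BigO{n} total sites and hence has size \BigO{n}, and each subsequent step is the deletion of a (line-segment) Voronoi site, which cannot increase the asymptotic complexity of the diagram. You instead prove directly that the medial axis of \emph{any} partially-cut MLE has size \BigO{n}: you show each Voronoi region is a topological disk via a deformation retraction onto its generator (using \cref{prop:mlma:StraightLine,prop:mlma:EmptyCircle} to handle self-overlap in the projection), bound $|\chi(\Efree)|$ by \BigO{n} through the gluing formula $\chi(\Efree)=\sum_i\chi(L_i)-k$ together with \cref{lem:mle:NumberOfConnections}, and then apply Euler's formula with a degree count. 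What your approach buys is a self-contained, iteration-independent bound that does not lean on the analogy with 2D site deletion---an analogy the paper leaves informal, since ``deletion cannot increase complexity'' in the multi-layered setting is really the statement that the intermediate medial axis is itself the medial axis of an \BigO{n}-feature environment, which is exactly what you prove. What the paper's approach buys is brevity and a direct reuse of the already-established 2D machinery. Two points in your argument deserve explicit care if written out in full: Euler's formula must be applied to the whole cell complex $M\cup\partial\Efree$ (the region boundaries contain boundary segments of \Efree, not only medial-axis arcs), which adds only \BigO{n} cells; and the claim that there are \BigO{n} semi-vertices requires that each convex obstacle corner contributes \BigO{1} generator changes, which itself follows from the disk (connected-boundary) property of that corner's region---so the disk lemma is doing double duty and should be cited at that step as well.
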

\begin{proof}
	In the initial step of the algorithm, we compute a medial axis of \BigO{n} sites, which has size \BigO{n}.
	Since opening a connection is analogous to deleting a Voronoi site, the asymptotic complexity of the graph cannot increase during the algorithm. 
\end{proof}

\begin{lemma} \label{lem:mlma:EasiestConnection}
	When $q$ connections are still closed, there is at least one connection with a neighbor set complexity of \BigO{\frac{n}{q}}.
\end{lemma}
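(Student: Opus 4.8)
The plan is to prove the lemma by an averaging argument: I will show that the \emph{total} complexity of the neighbor sets of all $q$ still-closed connections is \BigO{n}, so that the connection minimizing this complexity has a neighbor set of size at most the average, which is \BigO{n/q}.

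First I would fix the medial axis \MedialAxisOld\ at the current stage of the algorithm, where \ConnectionSetOld\ is the set of the $q$ connections that are still treated as obstacles. By \cref{lem:mlma:LinearSize}, this medial axis consists of \BigO{n} arcs in total. For each closed connection $C_{ij} \in \ConnectionSetOld$, recall from \cref{s:mlma:connectionproperties} that its neighbor set $N_{ij} = N_i \cup N_j$ is, by definition, the set of obstacle pieces (other than \interior{C_{ij}}, but together with the two endpoints of $C_{ij}$) that generate the arcs of $\alpha(Z_i) \cup \alpha(Z_j)$. These arcs are exactly the medial axis arcs that have a nearest obstacle point on $C_{ij}$, i.e.\ the arcs that bound the Voronoi cell of $C_{ij}$ in \MedialAxisOld. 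Hence the complexity $m$ of $N_{ij}$ is at most the number of medial axis arcs incident to the cell of $C_{ij}$, plus the constant contribution of the two endpoints (which are not in $\alpha(Z_i)$ or $\alpha(Z_j)$, since the endpoint normals are not medial axis arcs by \cref{cor:mlma:InfluenceZoneBoundary}).

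Next I would bound $\sum_{C_{ij} \in \ConnectionSetOld} m$ by double counting over the medial axis arcs. As in the 2D case, and carrying over to the multi-layered setting by the local similarity guaranteed by the straight-line and empty-circle properties, each medial axis arc is the bisector of exactly two generators, so it separates exactly two Voronoi cells and is therefore incident to the cells of at most two closed connections. Summing over all closed connections gives $\sum_{C_{ij}} (\text{arcs incident to the cell of } C_{ij}) \le 2 \cdot (\text{total number of medial axis arcs}) = \BigO{n}$ by \cref{lem:mlma:LinearSize}, and the endpoint contributions add at most $2q = \BigO{n}$ using \cref{lem:mle:NumberOfConnections}. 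Therefore $\sum_{C_{ij} \in \ConnectionSetOld} m = \BigO{n}$, and the connection with the smallest neighbor set has $m \le \frac{1}{q}\sum_{C_{ij}} m = \BigO{n/q}$.

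The main obstacle I anticipate is making the identification between $N_{ij}$ and the boundary arcs of the Voronoi cell of $C_{ij}$ fully rigorous in the possibly non-planar multi-layered setting: in particular, I need to argue cleanly that a neighbor which is only a \emph{piece} of a boundary segment is still charged to some arc of $\alpha(Z_i) \cup \alpha(Z_j)$, and that the shared endpoints and endpoint normals are accounted for without over- or under-counting. Once the cell-boundary arcs of each closed connection are pinned down and each arc is charged to at most two connections, the remaining averaging step is routine.
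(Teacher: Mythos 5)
Your proposal is correct and follows essentially the same argument as the paper: bound the total neighbor-set complexity over all closed connections by \BigO{n} via \cref{lem:mlma:LinearSize} and the observation that each medial axis arc bounds the influence zone of at most two connections (one per side), then take the minimum via averaging. Your additional care about the endpoint contributions and the identification of $N_{ij}$ with the arcs of $\alpha(Z_i) \cup \alpha(Z_j)$ is a reasonable elaboration of the same idea, not a different route.
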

\begin{proof}
	By Lemma \ref{lem:mlma:LinearSize}, the medial axis always has \BigO{n} arcs.
	At any point in the incremental algorithm, every arc bounds the influence zone of at most two connections, namely one on each side of the arc.
	Therefore, the combined complexity of all influence zones (or, equivalently, of all neighbor sets) is \BigO{n}.
	When this \BigO{n} complexity is shared by $q$ connections, there must be at least one connection with a neighbor set complexity of \BigO{\frac{n}{q}}. 
\end{proof}

\begin{lemma} \label{lem:mlma:OpeningAllConnections}
	The $k$ connections can be opened in \BigO{n \log n \log k} total time.
\end{lemma}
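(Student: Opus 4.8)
The plan is to combine the amortized bound on neighbor-set complexity from Lemma \ref{lem:mlma:EasiestConnection} with the per-connection cost from Lemma \ref{lem:mlma:OpeningTime}, using a greedy strategy that always opens the connection with the smallest neighbor set. First I would make the algorithm concrete: in the iteration where $q$ connections remain closed, pick a connection whose neighbor set $N_{ij}$ has complexity $m = \BigO{n/q}$; Lemma \ref{lem:mlma:EasiestConnection} guarantees such a connection exists. By Lemma \ref{lem:mlma:OpeningTime}, opening it costs $\BigO{m \log m} = \BigO{(n/q)\log(n/q)} = \BigO{(n/q)\log n}$. Summing over $q$ running from $k$ down to $1$ gives a total of $\BigO{n \log n \sum_{q=1}^{k} \frac{1}{q}} = \BigO{n \log n \cdot H_k} = \BigO{n \log n \log k}$, since the $k$-th harmonic number $H_k$ is $\BigO{\log k}$.

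The second thing I would address is the cost of actually \emph{finding} the easiest connection in each iteration, since a naive scan of all remaining connections would add an extra factor. The approach here is to maintain a priority queue (min-heap) keyed on current neighbor-set complexity over the $\le k$ still-closed connections. When a connection $C_{ij}$ is opened, only the neighbor sets of connections whose influence zones are adjacent to $Z_{ij}$ can change; these are exactly the connections sharing a bounding medial-axis arc with the newly inserted $M_Z$. Each such update touches $\BigO{1}$ amortized arcs per unit of work already being done in the opening step, so re-keying the affected connections in the heap costs $\BigO{(m_i + \Delta_i)\log k}$ where $\Delta_i$ counts arcs whose owning connection changes. Since every medial-axis arc is created and destroyed $\BigO{1}$ times over the whole algorithm (the graph has size $\BigO{n}$ throughout by Lemma \ref{lem:mlma:LinearSize}, and opening is a Voronoi-site deletion that does not increase complexity), $\sum_i \Delta_i = \BigO{n}$, so the total heap-maintenance cost is $\BigO{n \log k}$, which is absorbed into the $\BigO{n \log n \log k}$ bound.

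The main obstacle I expect is the bookkeeping argument that $\sum_i m_i = \BigO{n \log k}$ under the greedy order — that is, making rigorous the claim that choosing the smallest neighbor set each time yields the harmonic sum. The subtlety is that Lemma \ref{lem:mlma:EasiestConnection} only gives the existence of \emph{one} cheap connection, not that all remaining connections are cheap; so I must argue that after removing that one connection the invariant "total neighbor-set complexity is $\BigO{n}$" is re-established (which follows because each of the $\BigO{n}$ arcs of the current medial axis still bounds at most two influence zones), and hence the bound $\BigO{n/q}$ reapplies in the next round with $q-1$ in place of $q$. I would state this as an explicit loop invariant and run the induction on the number of open connections. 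A minor secondary point to handle carefully is that the "$\log(n/q)$" simplification to "$\log n$" is only valid up to constants when $n/q \ge 2$; for the $\BigO{\log k}$ tail where $q$ is close to $k$ and $n/q$ could be small, the per-step cost is merely $\BigO{1}$ or $\BigO{\log n}$, which is still dominated, so the final bound is unaffected.
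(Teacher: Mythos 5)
Your proposal is correct and follows essentially the same route as the paper: greedily open the connection with the smallest neighbor-set complexity (found via a priority structure — the paper uses a balanced BST rather than a heap), bound each step by $\mathcal{O}((n/q)\log n)$ via Lemmas \ref{lem:mlma:EasiestConnection} and \ref{lem:mlma:OpeningTime}, and sum the harmonic series. The only substantive difference is your update accounting: the paper bounds the number of affected neighbor sets per iteration by $\mathcal{O}(n/q)$ (giving $\mathcal{O}((n/q)\log q)$ tree-update time per step) rather than your global claim that $\sum_i \Delta_i = \mathcal{O}(n)$ — which is not obviously true since arcs can be created and destroyed across iterations, though even the weaker bound $\sum_i \Delta_i = \mathcal{O}(n\log k)$ keeps the maintenance cost within $\mathcal{O}(n\log n\log k)$.
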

\begin{proof}
	We will repeatedly open the connection with the smallest neighbor set complexity.
	To achieve this, we first compute the neighbor set complexities of all $k$ closed connections. 
	This can be done in \BigO{n} time by traversing the medial axis once and incrementing the complexity for a connection $C_x$ whenever an arc has $C_x$ as a nearest obstacle.
	Next, we sort the connections by complexity in a balanced binary search tree $\mathcal{T}$.
	This requires \BigO{k \log k} time, which is \BigO{n \log n} because $k$ is \BigO{n}.
	
	Assume for now that we can maintain the sorting order in $\mathcal{T}$ such that we can always get the connection with the smallest complexity.
	Let $C_q$ be this easiest connection when $q$ connections are closed.
	By Lemma \ref{lem:mlma:EasiestConnection}, it has a complexity of \BigO{\frac{n}{q}}. 
	Using the algorithm of \cref{s:mlma:opening}, we can open it in \BigO{\frac{n}{q} \log \frac{n}{q}} time.
	
	Next, we show that $\mathcal{T}$ can indeed be maintained efficiently.
	The neighbor sets of other closed connections can change due to opening $C_q$.
	However, any connection that is affected \emph{must} be one of the neighboring obstacles of $C_q$.
	Therefore, the number of neighbor sets that can change is \BigO{\frac{n}{q}}.
	We can update the complexities of these neighbor sets in \BigO{\frac{n}{q}} time: 
	for each added or removed arc with a connection $C_x$ as a nearest obstacle, we increment or decrement the neighbor set complexity for $C_x$.
	Afterwards, we update the search tree $\mathcal{T}$ by deleting and re-inserting all complexities that have changed. 
	This takes \BigO{\frac{n}{q} \log q} time because it requires \BigO{\frac{n}{q}} update operations.
	Thus, opening $C_q$ and updating $\mathcal{T}$ afterwards takes 
	$\BigO{\frac{n}{q}(\log \frac{n}{q} + \log q)} = \BigO{\frac{n}{q} \log n}$ time.
	
	Because $\mathcal{T}$ is maintained correctly, we can always open the connection with the lowest neighbor set complexity.
	For all $k$ iterations combined, we obtain a running time of 
	$\BigO{\sum_{q=1}^k \frac{n}{q} \log n} = \BigO{\sum_{q=1}^k (n \log n \cdot \frac{1}{q})} = \BigO{n \log n \sum_{q=1}^k \frac{1}{q}} = \BigO{n \log n \cdot H_k}$.
	Here, $H_k$ is the $k$th harmonic number, which is known to be \BigTheta{\log k}.
	Therefore, the total running time to open all connections is \BigO{n \log n \log k}.
\end{proof}

Combined with the \BigO{n \log n} running time for the first step (i.e.\ computing the medial axis of each layer with all connections closed), 
we obtain the following result:

\begin{theorem}\label{th:mlma:RunningTime}
	The medial axis of a multi-layered environment with $n$ obstacle vertices and $k$ connections can be computed in \BigO{n \log n \log k} time.
\end{theorem}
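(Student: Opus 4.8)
The plan is to obtain the bound by simply adding the costs of the two phases of the construction algorithm from \cref{s:mlma:algorithm-outline}: first building \MedialAxis{\Env, \mathcal{C}} with every connection treated as a closed obstacle, and then opening the $k$ connections one at a time.

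First I would bound the initial phase. By \cref{lem:mle:NumberOfConnections} we have $k = \BigO{n}$, so adding all $k$ connections as closed line-segment obstacles enlarges the input by only a constant factor; the total complexity of the sites is still $\BigO{n}$. With every connection closed, the medial axis splits into independent per-layer 2D components that do not yet influence one another, so a single planar medial-axis (pruned GVD) computation on the union of the layers suffices. By \cref{s:2dma:complexity} this runs in $\BigO{n \log n}$ time and, by \cref{lem:mlma:LinearSize}, produces a structure of size $\BigO{n}$.

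Next I would invoke \cref{lem:mlma:OpeningAllConnections}, which already establishes that opening all $k$ connections with the sweep-and-merge procedure of \cref{s:mlma:opening}, while always processing the connection of currently smallest neighbor-set complexity, costs $\BigO{n \log n \log k}$ in total. Summing the two phases gives $\BigO{n \log n} + \BigO{n \log n \log k} = \BigO{n \log n \log k}$, which is the claim.

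At this level there is essentially no obstacle left; the substantive work lives in the earlier lemmas, so the only thing I would double-check is that nothing hidden inflates the bound. In particular I would re-verify that $k = \BigO{n}$ is genuinely used to absorb both the $\BigO{k \log k}$ cost of building the search tree $\mathcal{T}$ and the number of iterations of the opening loop, and that no opening step ever faces a neighbor set larger than $\BigO{n}$ even though neighbor sets can grow as nearby connections open — this is exactly what \cref{lem:mlma:LinearSize} and \cref{lem:mlma:EasiestConnection} guarantee, so the harmonic-sum argument behind the $\log k$ factor goes through. Once those are confirmed, the theorem follows purely by addition.
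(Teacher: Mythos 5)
Your proposal is correct and matches the paper's own argument: the theorem is obtained exactly by combining the \BigO{n \log n} initial computation of \MedialAxis{\Env,\mathcal{C}} with the \BigO{n \log n \log k} bound of \cref{lem:mlma:OpeningAllConnections} for opening all connections. The extra sanity checks you mention (using $k = \BigO{n}$ and the neighbor-set bounds) are precisely the points already handled in the supporting lemmas, so nothing further is needed.
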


\subsection{Storage Complexity} \label{s:mlma:complexity:size}

Finally, we give the storage complexity of the multi-layered medial axis when all connections have been opened. 
It follows immediately from Lemma \ref{lem:mlma:LinearSize}: the complexity is \BigO{n} at each point in the algorithm, including at the end.

\begin{theorem}
	The medial axis of a multi-layered environment with $n$ obstacle vertices and $k$ connections has a storage complexity of \BigO{n}.
\end{theorem}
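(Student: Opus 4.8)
The plan is to obtain this as an immediate corollary of Lemma \ref{lem:mlma:LinearSize}. First I would note that the incremental construction of \cref{s:mlma:algorithm-outline} terminates with $\MedialAxis{\Env, \emptyset} = \MedialAxis{\Env}$; that is, the medial axis whose storage we want to bound is precisely the output of the final iteration of the algorithm. Lemma \ref{lem:mlma:LinearSize} states that the medial axis has \BigO{n} arcs at \emph{every} iteration of the algorithm, so in particular it has \BigO{n} arcs at the end. Hence $\textit{ma}(\Env)$, and therefore its topological closure $\MedialAxis{\Env}$, consists of \BigO{n} arcs, \BigO{n} true vertices, and \BigO{n} semi-vertices.

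Second, I would argue that \BigO{n} arcs implies \BigO{n} storage. Each arc is either a line segment or a parabolic arc and is fully described by a constant amount of data: its two generators on the boundary of \Efree\ together with its two endpoints. The combinatorial incidence structure has size bounded by the sum of vertex degrees, which is \BigO{\text{number of arcs}} = \BigO{n}; note that even though an individual medial axis vertex may have large degree when it is equidistant to many obstacle points, the total degree summed over all vertices is linear in the arc count. Recording all arcs together with their adjacencies therefore takes \BigO{n} space, which is exactly the claimed storage complexity.

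The one point requiring care — and it is a minor one — is that, unlike in the 2D case, the multi-layered medial axis need not be a plane graph (as emphasized just before \cref{def:mlma}), so the argument cannot be closed by invoking Euler's formula on a planar embedding; the bound must rest entirely on Lemma \ref{lem:mlma:LinearSize}. That lemma in turn traces back, via the observation that opening a connection is analogous to deleting a Voronoi site and hence cannot raise the asymptotic complexity of the graph, to the classical fact recalled in \cref{s:2dma:complexity} that the generalized Voronoi diagram of \BigO{n} line-segment sites has \BigO{n} size. So the only real step in the proof is to confirm that the terminal state of the construction algorithm is indeed one of the iterations covered by Lemma \ref{lem:mlma:LinearSize}, after which the result follows immediately.
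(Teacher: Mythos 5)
Your proposal is correct and follows exactly the paper's route: the theorem is obtained as an immediate consequence of Lemma \ref{lem:mlma:LinearSize}, whose \BigO{n} bound holds at every iteration of the algorithm and hence at termination, when the result equals $\MedialAxis{\Env}$. Your additional remarks on constant storage per arc and on the inapplicability of Euler's formula to the non-planar multi-layered graph are sound elaborations of what the paper leaves implicit.
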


\section{Application: The Explicit Corridor Map} \label{s:ecm}

\begin{NewContent}
The Explicit Corridor Map (ECM) is a navigation mesh that is closely related to the medial axis.
In this section, we give an improved definition of the ECM, we analyze its complexity, and we show several useful geometric operations for path planning.
\end{NewContent}


\subsection{Definition} \label{s:ecm:definition}

The ECM is a graph representation of the medial axis annotated with nearest-obstacle information. 
It describes each medial axis arc and its surrounding free space in an efficient manner. 
As such, it is a compact navigation mesh that can be used to find paths for characters of any radius.

\begin{figure*}
	\centering
	\subfigure[Medial axis \label{fig:u-ma2}]{
		\includegraphics[height=0.3\textwidth]{fig/u-ma.pdf}
	}
	\subfigure[Explicit Corridor Map \label{fig:u-ecm}]{
		\includegraphics[height=0.3\textwidth]{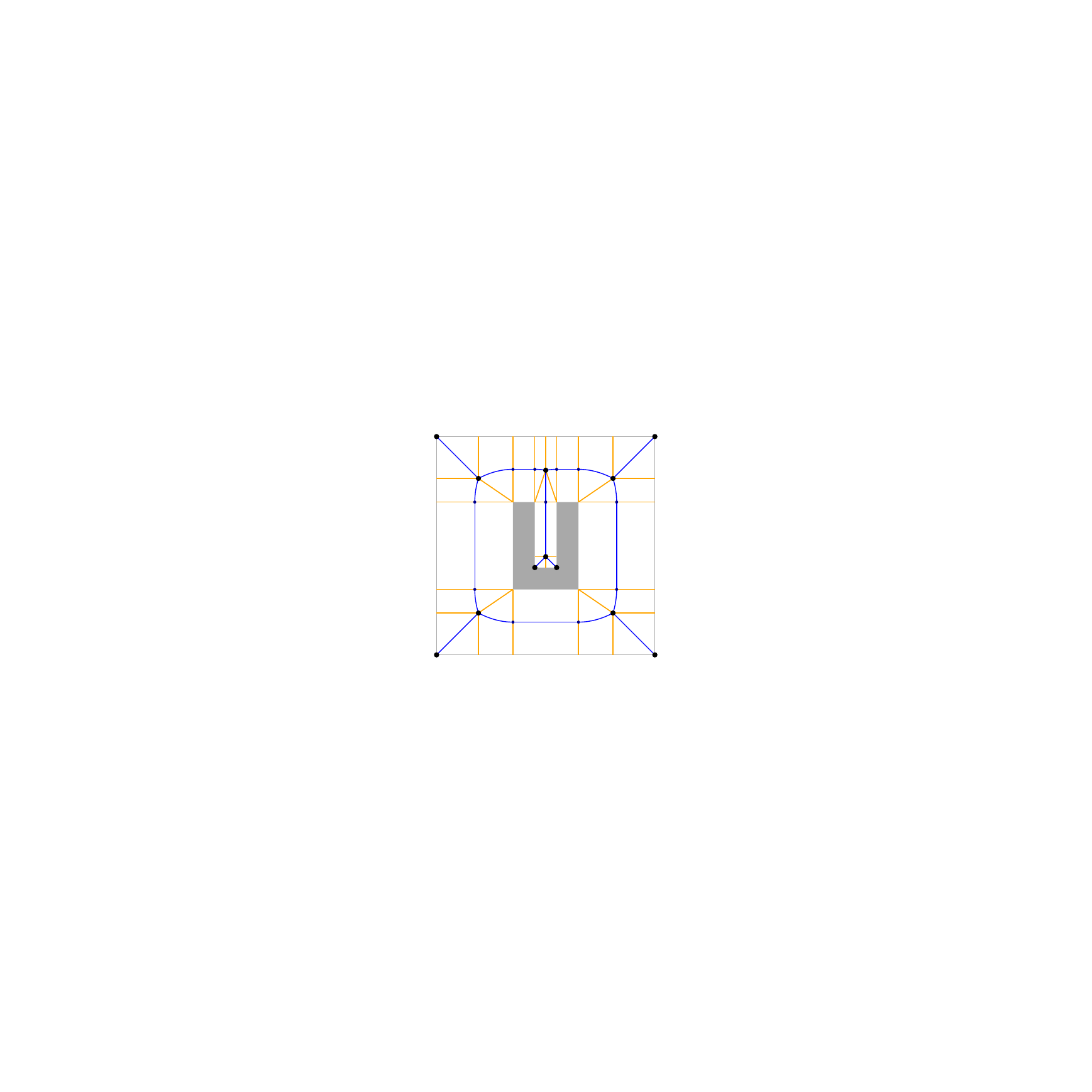}
	}
	\subfigure[ECM edge \label{fig:u-ecm-closeup}]{
		\includegraphics[height=0.3\textwidth]{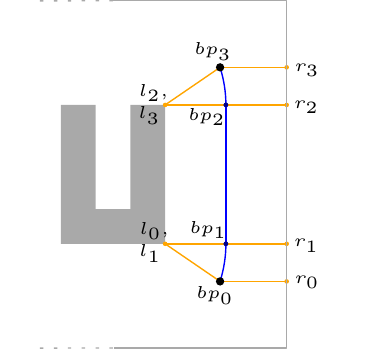}
	}
	\caption{\NiceSubref{fig:u-ma2} The medial axis of a 2D environment, repeated from \cref{fig:u-ma}.
	\NiceSubref{fig:u-ecm} The ECM is a medial axis with nearest-obstacle annotations,
	shown as orange line segments between vertices and their nearest obstacle points.
	These segments are not edges in the graph.
	\NiceSubref{fig:u-ecm-closeup} Details of an ECM edge with four bending points. 
	Each bending point $bp_k$ stores its position $p_k$ and its nearest obstacle points $l_k$ and $r_k$.
	\label{fig:u-ma-ecm}}
\end{figure*}

\begin{definition}[Explicit Corridor Map]
For a 2D environment, WE, or MLE, the Explicit Corridor Map $\textit{ECM}(\Env)$ 
is an extended representation of the medial axis \MedialAxis{\Env}\ as an undirected graph $G = (V,E)$ \NewContentInline{with the following properties:}
\begin{itemize}
	\item $V$ is the set of true vertices of the medial axis (i.e.\ all medial axis vertices except those of degree 2).
	\item $E$ is the set of edges of the medial axis.
	\item Each edge $e_{ij} \in E$ represents the medial axis arcs between two true vertices $v_i,v_j \in V$.
	It is represented by a sequence of $n' \geq 2$ bending points\footnote{We originally referred to bending points as \emph{event points}. \cite{Geraerts2010-ECM}} 
	$bp_0, \ldots, bp_{n'-1}$ where $bp_0=v_i$, $bp_{n'-1}=v_j$,
	and $bp_1, \ldots, bp_{n'-2}$ are the remaining semi-vertices along the edge.
	\item Each bending point is a medial axis vertex annotated with nearest-obstacle information.
	A bending point $bp_k$ on an edge stores its two nearest obstacle points $l_k$ and $r_k$ on the left and right side of the edge, respectively.
\end{itemize}
\end{definition}

\cref{fig:u-ecm} shows the ECM of our example environment.
Since the ECM is an undirected graph, any edge $e_{ij}$ could also be described as an edge $e_{ji}$, 
with the list of bending points reversed and all left and right obstacle points swapped.
Furthermore, a true vertex occurs as the first or last bending point for each of its incident edges. 
Each such bending point has its own sense of left and right; together, they store all nearest obstacle points for the true vertex.
Thus, it is sufficient to store only two obstacle points for each bending point.
We also emphasize that the orange line segments in \cref{fig:u-ecm} are \emph{not} graph edges; 
they merely denote the relation between bending points and their nearest obstacles.
\cref{fig:u-ecm-closeup} shows the details of an ECM edge.

Annotating the medial axis with nearest-obstacle information has many advantages.
One advantage is that the \emph{clearance} (the distance to the nearest obstacle) is known at each bending point. 
This enables path planning for characters of any radius; that is, we do not have to inflate the obstacles using Minkowski sums for a particular radius.
\bigskip

\noindent
\begin{NewContent}
Our definition of the ECM applies not only to 2D environments, but also to walkable and multi-layered environments without requiring any adjustments.
In a WE or MLE, the nearest obstacle to a point $q \in \Efree$\ may lie in another layer than $q$ itself. 
Therefore, a nearest obstacle point to an ECM bending point $bp_k$ may lie in another layer than $bp_k$. 
This does not change the ECM's definition or complexity.
Due to the straight-line property, the path from a bending point to any of its nearest obstacle points is a line segment when projected onto $P$.


\subsection{Complexity} \label{s:ecm:complexity}

We have shown that the medial axis has \BigO{n} complexity in a 2D environment with $n$ obstacle vertices (\cref{s:2dma:complexity}), 
in a WE with $n$ boundary vertices (\cref{s:mlma:complexity:size}), 
and in an MLE with $n$ boundary vertices (regardless of the number of connections). 
The next lemma states that the medial axis can easily be converted to an ECM of the same complexity.
\end{NewContent}
	
\begin{lemma}
	A medial axis with complexity \BigO{n} can be converted to an ECM of complexity \BigO{n} in \BigO{n} time.
\end{lemma}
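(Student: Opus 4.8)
The plan is to argue that converting a medial axis into an ECM amounts to a single traversal of the medial axis graph together with a constant amount of extra work per arc, vertex, and semi-vertex. The key observation is that every piece of ECM data is already present, either explicitly or implicitly, in the medial axis representation: the true vertices are exactly the vertices of degree $\neq 2$, the semi-vertices are exactly the degree-2 vertices, and the generators of each medial axis arc (the nearest obstacle point or segment on each side) are known from the construction algorithm of \cref{s:mlma}. So the conversion is not a recomputation but a reorganization plus an annotation step.

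First I would establish the size bound. By assumption the medial axis has \BigO{n} arcs, hence \BigO{n} vertices in total (true vertices and semi-vertices combined), since each arc contributes at most two endpoints and the graph is connected-by-components with bounded degree sums. The ECM stores one node in $V$ per true vertex and, for each edge $e_{ij} \in E$, a list of bending points consisting of its two endpoint true vertices and the semi-vertices lying on it in order. Each semi-vertex lies on exactly one medial axis edge, and each true vertex appears as an endpoint bending point for each of its incident edges; since the sum of degrees is \BigO{n}, the total number of (bending point, edge) incidences is \BigO{n}. Each bending point stores exactly two obstacle points, so the total storage is \BigO{n}.

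Next I would describe the \BigO{n}-time procedure. Step one: traverse the medial axis once to mark each vertex as true (degree $1$, $3$, or more) or semi (degree $2$); this is \BigO{n} since we sum over all vertices and their incident arcs. Step two: for each true vertex, start walking along each incident arc, following the chain of degree-2 vertices until the next true vertex is reached, emitting the bending-point list $bp_0,\ldots,bp_{n'-1}$ for that ECM edge; each arc is visited a constant number of times (once from each of its two sides, or twice in a self-loop), so this is \BigO{n} overall. Step three: for each bending point, read off its two nearest obstacle points $l_k$ and $r_k$ from the generator information attached to the two arcs incident to it on that edge — this is where the straight-line property of \cref{s:mlma:definition} is used implicitly, guaranteeing that "nearest obstacle point" is a well-defined single point (the foot of the perpendicular) rather than something more complicated — and store them with the bending point. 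This is constant work per bending point, hence \BigO{n} total. One small care point: for a true vertex appearing on several edges, each incident edge records its own left/right obstacle pair, and the union of these pairs over all incident edges gives all nearest obstacles of the true vertex, consistent with the definition; no extra bookkeeping is needed.

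The main obstacle, such as it is, is mostly a matter of bookkeeping rather than a genuine mathematical difficulty: one must verify that the degree-2 vertices along each edge can be enumerated in order in linear time (they can, by following the doubly-linked arc structure of the Voronoi/medial-axis diagram, e.g.\ via \NextCell{} and \PrevCell{} style pointers), and that obstacle points at a degree-2 vertex are consistently labelled "left" and "right" as one walks the edge (this is just a consistent orientation choice along the edge, with the reversal symmetry $e_{ij} \leftrightarrow e_{ji}$ noted after the definition). Once these are observed, the counting argument above closes the proof. I would not expect to need any new geometric lemma beyond what \cref{s:2dma:complexity} and Lemmas \ref{lem:mlma:LinearSize} already give for the \BigO{n} size of the medial axis.
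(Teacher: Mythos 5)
Your proposal is correct and follows essentially the same argument as the paper: count bending points via vertex degrees (one per degree-1 or degree-2 vertex, $d$ per vertex of degree $d \geq 3$, bounded by the \BigO{n} degree sum), and observe that the nearest-obstacle annotations are constant work per bending point since each arc's generators are already known. The extra traversal and bookkeeping details you supply are consistent with, and slightly more explicit than, the paper's proof.
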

\begin{proof}
	The ECM converts medial axis vertices to bending points by adding nearest-obstacle annotations. 
	\NewContentInline{A degree-2 vertex becomes a single bending point, and any other vertex receives a separate bending point for each incident edge. 
	The nearest-obstacle annotations can easily be added in a post-processing step (or even during the construction algorithm itself). 
	After all, a medial axis arc is defined by an obstacle part (a line segment or a point) on both sides of the arc.
	The two nearest obstacle \emph{points} for a bending point $bp_k$ are simply the nearest points to $bp_k$ on these obstacle parts. 
	Thus, adding these annotations takes constant time per bending point.}

	What remains to be analyzed is the total number of bending points.
	Each medial axis vertex of degree 1 or 2 occurs as a bending point exactly once. 
	A vertex of degree $d \geq 3$ occurs as a bending point $d$ times: once for each edge that contains this vertex as an endpoint.
	Since the sum of all vertex degrees is \BigO{n}, there are \BigO{n} bending points in total, 
	each of which requires \BigO{1} storage.
	Thus, the ECM adds a linear amount of information to the medial axis in linear time. 
\end{proof}

\NewContentInline{The remainder of this section defines a number of operations on the ECM that are useful for path planning.}
These concepts apply to both 2D and multi-layered environments.
For more information on how these concepts fit into a generic crowd simulation framework, 
we refer the reader to an overview paper \cite{vanToll2015-Framework}.


\begin{NewContent}
\subsection{Computing Nearest Obstacles and Retractions} \label{s:ecm:retraction}
\end{NewContent}
\noindent The ECM event points and their nearest-obstacle annotations \NewContentInline{subdivide the free space \Efree\ into non-overlapping polygonal cells. 
This subdivision has \BigO{n} edges and vertices.} 
We can therefore perform \emph{point-location queries} to find out in which ECM cell a query point is located.
Such queries can be answered in \BigO{\log n} time using e.g.\ a trapezoidal map, which can be built in \BigO{n \log n} time \cite{deBerg2008-CompGeom}.
For MLEs, we create a separate point-location data structure for each layer. 
A query point in an MLE is specified by a 2D point and a layer ID.

Once the cell containing a query point $p$ has been determined, we can compute the \emph{nearest obstacle point} $np(p)$ in constant time because each cell has constant complexity.
In a crowd simulation application, we can use this to easily determine how far each character is removed from the nearest boundary.
\bigskip

\noindent
Points in the free space can be \emph{retracted} onto the medial axis.
In robot motion planning, the term `retraction' is used for a function that maps points in \Efree\ onto the medial axis \cite{Wilmarth1999-Retraction},
as well as for complete planning methods based on this principle \cite{ODunlaing1985-Retraction}.
We use the following definition:

\begin{definition}[Retraction]
For any point $p$ in the free space \Efree, the \emph{retraction} \Retraction{p} is a unique projection of $p$ onto the medial axis.
\begin{enumerate}
	\item If $p$ lies on the medial axis, then $\Retraction{p} = p$.
	\item If $p$ does not lie on the medial axis, let $l$ be the half-line that starts at $np(p)$ and passes through $p$.
	\Retraction{p} is the first intersection of $l$ with the medial axis.
\end{enumerate}
\end{definition}

\cref{fig:u-retraction} shows examples of retractions. 
A retraction can be computed in \BigO{\log n} time by using a point-location query followed by constant-time geometric operations.

\begin{figure*}[t]
	\centering
	\subfigure[Retractions \label{fig:u-retraction}]{
		\includegraphics[height=0.3\textwidth]{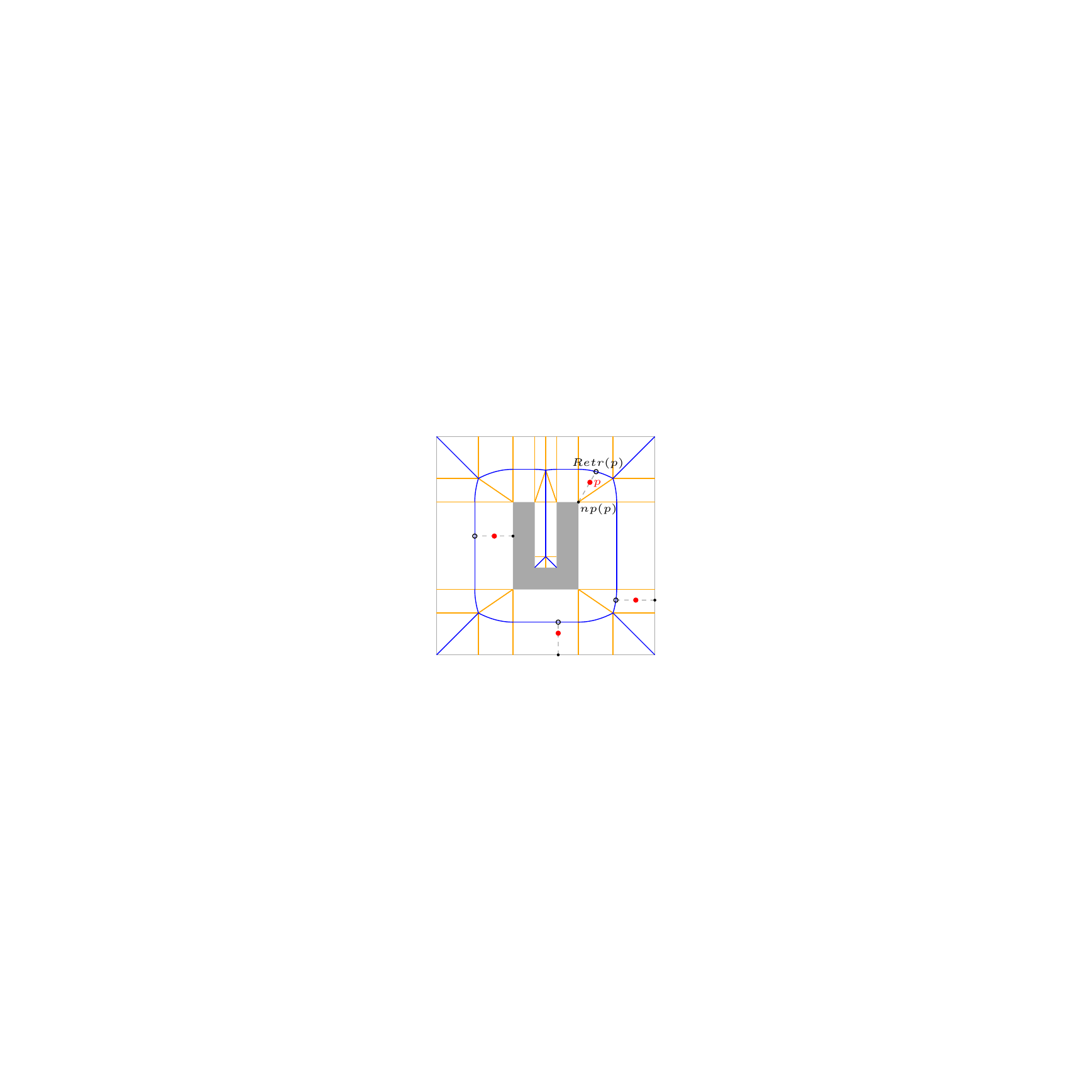}
	}
	\subfigure[ECM path + corridor \label{fig:u-path-corridor}]{
		\includegraphics[height=0.3\textwidth]{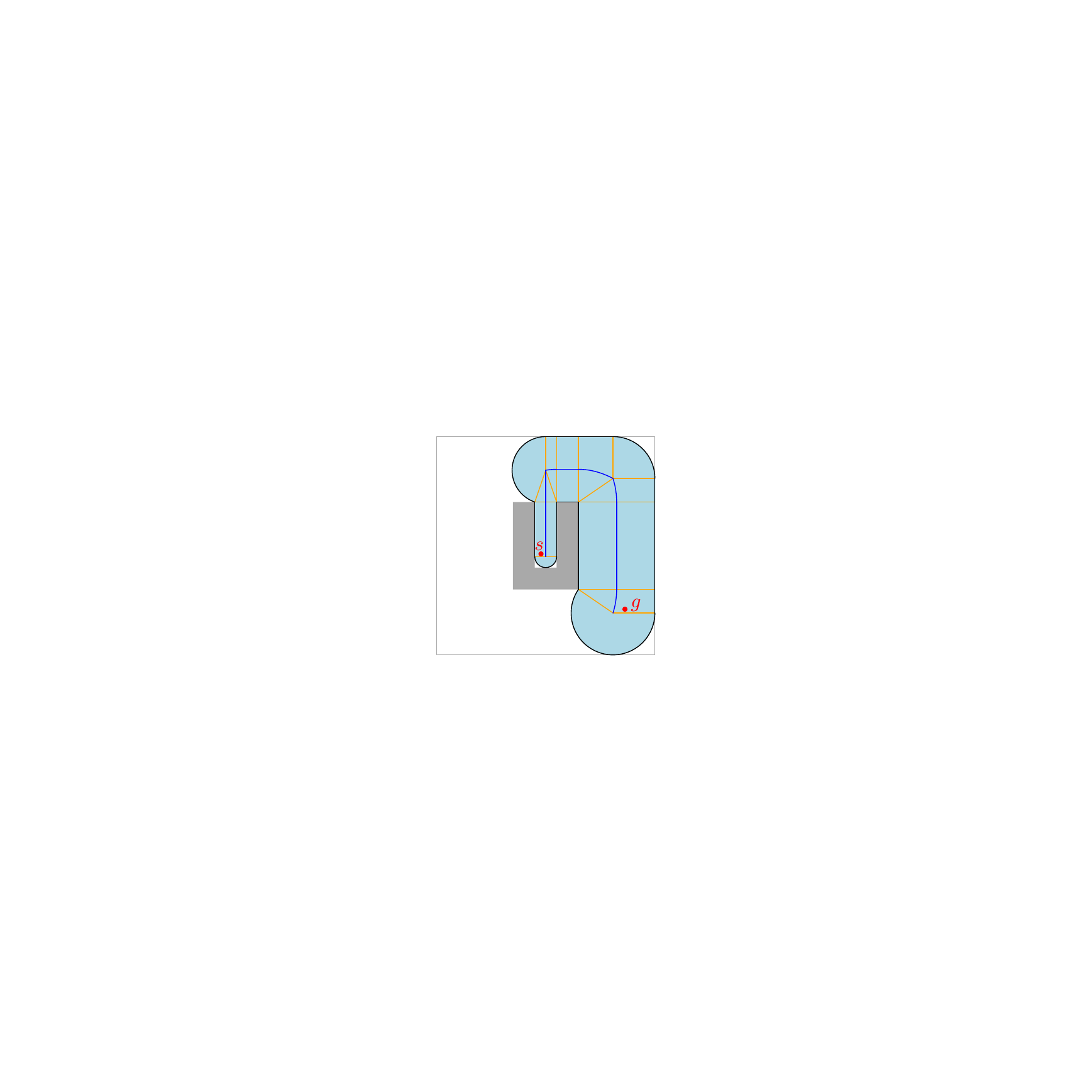}
	}
	\subfigure[Indicative routes \label{fig:u-indicativeroutes}]{
		\includegraphics[height=0.3\textwidth]{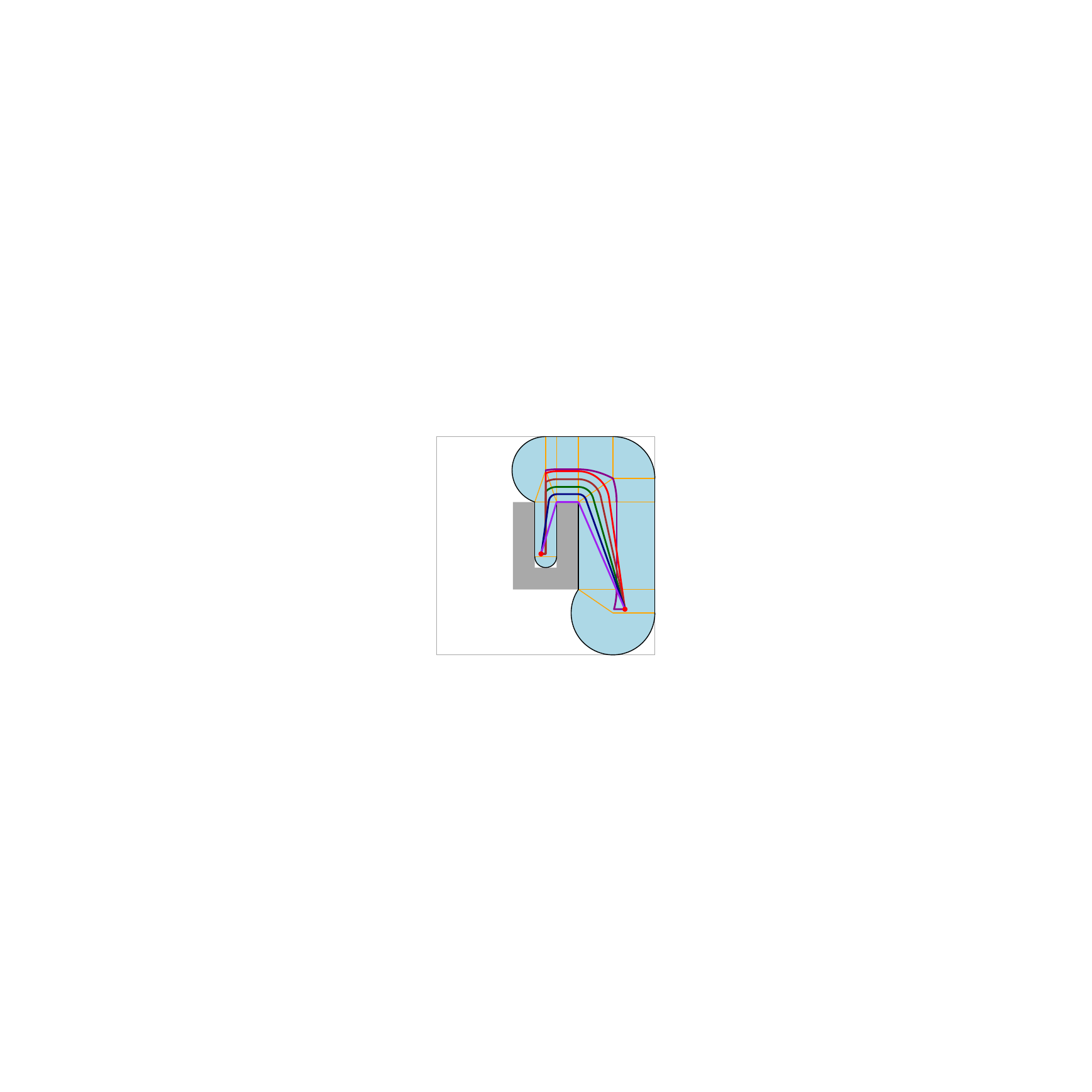}
	}
	\caption{
	\NiceSubref{fig:u-retraction} Examples of query points (shown as large dots), their nearest obstacle points (small dots), and their retractions (circles).
	\NiceSubref{fig:u-path-corridor} Given two positions $s$ and $g$, the retraction method is used to compute a path from $s$ to $g$ along the medial axis.
	A corridor describes the free space around this path.
	\NiceSubref{fig:u-indicativeroutes} Within the corridor, we can compute various types of indicative routes, e.g.\ with an amount of preferred clearance. 
	}
\end{figure*}


\subsection{Computing a Path} \label{s:ecm:path}

To plan a path for a disk-shaped character in the ECM, we first find a path along the medial axis that has sufficient clearance.
This is equivalent to the \emph{retraction method} for motion planning \cite{ODunlaing1985-Retraction}: 
given a start position $s$ and a goal position $g$ in \Efree, we compute their retractions, 
and then we compute an optimal path on the medial axis from \Retraction{s} to \Retraction{g} using the A* search algorithm.
This search is efficient because the medial axis is a sparse graph compared to e.g.\ a grid.
The clearance information stored in each ECM bending point allows us to precompute the \emph{minimum clearance} along each edge. 
The search can then skip edges for which the clearance is too low for our disk to pass through.

The free space around a medial axis path can be described using a \emph{corridor}, 
which is the sequence of ECM cells along the path combined with the maximum-clearance disks at its ECM vertices \cite{Geraerts2010-ECM}.
\cref{fig:u-path-corridor} shows an example.


\subsection{Computing an Indicative Route} \label{s:ecm:indicativeroute}

An ECM path can be converted into an \emph{indicative route}: \NewContentInline{a curve for the character to follow.} 
Various types of indicative routes can be obtained in \BigO{m} time, where $m$ is the number of ECM cells along the path.
For instance, we can use a funnel algorithm to obtain the \emph{shortest path} within a corridor, while keeping a preferred distance to obstacles whenever possible \cite{Geraerts2010-ECM}.
Examples are displayed in \cref{fig:u-indicativeroutes}.
It is also easy to compute indicative routes that stay on the left or right side of the free space (or any interpolation of these extremes).
Varying the `side preference' among characters is a convenient way to obtain diversity in the crowd.


\subsection{Dynamic Updates} \label{s:ecm:dynamic}

In dynamic environments, large obstacles can appear or disappear during the simulation. 
In previous work \cite{vanToll2012-Dynamic}, we have presented algorithms that update the ECM locally due to the insertion or deletion of a convex polygonal obstacle $P$.
A dynamic insertion takes \BigO{m + \log n} time, where $m$ is the combined complexity of the neighboring obstacles for $P$. 
A dynamic deletion requires \BigO{m \log m + \log n} time. 
\NewContentInline{For more details, we refer interested readers to the corresponding publication \cite{vanToll2012-Dynamic}.}
After a dynamic event, a character can efficiently re-plan a new optimal path in the updated mesh based on its previous path \cite{vanToll2015-Replanning}.


\section{Implementation} \label{s:implementation}

We have implemented the Explicit Corridor Map as part of a crowd simulation framework \cite{vanToll2015-Framework}.
The software was written in C++ in Visual Studio 2013.

To compute the \NewContentInline{medial axis and} ECM, 
we have integrated two different libraries for computing Voronoi diagrams: Vroni \cite{Held2011-Vroni} and a package of Boost \shortcite{Boost}.
Since the Boost Voronoi library requires integer coordinates as input, we multiply all coordinates by $10{,}000$ and round them to the nearest integer.
For convenience, we use these rounded coordinates in Vroni as well.
We use meters as units, so this scaling implies that we represent all coordinates within a precision of $0.1$ millimeters.

In an earlier publication \cite{vanToll2011-MultiLayered}, we used an approximating GPU-based ECM implementation based on the work of Hoff \etal \shortcite{Hoff1999-GVD}, 
However, we will not report the details of this approach because it has proven to be less efficient and less practical than the other implementations.

Both Vroni and Boost assume that the input sites are interior-disjoint line segments. 
In practice, environments are often drawn by hand and may contain overlapping geometry.
Therefore, before computing the ECM, we use another component of Boost to convert obstacles to interior-disjoint segments,
using the scaled integer coordinates described earlier.
After computing the ECM, we remove all graph components that lie inside obstacle polygons. 
These steps will be included in our time measurements.

In some environments, the medial axis may contain edges that run across many layers.
We ensure that each edge can be associated with a single layer, mainly for visualization purposes.
We do this by splitting each edge wherever it intersects one of the (now opened) connections.
Computing these intersections takes extra time, 
and it can increase the worst-case complexity of the graph to \BigO{kn} if many edges intersect many connections, such as in \cref{fig:mlma-worst-case}. 
We consider this post-processing to be optional and not part of the main algorithm.

\cref{s:experiments:ecm} will show that our implementation of the multi-layered ECM construction algorithm is very fast in practice. 
For future work, there are two potential improvements. 
First, we currently open the connections in the order in which they are listed in the environment, which is not necessarily optimal.
Second, we open the connections using our old algorithm \cite{vanToll2011-MultiLayered}, so we cannot yet handle self-overlap near connections such as in \cref{fig:mlma-overlap}.
However, these theoretical issues are not a problem for any of the real-world environments in our test set.

\begin{NewContent}
For simplicity, we have implemented indicative routes as piecewise linear curves. 
Circular or parabolic arcs in an indicative route are approximated by sequences of line segments. 
\end{NewContent}
\section{Experiments} \label{s:experiments}

This section assesses the performance of our ECM implementations in a range of 2D and multi-layered environments.
All experiments were run on a Windows 7 PC with a 3.20 GHz Intel i7-3930K CPU, an NVIDIA GeForce GTX 680 GPU, and 16 GB of RAM. 
Only one CPU core was used, except at the end of \cref{s:experiments:ecm} where we will use multi-threading to improve the performance in MLEs.

\subsection{Environments}

The 2D environments are shown in \cref{fig:exp-environments-2d}; more details can be found in the upper part of \cref{tab:exp-environments}.
\emph{Military} is a simple environment with a small number of obstacles. \emph{City} is a more complex virtual city.
\emph{Zelda} is an environment from a computer game.
\emph{Zelda2x2}, \emph{Zelda4x4}, and \emph{Zelda8x8} are adapted versions of \emph{Zelda} that have been duplicated in a $2 \times 2$, $4 \times 4$, and $8 \times 8$ grid pattern. 
\NewContentInline{We have also used these environments in previous publications \cite{vanToll2012-Dynamic,vanToll2016-ComparativeStudy}.}

\begin{figure*}[ht]
	\centering
	\subfigure[Military	\label{fig:exp-military}]{
		\includegraphics[width=0.23\textwidth]{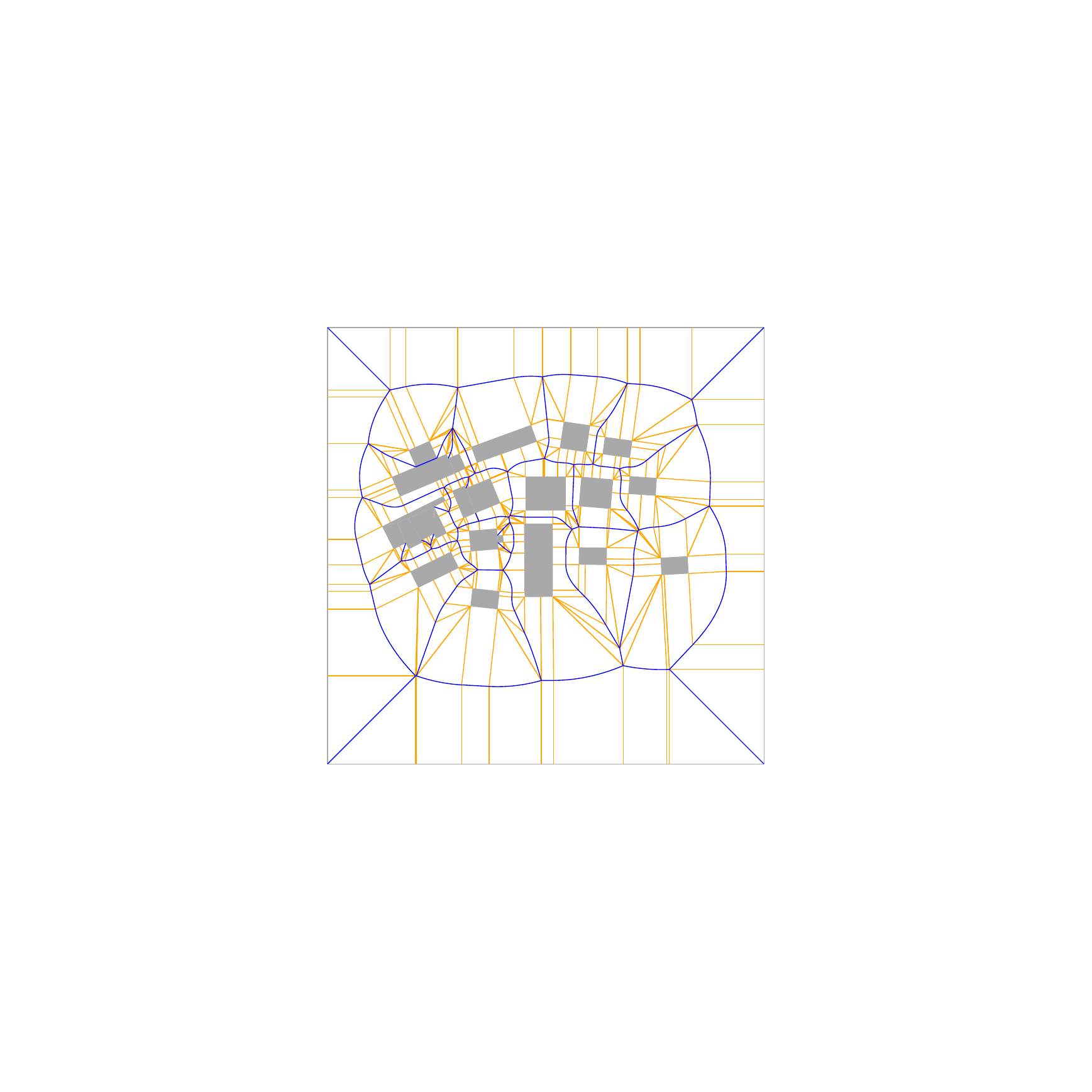}
	}%
	\subfigure[City \label{fig:exp-city}]{
		\includegraphics[width=0.23\textwidth]{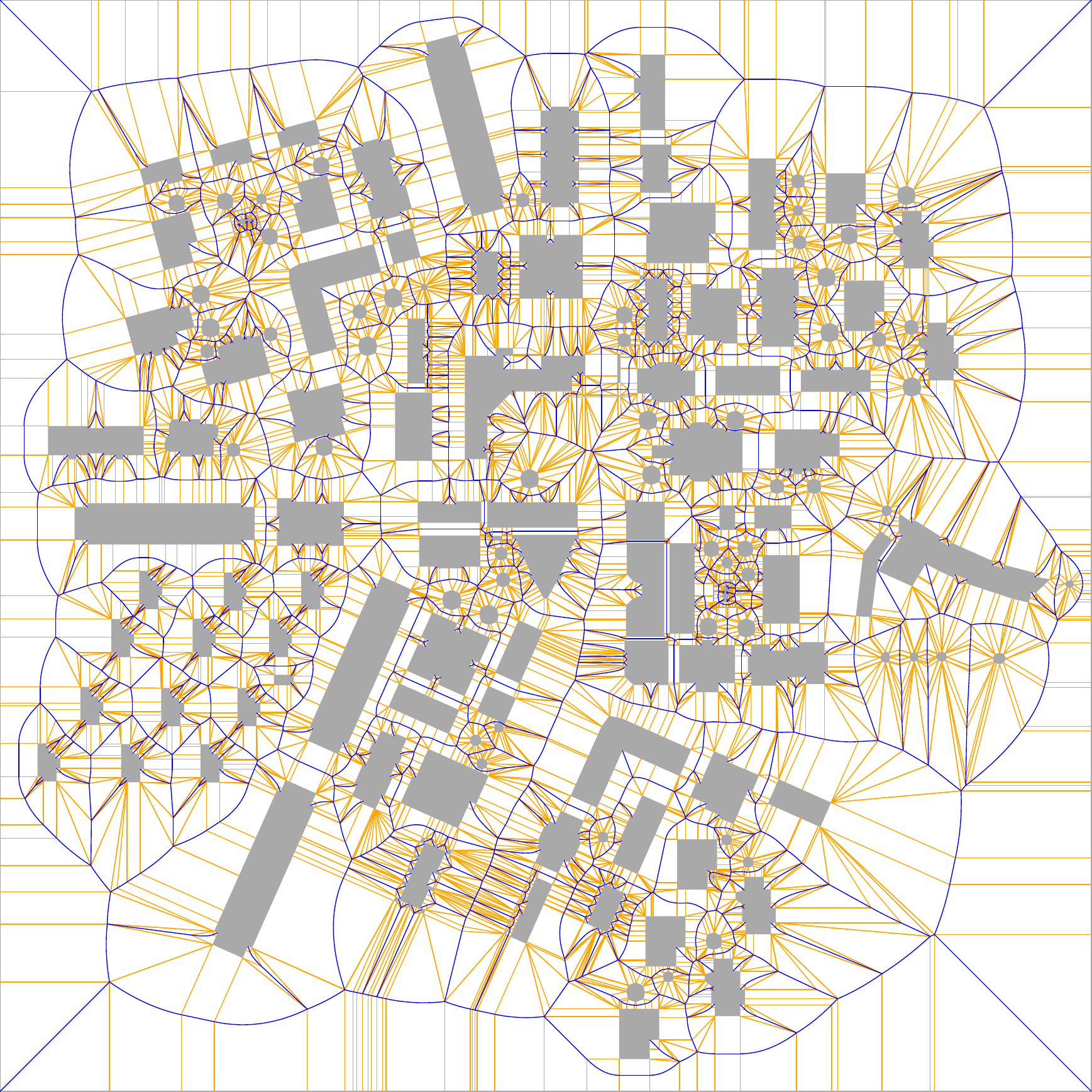}
	}%
	\subfigure[Zelda \label{fig:exp-zelda}]{
		\includegraphics[width=0.23\textwidth]{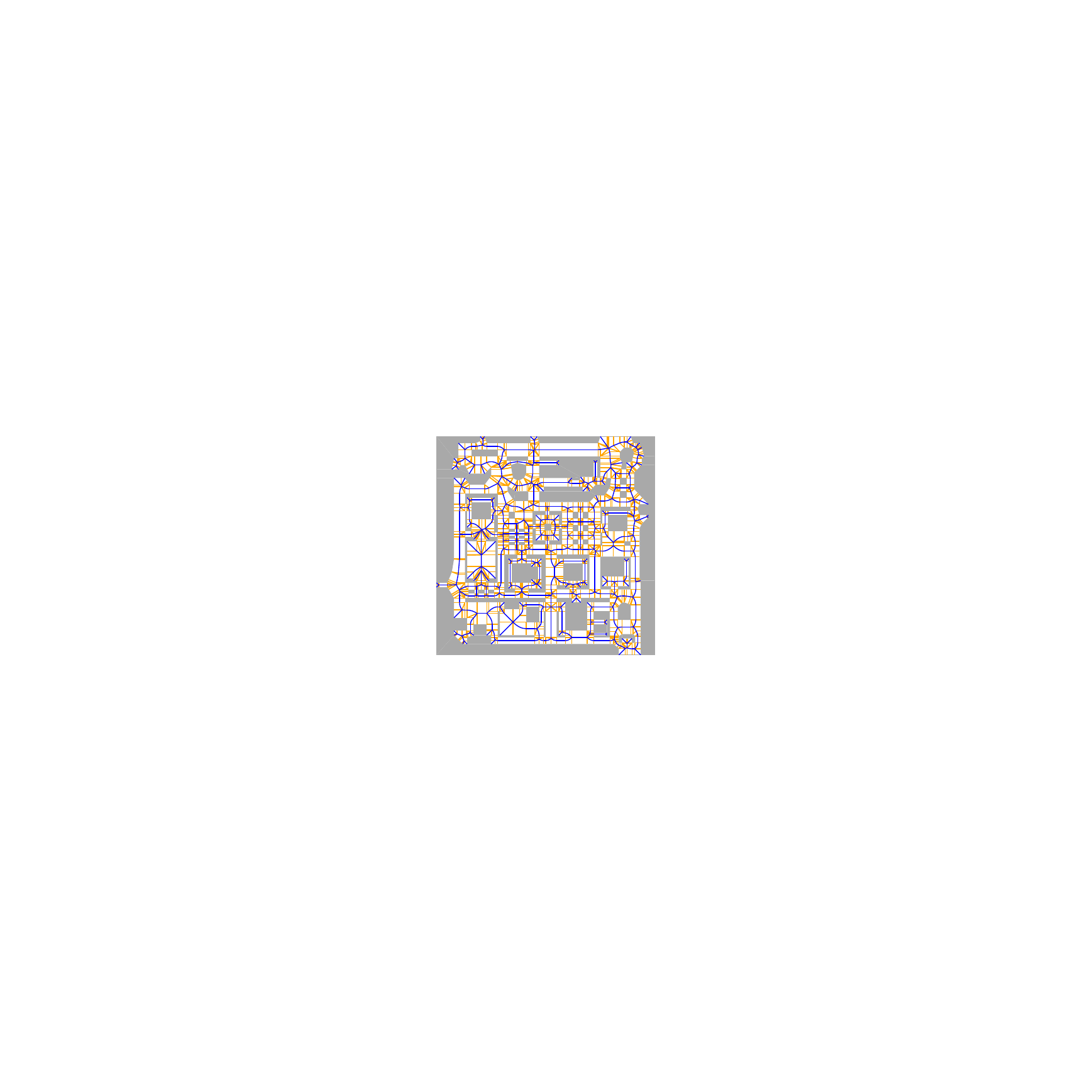}
	}%
	\subfigure[Zelda2x2 \label{fig:exp-zelda2x2}]{
		\includegraphics[width=0.23\textwidth]{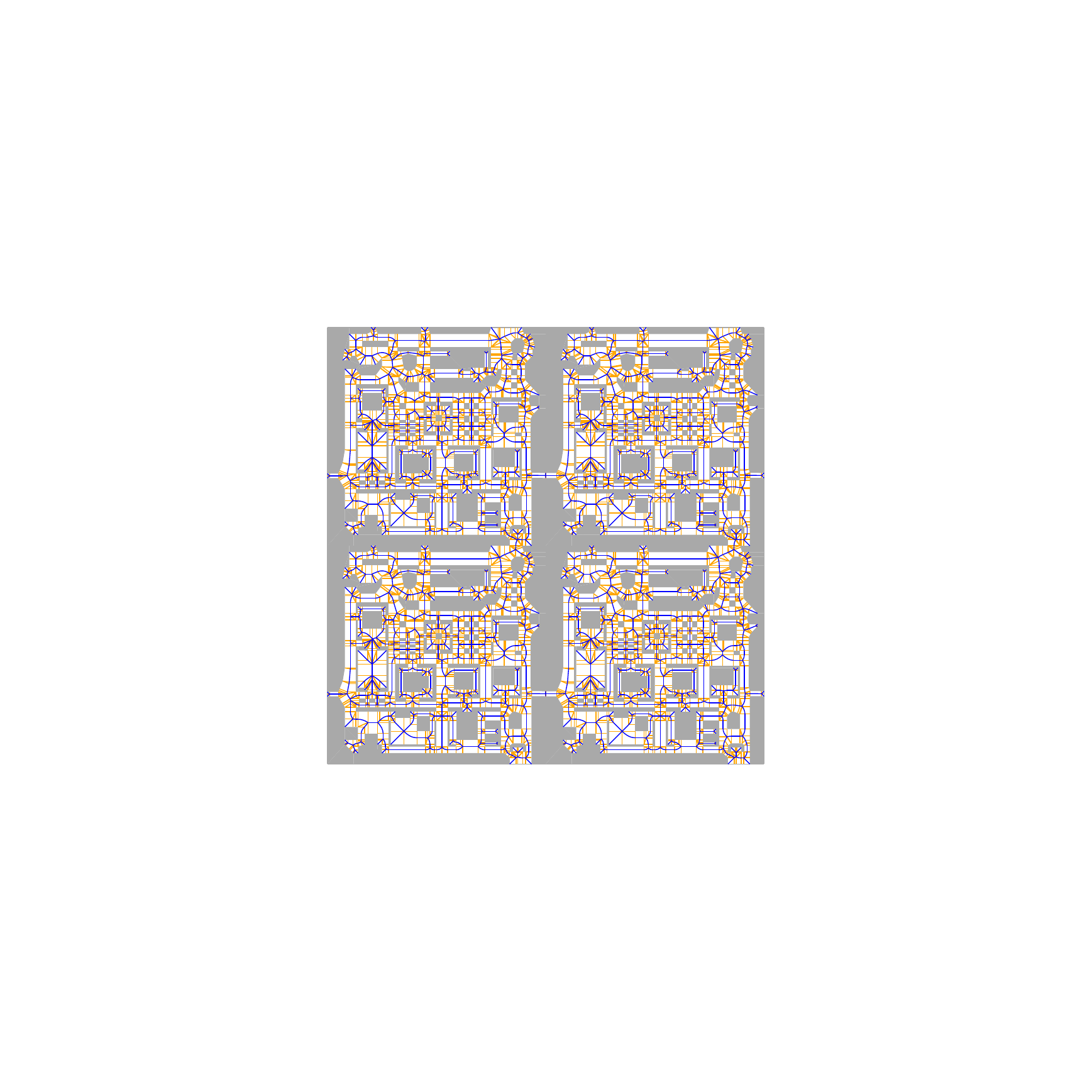}
	}
	\caption{The set of 2D environments and their ECMs.
	\emph{Zelda4x4} and \emph{Zelda8x8} are not shown because they are very large and they are structurally similar to \emph{Zelda2x2}.
	\label{fig:exp-environments-2d}}
\end{figure*}

\noindent
The multi-layered environments are shown in \cref{fig:exp-environments-mle,fig:exp-environments-mle2} and are described in the lower part of \cref{tab:exp-environments}.
\begin{itemize}
	\item \emph{Ramps} consists of three flat layers connected by four ramps. Each ramp is modelled as a separate layer for simplicity.
	\cref{fig:ramps-3d} at the beginning of this paper shows \emph{Ramps} and its ECM in 3D.
	\item \emph{Ramps2} is a version of \emph{Ramps} in which we have added 56 polygonal obstacles to the flat layers.
	\item \emph{Library} is a simplified model of the Utrecht University library.
	\item \emph{Station} is a model of a train station with one main hall and one layer containing all platforms; these two layers are connected by 32 ramps.
	Again, each ramp has been modelled as a separate layer.
	\item \emph{Tower} is a complex multi-storey apartment building.
	\item \emph{Stadium} is a model of an American football stadium with many staircases and obstacles.
	Since it has been drawn manually based on real-world data, it contains small gaps that generate disconnected graph components.
	It also features sequences of nearly-collinear points that generate medial axis edges when the input coordinates have been rounded and scaled.
	These graph elements seem redundant, but they are correct in our scaled integer coordinate system.
	\item \emph{BigCity} is a combination of the 2D city environment, six instances of \emph{Tower}, and two instances of \emph{Library}.
	The towers are highly detailed compared to the rest of the environment. 
	Voxel-based navigation mesh algorithms would require a very high resolution to capture all details.
	\item \emph{BigCity2x2} consists of four tiled instances of \emph{BigCity}. It measures 1 km$^2$ and contains 784 connections.
\end{itemize}

\begin{NewContent}
These environments were chosen to broadly reflect a range of complexities in terms of the number of layers, obstacle vertices, and connections. 
In previous work \cite{vanToll2011-MultiLayered}, we also explored theoretically challenging `toy examples', 
such as copies of an environment with an increasing number of connections.
In this paper, we choose to focus on realistic environments instead.
For more environments, including ones that have been used in research on other navigation meshes, 
we refer the reader to our recent comparative study \cite{vanToll2016-ComparativeStudy}.
\end{NewContent}

\begin{figure*}[p]
	\centering
	\subfigure[Ramps \label{fig:exp-ramps-ground}]{
		\includegraphics[width=0.23\textwidth]{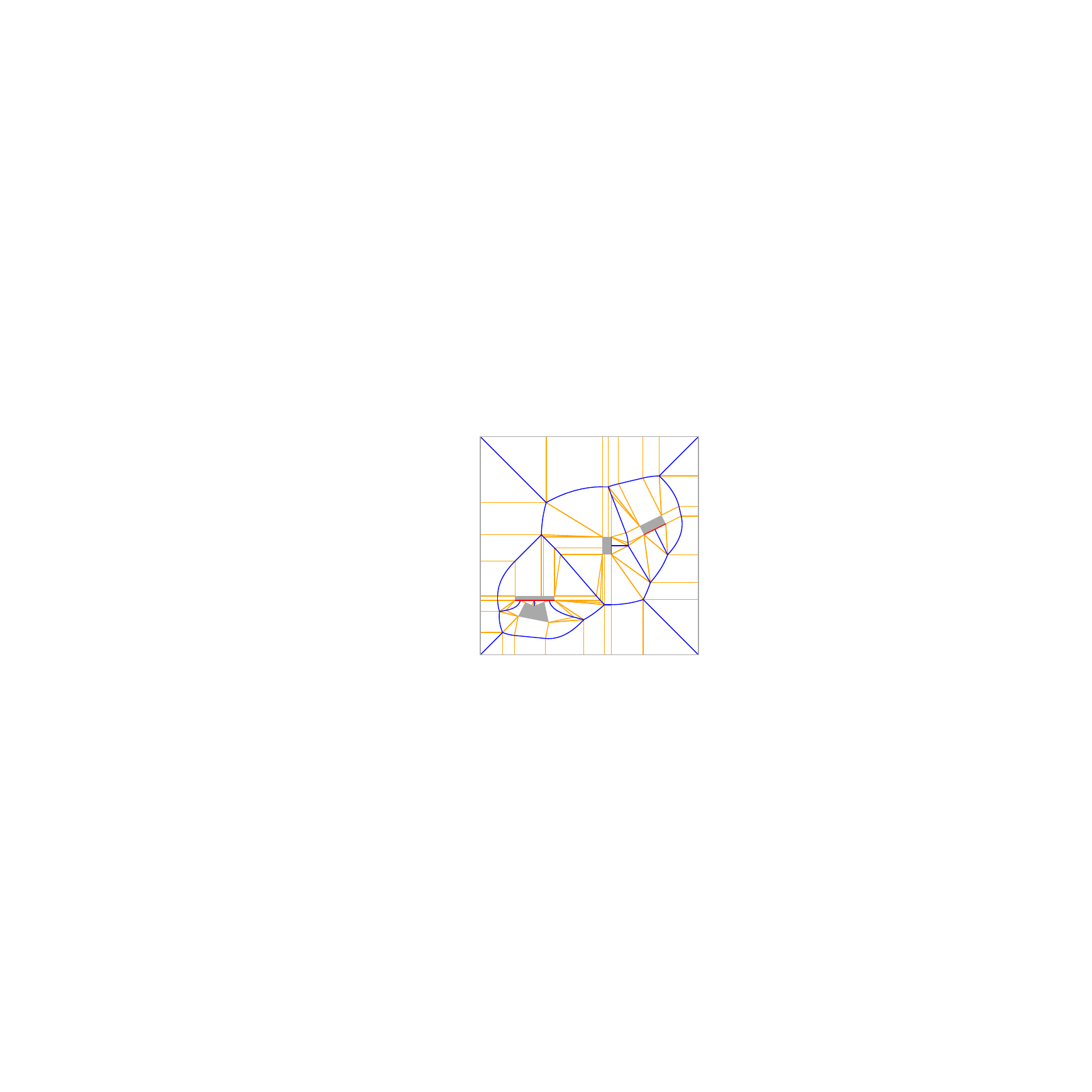}
	}
	\subfigure[Ramps (continued) \label{fig:exp-ramps-other}]{
		\includegraphics[width=0.23\textwidth]{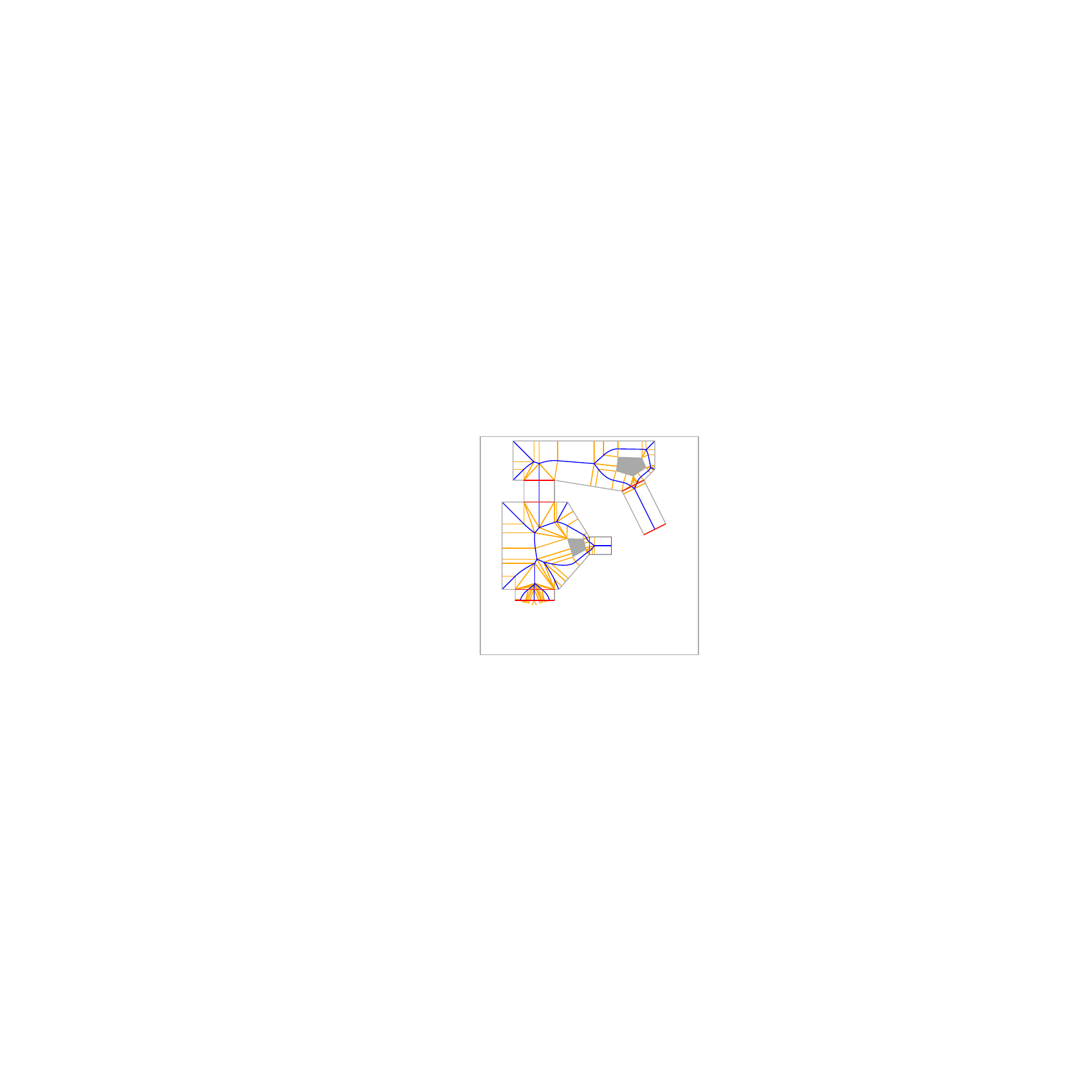}
	}
	\subfigure[Ramps2 \label{fig:exp-ramps2-ground}]{
		\includegraphics[width=0.23\textwidth]{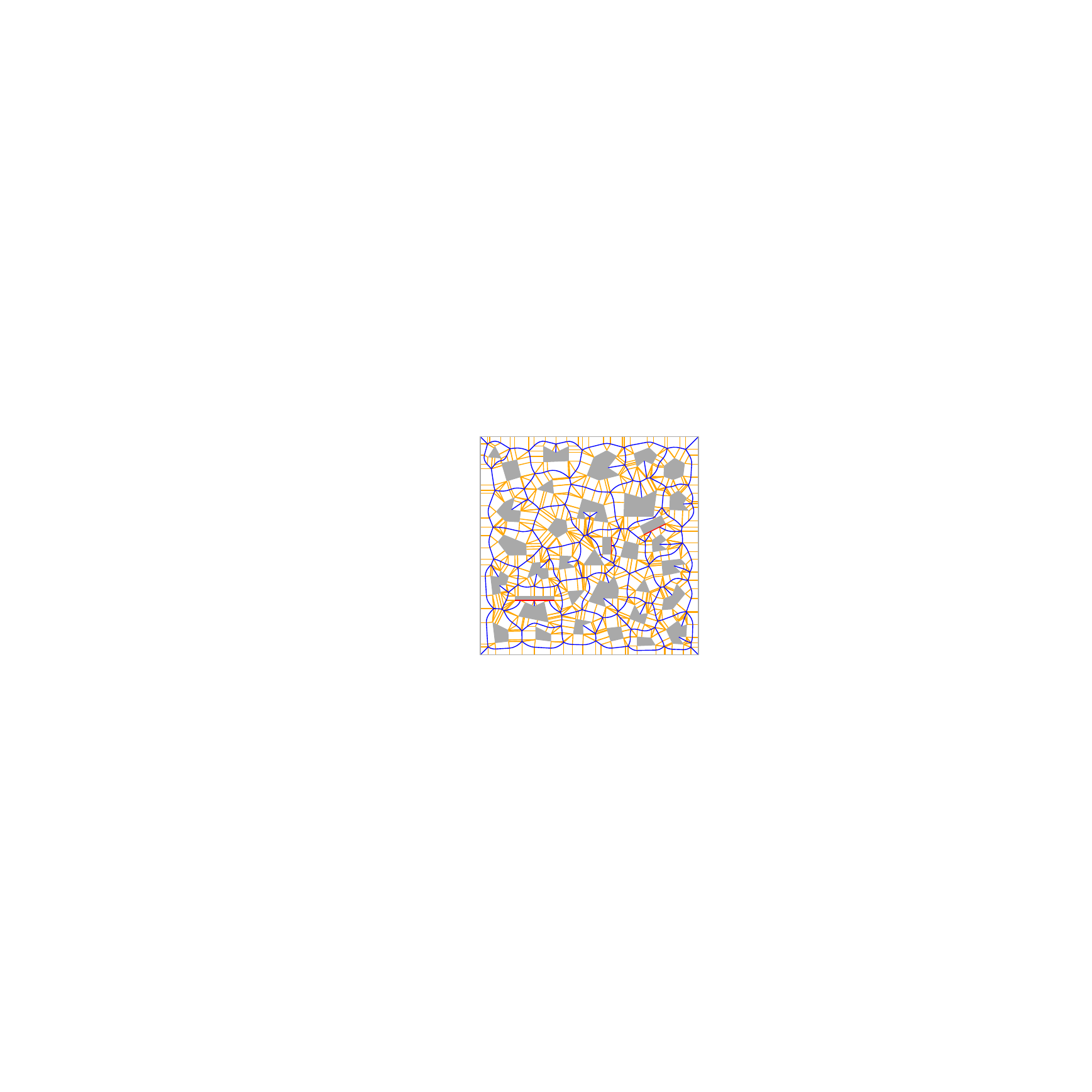}
	}
	\subfigure[Ramps2 (continued) \label{fig:exp-ramps2-other}]{
		\includegraphics[width=0.23\textwidth]{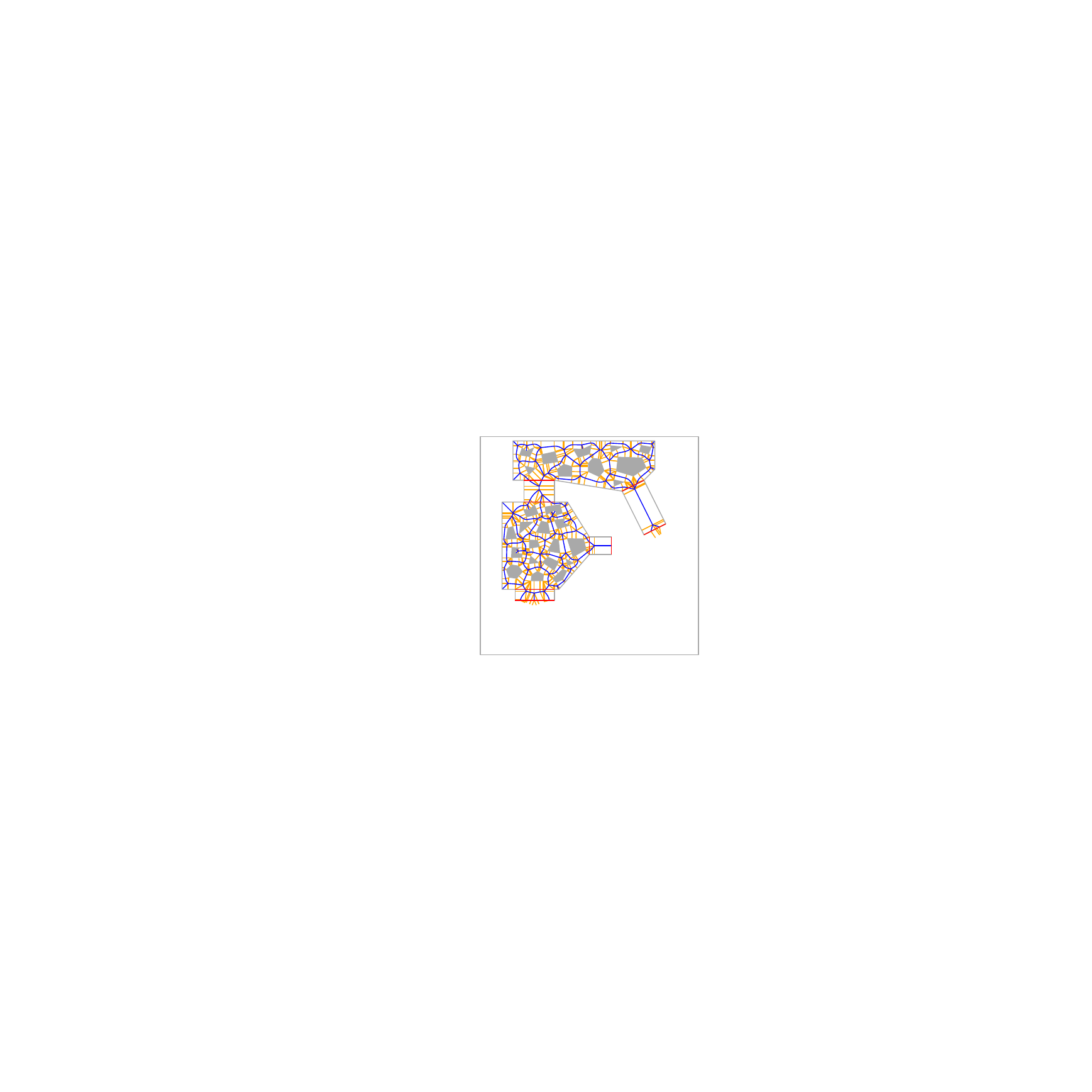}
	}
	\subfigure[Station \label{fig:exp-station-hall}]{
		\includegraphics[width=0.48\textwidth]{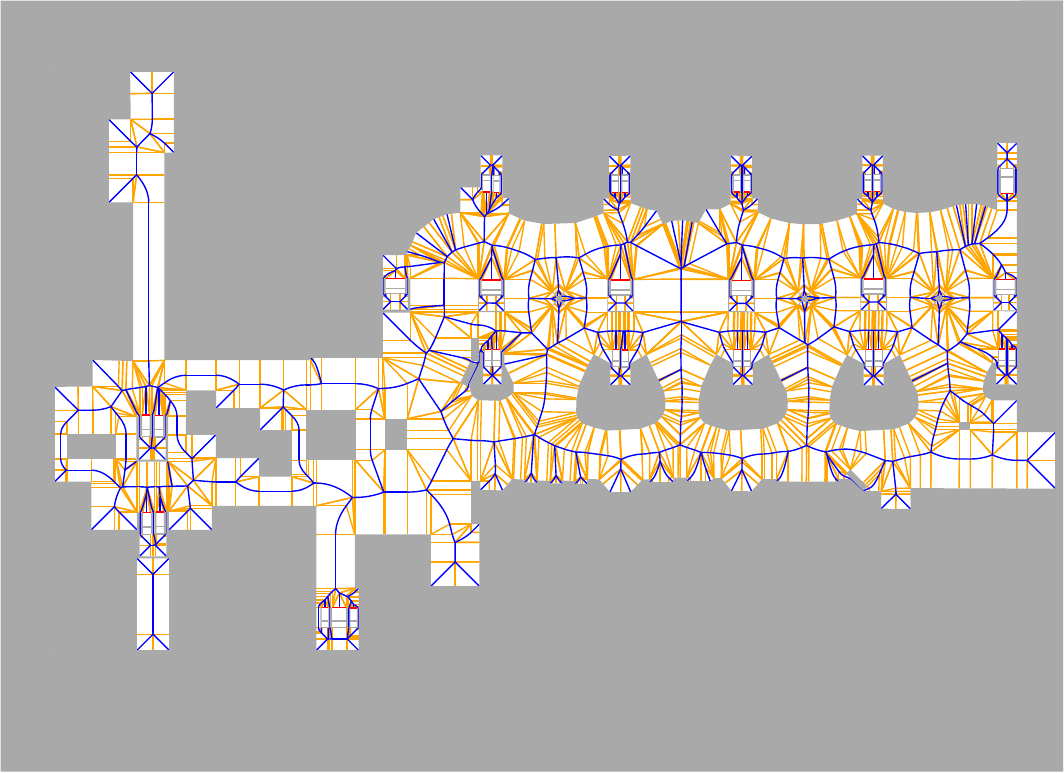}
	}
	\subfigure[Station (continued) \label{fig:exp-station-platforms}]{
		\includegraphics[width=0.48\textwidth]{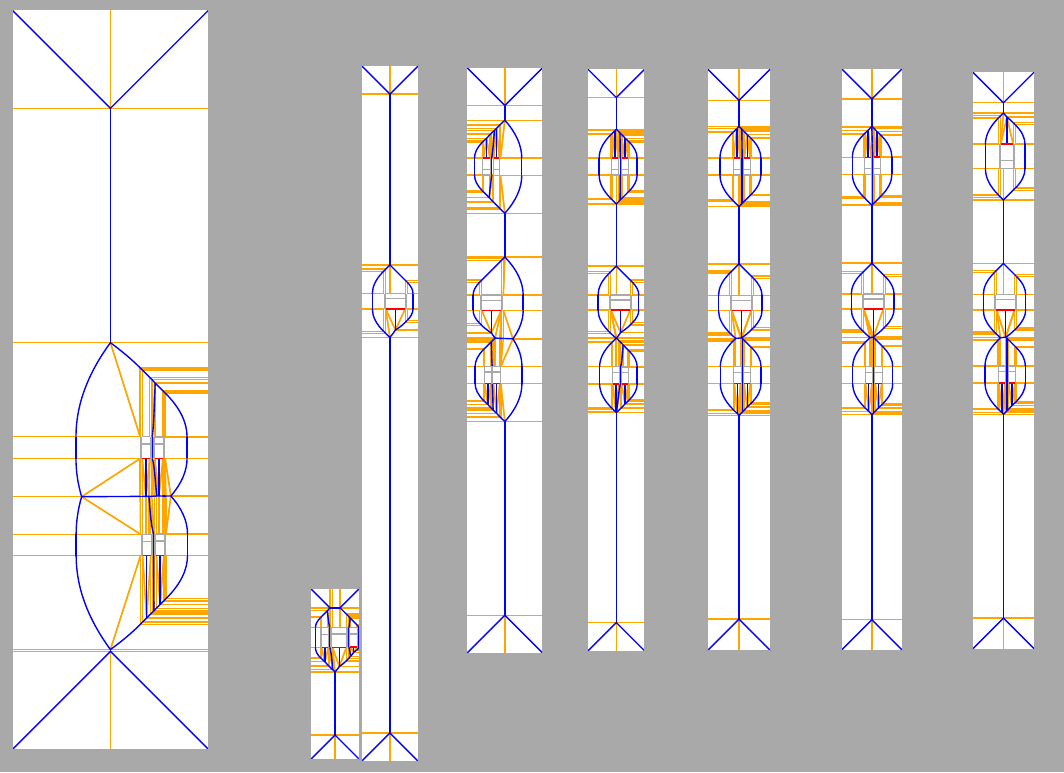}
	}
	\subfigure[Stadium \label{fig:exp-stadium-L0-etc}]{
		\includegraphics[width=0.48\textwidth]{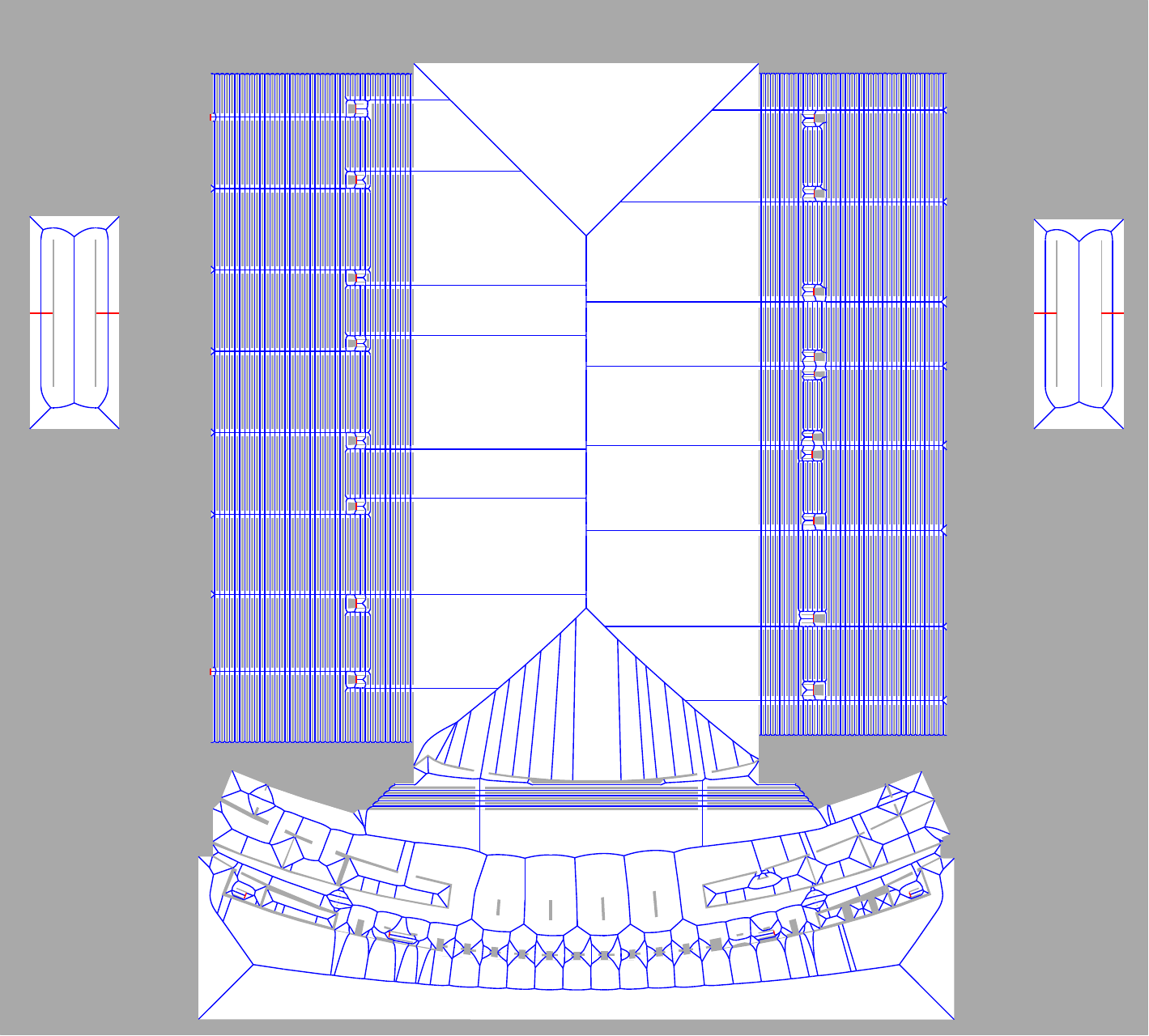}
	}
	\subfigure[Stadium (continued) \label{fig:exp-stadium-L2-etc}]{
		\includegraphics[width=0.48\textwidth]{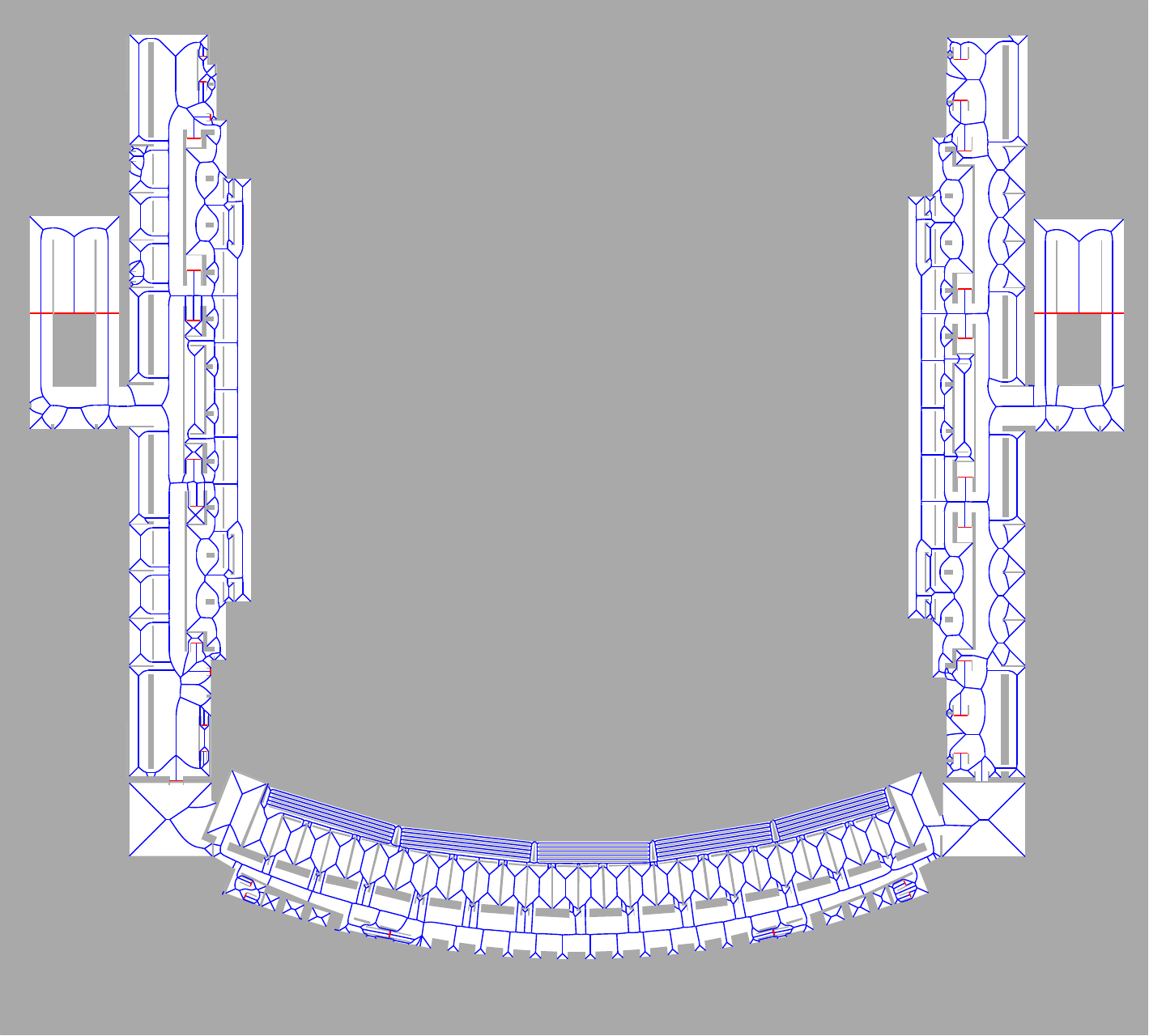}
	}
	\caption{2D views of some of the multi-layered environments used in our experiments.
	For \emph{Station} and \emph{Stadium}, not all layers are shown.
	For \emph{Stadium}, nearest-obstacle annotations have been omitted for clarity.
	Some inaccuracies in the geometry of \emph{Stadium} are too small to see in this image.
	\label{fig:exp-environments-mle}}
\end{figure*}

\begin{figure*}[p]
	\centering
	\subfigure[Library \label{fig:exp-library3d}]{
		\includegraphics[height=38mm]{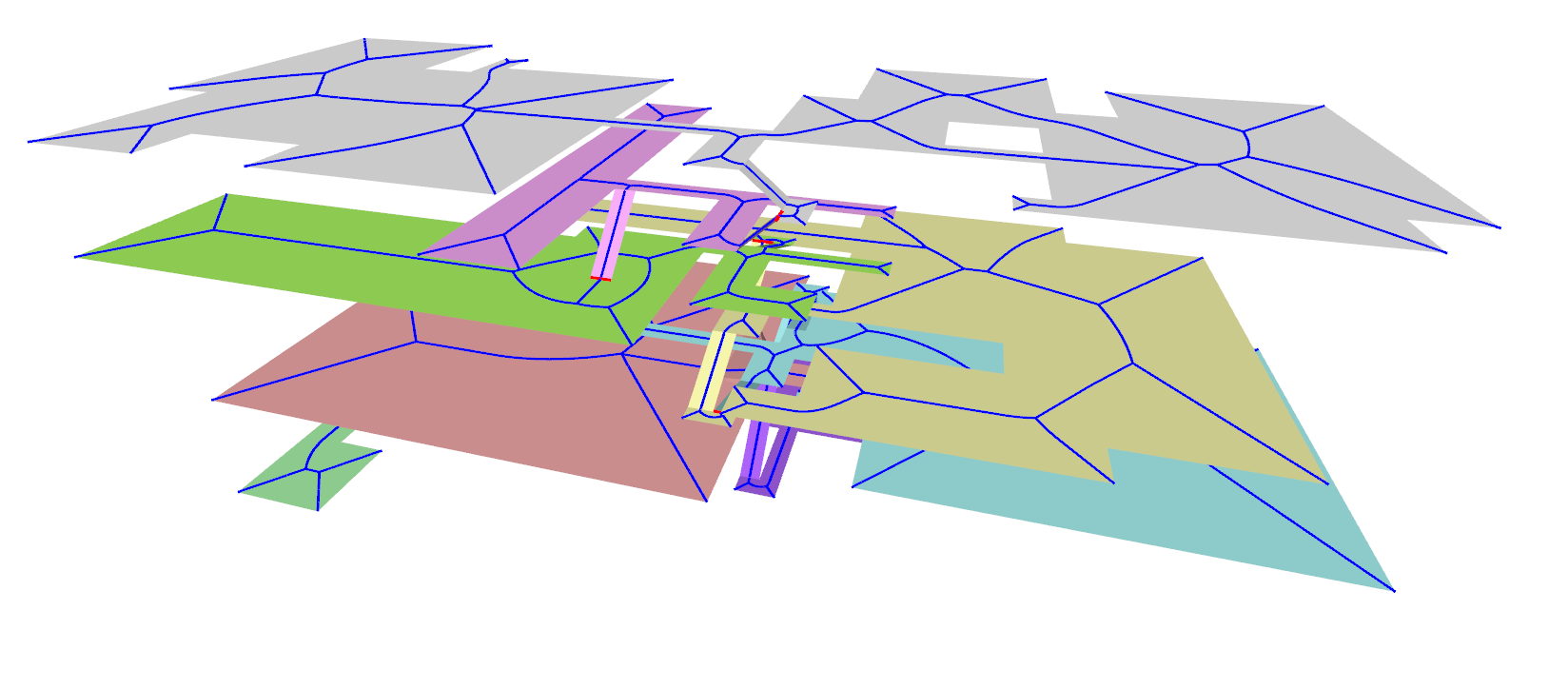}
	}%
	\subfigure[Tower \label{fig:exp-max3d}]{
		\includegraphics[height=44mm]{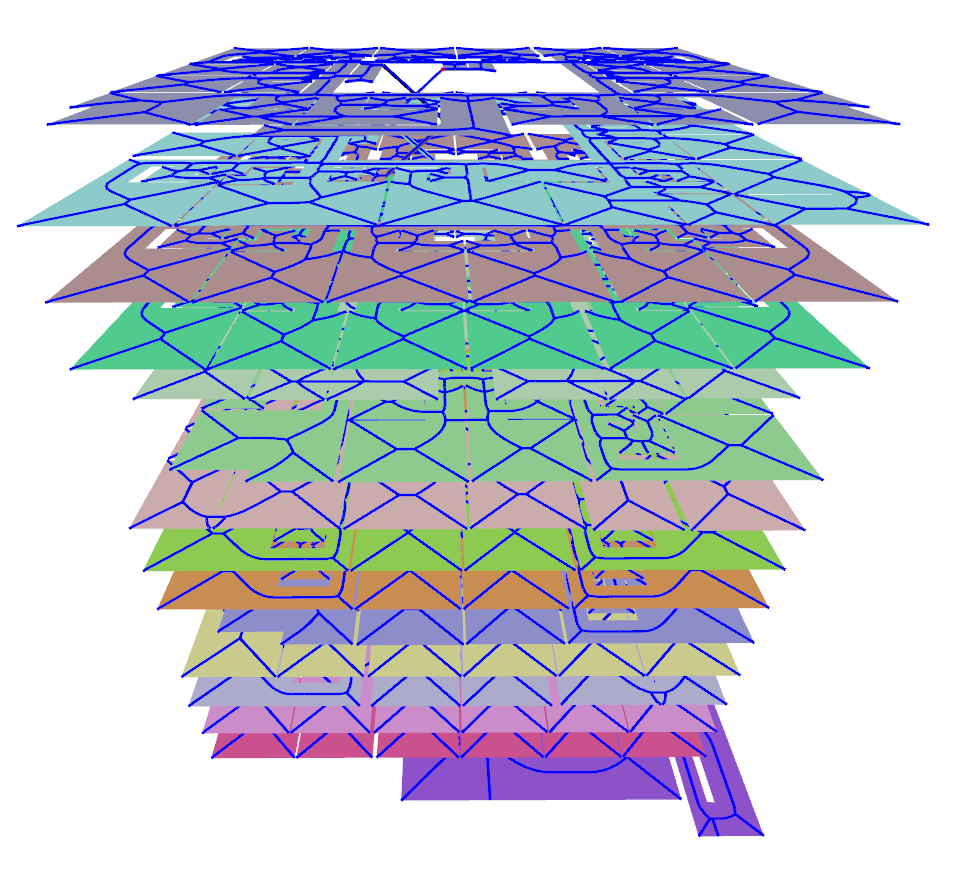}
	}\\
	\subfigure[BigCity \label{fig:exp-bigcity3d}]{
		\includegraphics[width=\textwidth]{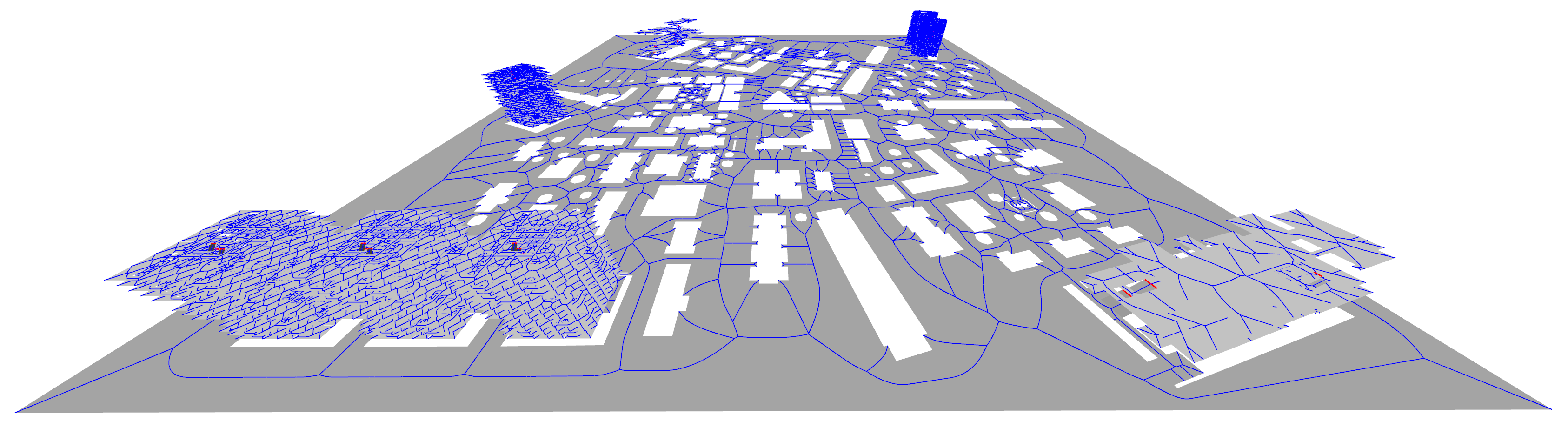}
	}
	\caption{3D views of the \emph{Library}, \emph{Tower}, and \emph{BigCity} environments and their medial axes. 
	For clarity, the nearest-obstacle annotations of the ECM have been omitted.
	We have now colored the free space rather than the obstacle space.
	The gray obstacles from the 2D \emph{City} environment are now modelled as holes in the free space of \emph{BigCity}.
	\emph{BigCity2x2} is not shown because it is very large and structurally similar to \emph{BigCity}.
	\label{fig:exp-environments-mle2}}
\end{figure*}

\begin{table*}[p]
\centering
\small{
\begin{tabular}{l|rr|rr}
\FL 
\textbf{Environment} & \multicolumn{2}{c|}{\textbf{Geometry}} & \multicolumn{2}{c}{\textbf{Multi-layered}} \NN
& \#Obstacle vertices & Size (m) & \#Layers & \#Connections
\ML
\textbf{Military}	& 108   & $200 \times 200$ & 1  & 0  \NN
\textbf{City}		& 2102  & $500 \times 500$ & 1  & 0  \NN
\textbf{Zelda}		& 564   & $100 \times 100$ & 1  & 0  \NN
\textbf{Zelda2x2}	& 2304  & $200 \times 200$ & 1  & 0  \NN
\textbf{Zelda4x4}	& 9180  & $400 \times 400$ & 1  & 0  \NN
\textbf{Zelda8x8}	& 36684 & $800 \times 800$ & 1  & 0 
\ML
\textbf{Ramps}		& 147   & $100 \times 100$ & 7  & 8  \NN
\textbf{Ramps2}		& 422   & $100 \times 100$ & 7  & 8  \NN
\textbf{Library}	& 717   & $60  \times 24$  & 9  & 8  \NN
\textbf{Station}	& 2242  & $153 \times 111$ & 34 & 64 \NN
\textbf{Tower}		& 6058  & $35  \times 35$  & 17 & 30 \NN
\textbf{Stadium}	& 12915 & $280 \times 184$ & 18 & 82 \NN
\textbf{BigCity}	& 49476 & $500 \times 500$ & 113 & 196 \NN
\textbf{BigCity2x2}	& 197884 & $1000 \times 1000$ & 449 & 784 \NN
\FL 
\end{tabular}
}
\caption{Details of the environments used in our experiments.
The \emph{Geometry} columns show the number of obstacle vertices and the physical width and height of the environment (in meters). 
The \emph{Multi-layered} columns show the number of layers and connections of each environment.
}
\label{tab:exp-environments}
\end{table*}

\subsection{Computing the ECM} \label{s:experiments:ecm}

\NewContentInline{For the \emph{2D environments}, the ECM complexities and construction times are shown in the upper part of \cref{tab:exp-ecms}.}
Vroni and Boost handle degenerate cases such as degree-4 vertices differently, which leads to slightly different ECM complexities for both implementations.
The complexities in \cref{tab:exp-ecms} were taken from the Boost version.

The Vroni-based implementation was faster than the Boost-based implementation in all environments.
For the most complex 2D environment, \emph{Zelda8x8}, computing the ECM took just under 1 second when using Vroni.
Hence, even complex ECMs can be computed quickly. 
This allows the navigation mesh to be generated interactively (e.g.\ when loading a game level, or when used in a tool for designing environments).
\bigskip

\noindent
\NewContentInline{For the \emph{multi-layered environments}, the ECM complexities and construction times are shown in the lower part of \cref{tab:exp-ecms}.}
While it can be seen that a multi-layered ECM takes more time to compute than a 2D ECM of the same complexity, 
the construction time is still well under a second for all environments except the two \emph{BigCity} variants.
For the largest environment, \emph{BigCity2x2}, the construction takes about 9 seconds when using Vroni.

The Boost implementation for Voronoi diagrams is thread-safe, so we can use \emph{multi-threading} to compute the initial ECMs of all layers in parallel.
The running times for this multi-threaded Boost version are also shown in \cref{tab:exp-ecms}. 
We used OpenMP with 5 parallel threads and dynamic scheduling. 
This version performed particularly well in environments with many complex layers; 
in particular, the ECM of \emph{BigCity2x2} was computed in approximately $4.2$ seconds.
Standard deviations among running times were higher because the threads were scheduled in an unpredictable way.
Still, this implementation shows that multi-threading is a promising addition.

\begin{table*}[h]
\centering
\small{
\begin{tabular}{l|rrr|rrr}
\FL 
\textbf{Environment} & \multicolumn{3}{c|}{\textbf{ECM complexity}} & \multicolumn{3}{c}{\textbf{ECM time (ms)}} \NN
& \#Vertices & \#Edges & \#Bending pts & Vroni & Boost & Boost (MT) 
\ML
\textbf{Military}	& 56    & 71    & 288   & 3.5  [0.1]    & 6.7 [0.1]     & -- \NN
\textbf{City}		& 1442  & 1621  & 6306  & 70.9 [0.3]    & 130.0 [0.7]   & -- \NN
\textbf{Zelda}		& 296   & 351   & 1258  & 14.9 [0.1]    & 23.0 [0.2]    & -- \NN
\textbf{Zelda2x2}	& 1184  & 1408  & 5082  & 59.2 [0.5]    & 92.0 [0.6]    & -- \NN
\textbf{Zelda4x4}	& 4720  & 5624  & 20329 & 235.4 [2.4]   & 367.7 [1.6]   & -- \NN
\textbf{Zelda8x8}	& 18848 & 22480 & 81365 & 997.4 [5.4]   & 1529.1 [3.8]  & --
\ML
\textbf{Ramps}		& 54    & 61    & 181   & 5.8 [0.2]     & 9.4 [0.3]     & 7.6 [0.1] \NN
\textbf{Ramps2}		& 228   & 290   & 1118  & 16.3 [0.4]    & 29.5 [0.6]    & 21.2 [0.1] \NN
\textbf{Library}	& 219   & 222   & 599   & 14.7 [0.3]    & 20.2 [0.4]    & 9.2 [0.1] \NN
\textbf{Station}	& 660   & 768   & 2804  & 68.6 [0.3]    & 97.3 [0.5]    & 74.3 [0.3] \NN
\textbf{Tower}		& 4948  & 4979  & 14407 & 248.8 [2.2]   & 383.4 [1.3]   & 110.1 [3.4] \NN
\textbf{Stadium}	& 6303  & 7754  & 26323 & 442.4 [8.2]   & 572.2 [1.6]   & 263.5 [4.7] \NN
\textbf{BigCity}	& 32264 & 32652 & 104002 & 2168.6 [19.6] & 3430.0 [10.9]  & 925.7 [29.6] \NN
\textbf{BigCity2x2}	& 129147 & 130702 & 416411 & 8972.5 [32.7] & 14287.0 [41.7] & 4219.3 [91.9] \NN
\FL 
\end{tabular}
}
\caption{Details of the ECMs for our experiments.
The \emph{ECM complexity} columns show the number of vertices, edges, and bending points in the ECM computed using Boost.
The \emph{ECM time} columns show the ECM construction time for the three implementations: Vroni, Boost, and Boost with 5 parallel threads (for multi-layered environments). 
All times are in milliseconds and have been averaged over 10 runs. Standard deviations are shown between square brackets.
}
\label{tab:exp-ecms}
\end{table*}


\subsection{Path Planning} \label{s:experiments:pathplanning}

In each environment, we have computed indicative routes between $10{,}000$ pairs of random start and goal points.
For each point, we first chose a random layer (if applicable) such that the probability of a layer being chosen was proportional to its surface area.
The point itself was then chosen by uniformly sampling in the layer's bounding box until an obstacle-free point was found.

For each query pair $(s,g)$, we computed the shortest path between \Retraction{s} and \Retraction{g} on the medial axis using A* search, 
with the 2D Euclidean distance to \Retraction{g} as a heuristic.
We then converted this path to a short indicative route with a preferred distance of $0.5$ m to obstacles.

\cref{tab:exp-planning} show the average running times of the complete path planning query per environment.
The running time depends heavily on the complexity of the resulting path; this explains the high standard deviations.
It can be seen that queries require only a few milliseconds on average in the most complex environments.
Thus, the ECM allows real-time path planning for large crowds of characters with individual goals.
In very complex environments, grid-based planning would be much slower because a high grid resolution would be required to capture all details.

\begin{table*}[h]
\centering
\small{
\begin{tabular}{l|r}
\FL 
\textbf{Environment} & \textbf{Planning time (ms)}
\ML
\textbf{Military}	& 0.21 [0.15] \NN
\textbf{City}		& 1.12 [0.70] \NN
\textbf{Zelda}		& 0.41 [0.21] \NN
\textbf{Zelda2x2}	& 0.96 [0.47] \NN
\textbf{Zelda4x4}	& 1.99 [1.01] \NN
\textbf{Zelda8x8}	& 4.65 [2.75]
\ML
\textbf{Ramps}		& 0.13 [0.08] \NN
\textbf{Ramps2}		& 0.37 [0.18] \NN
\textbf{Library}	& 0.53 [0.32] \NN
\textbf{Station}	& 0.79 [0.48] \NN
\textbf{Tower}		& 1.58 [0.68] \NN
\textbf{Stadium}	& 2.22 [1.43] \NN
\textbf{BigCity}	& 3.03 [2.55] \NN
\textbf{BigCity2x2}	& 8.44 [7.59] \NN
\FL 
\end{tabular}
}
\caption{Results of the path planning experiments.
The \emph{Planning time} column shows the running time to compute a path, averaged over $10{,}000$ random queries.
All times are in milliseconds. Standard deviations are shown between square brackets.
}
\label{tab:exp-planning}
\end{table*}

\noindent
We refer the reader to previous work \cite{vanToll2015-Framework} for experiments on crowd simulation.
This previous publication has shown that the running time of a simulation \emph{without} collision avoidance scales linearly with the number of characters.
Collision avoidance is inherently the most expensive step because characters need to find their neighbors and respond to their movement.
By using 4 CPU cores and 8 parallel threads, we can currently simulate around $15{,}000$ characters in real-time with collision avoidance.
The ECM framework has a small memory footprint that allows simulations of at least 1 million characters.
It has been successfully used to simulate crowded real-life events 
such as the 2015 \emph{Tour de France} opening in Utrecht (The Netherlands) and evacuations of future metro stations in Amsterdam.

\section{Conclusions and Future Work} \label{s:conclusions}

In robotics, simulations, and gaming applications, virtual walking characters often need to compute and traverse paths through an environment.
A navigation mesh is a subdivision of the \NewContentInline{free space} into polygons that allows real-time path planning for crowds of characters.

\NewContentInline{With this application area in mind, we have studied the medial axis for 3D environments with a consistent direction of gravity.
A walkable environment (WE) describes where characters can walk.}
A multi-layered environment (MLE) is a WE that has been subdivided into 2D layers connected by $k$ \NewContentInline{connections with certain geometric properties.} 
We have defined \NewContentInline{the medial axis} for WEs and MLEs based on projected distances on the ground plane.
We have presented an algorithm that computes this medial axis by initially treating all connections as closed obstacles and then opening them incrementally.
Compared to previous work \cite{vanToll2011-MultiLayered}, we have presented a new and correct algorithm for opening a connection, 
and we have improved the overall running time to \BigO{n \log n \log k} by opening the connections in an efficient order. 

\NewContentInline{We have presented an improved definition of the Explicit Corridor Map (ECM), which is a} navigation mesh based on the medial axis.
The ECM enables path planning for disk-shaped characters of any radius.
It supports efficient geometric operations such as retractions, nearest-obstacle queries, dynamic updates, and the computation of short paths with preferred clearance to obstacles.

Our implementation computes the ECM efficiently for large 2D and multi-layered environments. 
The ECM supports important applications and operations, such as path planning and dynamic updates. 
It is a useful basis for simulating large crowds of heterogeneous characters in real-time \cite{vanToll2015-Framework}.
\bigskip

\noindent 
We have given an \BigO{m \log m}-time algorithm for opening a connection bounded by $m$ medial axis arcs. 
This algorithm is not necessarily optimal. 
For 2D Voronoi diagrams, a line segment site can be removed in \BigO{m} time \cite{Khramtcova2015-LinearDeletion}. 
While this linear-time algorithm cannot immediately be applied to our multi-layered domain, 
it suggests that a \BigO{m}-time solution for opening a connection might exist.
Finding such an algorithm is a challenge for future work. 
It would allow us to improve the overall construction time for the multi-layered ECM to an optimal \BigO{n \log n}.

A drawback of navigation meshes in general is that the shortest path in the dual graph (or, in our case, the medial axis) 
is not necessarily homotopic to the shortest path in the entire environment. 
Therefore, even a shortened path as described in \cref{s:ecm:indicativeroute} can be longer than the overall shortest path.
For future work, we want to analyze path lengths in the ECM and investigate how to improve them. 
One option is to combine the ECM with a visibility graph, which yields shortest paths but has a size of \BigO{n^2} \cite{Wein2007-VisibilityVoronoi,Ghosh2007-Visibility}. 

\begin{NewContent}
While it is common to use projected distances for navigation meshes, 
we would like to investigate how navigation meshes can be extended to encode information about \emph{height differences} as well. 
As explained in \cref{s:relatedwork:navmeshes-3d}, one could simply use the WE itself as a data structure for path planning, 
but this approach lacks some of the advantages of the ECM (such as the ability to compute paths for disks of any radius).

Currently, state-of-the art 3D navigation meshes convert a raw 3D environment to a WE by using voxel-based filtering algorithms.
Our comparative study of navigation meshes \cite{vanToll2016-ComparativeStudy} has shown that this approach is not ideal: 
it requires parameter tuning, it is not scalable to large environments, and it sometimes sacrifices precision. 
We are therefore interested in developing \emph{exact} algorithms for obtaining walkable and multi-layered environments from arbitrary 3D geometry.
Note that an exact algorithm will recognize very small details in the environment that may not be relevant for the application, especially when using imprecise real-world data.

If a walkable environment is given as a set of triangles, converting it to an MLE with a minimum number of connections is NP-hard in the number of triangles \cite{Hillebrand2016-MLE}. 
However, there may be other criteria by which an MLE can be considered `optimal', such as the total length of all connections combined. 
Finding an optimal MLE (or a constant-factor approximation of it) based on such criteria is an interesting topic for future work.
\end{NewContent}

We are also interested in lifting other 2D data structures \NewContentInline{to the multi-layered domain, 
including visibility graphs and related concepts \cite{Pocchiola1993-VisibilityComplex,Wein2007-VisibilityVoronoi}}.
We believe that multi-layered environments give rise to an interesting new class of problems for future research.


\section*{Acknowledgments}
We thank Elena Khramtcova for fruitful discussions on deletions in Voronoi diagrams,
Arne Hillebrand for helping us construct the 3D environments, 
and Incontrol Simulation Solutions for supplying the \emph{Stadium} environment.

\bibliographystyle{plain}
\bibliography{../bibliography-acm}

\end{document}